\newlength{\capht}
\begin{document}

\title{Markov Chains Approximate Message Passing}
\author{Amit Rajaraman\thanks{MIT. \texttt{amit\_r@mit.edu}. Supported by a MathWorks Fellowship.} \and
    David X. Wu\thanks{UC Berkeley. \texttt{david\_wu@berkeley.edu}. Supported by NSF Graduate Research Fellowship DGE-2146752.}}
\date{\today}
\maketitle
\begin{abstract} 
    Markov chain Monte Carlo algorithms have long been observed to obtain near-optimal performance in various Bayesian inference settings.
    However, developing a supporting theory that makes these studies rigorous has proved challenging.
    
    In this paper, we study the classical spiked Wigner inference problem, where one aims to recover a planted Boolean spike from a noisy matrix measurement. 
    We relate the recovery performance of Glauber dynamics on the annealed posterior to the performance of Approximate Message Passing (AMP), which is known to achieve Bayes-optimal performance.
    Our main results rely on the analysis of an auxiliary Markov chain called \emph{restricted Gaussian dynamics} (RGD). 
    Concretely, we establish the following results:
    \begin{enumerate}
        \item RGD can be reduced to an effective one-dimensional recursion which mirrors the evolution of the AMP iterates.
        \item From a warm start, RGD rapidly converges to a fixed point in correlation space, which recovers Bayes-optimal performance when run on the posterior.
        \item Conditioned on widely believed mixing results for the SK model, we recover the phase transition for non-trivial inference.
    \end{enumerate}
\end{abstract}

\thispagestyle{empty}
\setcounter{page}{0}
\newpage
\tableofcontents
\thispagestyle{empty}
\setcounter{page}{0}
\newpage


\section{Introduction}
Markov chains are fundamental objects of study which model the dynamics of complex systems arising in statistical physics, computational biology, and social networks.
Besides modeling physical systems, they are a highly successful algorithmic tool to sample from high-dimensional distributions in Bayesian inference, statistics, and theoretical computer science.

It has long been empirically observed that canonical Markov chains such as \emph{Glauber dynamics} which attempt to sample from the posterior distribution are quite successful for Bayesian inference. For example, \cite{DKMZ11} observed that for community detection in the stochastic block model, the output of the Glauber dynamics is Bayes optimal.
However, there has been a dearth of rigorous results confirming or explaining the algorithmic success of Markov chains in statistical inference. 
One fundamental reason for the lack of rigorous progress is that, for many posterior distributions of interest, Glauber dynamics provably suffers from slow worst-case mixing. This leads to the following natural question.

\begin{question}\label{question:glauber-opt}
    If we run Glauber dynamics for polynomial time on the posterior distribution of a natural Bayesian inference problem, does it achieve Bayes-optimal performance?
\end{question}

To make progress on this, we revisit early empirical studies which compared the performance of Glauber dynamics to that of local \emph{message passing} algorithms.
Our starting point is \cite{DKMZ11}, which found that the performance of Glauber dynamics matches that of the Belief Propagation (BP) algorithm which directly estimates the marginals of the posterior in the stochastic block model.

Since then, message passing algorithms such as BP and its dense variant Approximate Message Passing (AMP) \cite{DMM09,BM11} have been widely applied in a variety of high-dimensional statistics problems, to great success (for a few examples, see \cite{DAM16,MV21,EAMS22,HMP24} and references therein). 
A particularly appealing property of AMP is that its asymptotic performance can be rigorously characterized by a finite-dimensional recursion known as \emph{state evolution}.

For certain inference problems, one can use state evolution to design AMP algorithms to construct Bayes-optimal estimators. 
Similarly, we \emph{a priori} know that Glauber achieves the same guarantees in exponential time, since by then the dynamics would be able to produce Gibbs samples.
Motivated by this, we refine our initial question to ask about connecting the performance of these two classes of algorithms if Glauber is only run for polynomial time.
\begin{question}\label{question:glauber-amp}
    If we run Glauber dynamics for polynomial time, can we prove it performs as well as AMP?
\end{question}

Our main results make progress on \Cref{question:glauber-amp} by proving a formal connection between Markov chains and AMP for the spiked Wigner inference problem, which we define below.
We show that the dynamics of the correlation with the spike---the observable of primary interest---are governed by fixed point relations which also characterize the asymptotic performance of AMP for the same problem. See \Cref{fig:main-results} for a schematic diagram of our main results.

\begin{figure}[!ht]
    \centering
    
\begin{tikzpicture}[
  box/.style={draw, rounded corners, minimum width=3.6cm, minimum height=1.2cm, align=center},
  >={Stealth[length=3mm]},
  node distance=1.8cm and 4.8cm
]

\node[box] (glauber) {Glauber};
\node[box, below=of glauber] (rgd) {RGD};
\node[box, right=of glauber] (amp) {AMP/BP};
\node[box, below=of amp] (se) {State Evolution};

\draw[<->, thick] (glauber) -- node[midway, left, align=center]{Partial answer \\ in \cite{LMRRW24}} (rgd);
\draw[<->, thick] (amp) -- node[midway, right] {e.g. \cite{BM11}} (se);

\draw[dashed, <->, thick] (glauber) -- node[midway, above, align=center] {Predicted by \cite{DKMZ11}} (amp); 
\draw[<->, thick] (rgd) -- node[midway, above] {\textbf{This work}} (se);               

\end{tikzpicture}
    \caption{An illustration of our results.}
    \label{fig:main-results}
\end{figure}
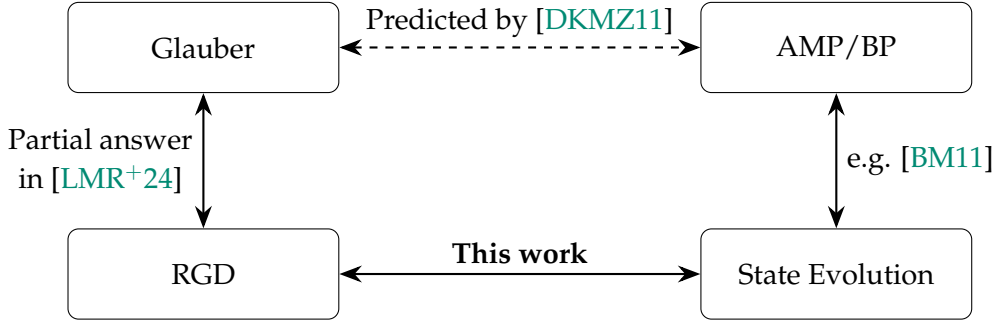

The precision of our results requires model-specific calculations that we did not attempt to generalize beyond the spiked Wigner setting.
However, our high-level proof strategy is fairly generic and we believe it should be applicable to general spiked matrix inference problems.

In the spiked Wigner model, one aims to recover an unknown \emph{signal} $\bx$ drawn from the prior $\mathrm{Unif}(\{\pm 1\})^{\otimes N}$ given a noisy matrix measurement 
\begin{equation}
   \label{eq:def-spiked-wigner}
   \bM = \bW + \frac{\lambda}{N} \bx \bx^\top \mcom
\end{equation}
for some signal-to-noise ratio (or SNR) $\lambda \ge 0$, and a \emph{noise} matrix $\bW \sim \GOE(N)$ with $\bW_{ij} = \bW_{ji} \sim \calN\left( 0 , \frac{1}{N} \right)$ for $i \neq j$ and $\bW_{ii} \sim \calN(0, \tfrac{2}{N})$.

The spiked Wigner model has been extensively studied over the last two decades, providing a rich test bed for average-case complexity. Perhaps the first interesting phenomenon uncovered by this line of work is the famous BBP phase transition \cite{HR04,BBAP05}, which marks the information-theoretic and computational phase transition at $\lambda = 1$. Indeed, it is known that for $\lambda < 1$, $\bM$ is mutually contiguous with $\bW$, whereas for $\lambda > 1$ the top eigenvector of $\bM$ is nontrivially correlated with $\bx$.

A straightforward calculation shows that the posterior distribution in the spiked Wigner model is of the form
\begin{equation}
   \label{eq:posterior-spiked-wigner}
   \mu_{\beta \bM}(\sigma) \propto \exp\left( \frac{\beta}{2} \sigma^\top \bM \sigma \right) = \exp\left( \frac{\beta}{2} \sigma^\top \bW \sigma + \frac{\beta\lambda}{2N} \langle \sigma , \bx\rangle^2 \right)
\end{equation}
for $\beta = \lambda$. 

\paragraph{Beyond worst-case mixing.}
\cite{LMRRW24} introduced the framework of locally stationary distributions to analyze the performance of Markov chains before they have mixed. 
Roughly speaking, a locally stationary distribution is any distribution which is approximately invariant under another step of the Markov chain. 
They are also very easy to sample from---any
reversible Markov chain such as Glauber dynamics produces a locally stationary distribution in $\poly(N)$ steps.

For the spiked Wigner problem, they show that for sufficiently large (constant) $\lambda$ and $\beta < \frac{1}{4}$, the Glauber dynamics run on $\mu_{\beta\bM}$ for a large polynomial time from an arbitrary initialization reaches a vector that is nontrivially correlated with the spike. 
To prove this result, they combined the properties of locally stationary distributions with an analysis of a different Markov chain known as \emph{Restricted Gaussian Dynamics} (RGD), which we define below.\footnote{The precise definition is not terribly important, so readers seeking intuition can skip the definition without much concern.}

\begin{definition}[{Restricted Gaussian dynamics \cite{STL20,LST21,CE22}}]
   \label{def:rgd}
    For $\bM = \bW + \frac{\lambda}{N} \bx\bx^{\top}$, \emph{restricted Gaussian dynamics} ($\RGD$) at inverse temperature $\beta$ is a Markov chain on $\{\pm1\}^N$ where a transition from $\sigma$ to $\bsigma'$ is given by the following:
    \begin{itemize}
        \item Sample $\bg\sim\calN(0,1)$ and define $\bz \coloneqq \beta\lambda \frac{1}{N}\langle \sigma,\bx\rangle + \sqrt{\frac{\beta\lambda}{N}} \bg$.
        \item Sample $\bsigma'$ from the distribution $\mu_{\beta \bW, \bz \bx}$.
    \end{itemize}
\end{definition}
It is straightforward to show that RGD has $\mu_{\beta \bM}$ as its unique stationary distribution. 
However, the same bottleneck that precludes fast mixing for Glauber also prevents RGD from mixing quickly.
In spite of this, we achieve a precise and nearly distributional understanding of the dynamics of RGD, which allows us to deduce several interesting properties of its dynamics. 
Although RGD is not an implementable algorithm, the tools developed by \cite{LMRRW24} allow us to transfer this understanding over to Glauber dynamics in certain temperature regimes.

Before we state our main results, we take a brief excursion to give some standard spin glass definitions.
\begin{definition}
   Let $\bW \sim \GOE(N)$. The Sherrington--Kirkpatrick model (SK model) at inverse temperature $\beta > 0$ and external field $v \in \R^N$ is the distribution over $\{\pm 1\}^N$ with density
   \[ \mu_{\beta \bW, v}(\sigma) \propto \exp\left( \frac{\beta}{2} \cdot \sigma^\top \bW \sigma + \langle v,\sigma\rangle \right) \mper \]
\end{definition}
Our results for the performance of Glauber on the spiked Wigner posterior rely on the rapid mixing of Glauber dynamics for the SK model.  
To simplify the discussion, we will state our results whenever the SK model satisfies the following mixing condition based on modified log-Sobolev inequalities.

\begin{condition}[\textnormal{\textsf{MLSI}}]
   \label{cond:sk-beta-one}
   We say that $\beta < 1$ satisfies \condref{cond:sk-beta-one} if with probability $1-o(1)$ over the random matrix $\bW$, for all $h \in \R$, $\mu_{\beta \bW , h \bone}$ satisfies a modified log-Sobolev inequality with constant $\Omega(1/N)$.
\end{condition}

A recent breakthrough line of work \cite{EKZ22,CE22,AKV24} has made significant progress towards establishing rapid mixing of Glauber dynamics in the entire high-temperature regime for the SK model. In particular, they establish that \condref{cond:sk-beta-one} holds for $\beta < 0.295$; it is conjectured that \condref{cond:sk-beta-one} holds for the entire high-temperature regime $\beta < 1$ \cite{MPV87}. We are now poised to state our main results. 

\subsection{Main results} 

The following theorem establishes that the short-term behavior of the correlation of Glauber iterates is an approximate fixed point of an AMP-like fixed point equation in the entire high SNR regime $\lambda > 1$. 
Furthermore, it establishes a sharp phase transition at $\beta = \frac{1}{\lambda}$ for the success of Glauber dynamics at weak recovery of the spike.

\begin{theorem}[Informal, see \Cref{thm:glauber-formal}]
   \label{thm:glauber-spiked-wig}
   Let $\bsigma_0 \in \{ \pm 1 \}^N$ be arbitrary, and let $(\bsigma_t)_{t \ge 0}$ be the trajectory of the Glauber dynamics run with stationary distribution $\mu_{\beta\bM}$ initialized at $\bsigma_0$. Let $T \ge \wt{\Omega}(N^{4})$, and $\wh{\bsigma} = \bsigma_{\bt}$ for a uniformly random $\bt \sim [0,T]$. 
   
    Then, for any $\eps > 0$ and any $\beta \in (0, 1)$ such that $\beta \neq \frac{1}{\lambda}$ and \condref{cond:sk-beta-one} holds, with high probability over the noise $\bW$, we have
   \[ \Pr \left[ \left| \frac{1}{N} \left| \langle \wh{\bsigma} , \bx \rangle  \right| - \OPT_{\beta,\lambda} \right| > \frac{1}{N^{1/2 - \eps}} \right] = o_N(1) \mcom \]
   for some constant $\OPT_{\beta,\lambda}$ that is the largest fixed point of an explicit AMP-like recursion. Here, $\OPT_{\beta,\lambda} > 0$ if $\beta > \frac{1}{\lambda}$ and $\OPT_{\beta,\lambda} = 0$ if $\beta < \frac{1}{\lambda}$. 
   
   Furthermore, $\OPT_{\beta, \lambda}$ characterizes the correlation of a Gibbs sample:   
   \[ \Pr_{\bsigma \sim \mu_{\beta\bM}} \left[ \left| \frac{1}{N} | \langle \bsigma,\bx\rangle | - \OPT_{\beta,\lambda} \right| > \frac{1}{N^{1/2 - \eps}} \right] \le e^{-cN^{\eps}} \mper \]
\end{theorem}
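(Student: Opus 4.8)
The plan is to prove the theorem in three layers. First, relate the time-averaged law $\wh\nu$ of $\wh\bsigma$ (obtained by running Glauber for $T \ge \wt{\Omega}(N^{4})$ steps) to the restricted Gaussian dynamics $\RGD$, using the locally stationary distribution framework of \cite{LMRRW24}: running any reversible chain for a large polynomial time produces a time-averaged law whose average entropy production is $o(1)$, hence $\wh\nu$ is locally stationary for Glauber on $\mu_{\beta\bM}$; combined with \condref{cond:sk-beta-one}, which bounds the mixing time of the inner Sherrington--Kirkpatrick (SK) chain $\bsigma' \sim \mu_{\beta\bW,\bz\bx}$ appearing in \Cref{def:rgd}, one upgrades the qualitative transfer of \cite{LMRRW24} into a quantitative statement: at the level of the correlation statistic $M \coloneqq \frac1N\langle \cdot\,, \bx\rangle$, the distribution $\wh\nu$ is approximately invariant under one step of $\RGD$.

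Second, reduce $\RGD$ to a one-dimensional recursion in $M_t \coloneqq \frac1N\langle \bsigma_t , \bx\rangle$. Since $\bW \sim \GOE(N)$ is invariant in law under conjugation by any diagonal $\pm1$ matrix, the pushforward of $\mu_{\beta\bW,z\bx}$ under $\sigma \mapsto \frac1N\langle \bx,\sigma\rangle$ has the same law as the magnetization $\frac1N\sum_i \tau_i$ of the SK model $\mu_{\beta\bW,z\bone}$ with uniform external field $z$; under \condref{cond:sk-beta-one}, and since $\beta<1$ places this model in the replica-symmetric regime, this magnetization concentrates around a deterministic, explicit function $\bar m_\beta(z)$ with fluctuations of order $N^{-1/2}$, and the extra $\sqrt{\beta\lambda/N}\,\bg$ term in $\bz$ contributes another $O(N^{-1/2})$. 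Hence $M_{t+1} = F_{\beta,\lambda}(M_t) + O(N^{-1/2+\eps})$ with $F_{\beta,\lambda}(m) \coloneqq \bar m_\beta(\beta\lambda m)$, and one checks that the fixed-point equation $m = F_{\beta,\lambda}(m)$ is exactly the AMP state-evolution fixed-point equation for spiked Wigner at temperature $\beta$; its largest solution is $\OPT_{\beta,\lambda}$.

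Third, analyze the recursion and read off the phase transition. Because the static (uniform) susceptibility of the high-temperature SK model equals $1$, one has $F_{\beta,\lambda}'(0) = \beta\lambda$, so $m = 0$ is attracting when $\beta\lambda<1$ and repelling when $\beta\lambda>1$; for $\beta < 1$ the map $F_{\beta,\lambda}$ is sufficiently tanh-like that it has no positive fixed point in the first case and a unique, stable positive fixed point $\OPT_{\beta,\lambda}$ in the second. If $\beta\lambda<1$, the linearized recursion $M_{t+1} \approx \beta\lambda M_t + \Theta(N^{-1/2})$ keeps $|M_t| \le N^{-1/2+\eps}$ for all $\poly(N)$ times, so $\OPT_{\beta,\lambda}=0$. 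If $\beta\lambda>1$, the same linearization shows $|M_t|$ escapes the $N^{-1/2+\eps}$ band within $O(\log N)$ sweeps, after which the deterministic recursion $M_{t+1}\approx F_{\beta,\lambda}(M_t)$ drives $|M_t|$ to $\OPT_{\beta,\lambda}$ geometrically; since $T \ge \wt{\Omega}(N^{4})$ is far larger than the escape time, a uniformly random time-slice $\wh\bsigma$ satisfies $\bigl|\frac1N|\langle\wh\bsigma,\bx\rangle| - \OPT_{\beta,\lambda}\bigr| \le N^{-1/2+\eps}$ with probability $1-o(1)$. For the ``furthermore'' statement about Gibbs samples, decompose $\mu_{\beta\bM}$ by conditioning on the value $m = \frac1N\langle\sigma,\bx\rangle$: the marginal density of $m$ is proportional to $\exp\bigl(N(\frac{\beta\lambda}{2}m^2 + \phi_\beta(m) + o(1))\bigr)$, where $\phi_\beta$ is the rate function for the SK partition function on the slice $\{\langle\sigma,\bx\rangle = Nm\}$ (concave, being a Legendre transform, and well-behaved since $\beta<1$); its stationarity condition is again $m = F_{\beta,\lambda}(m)$, its global maximizers are $\pm\OPT_{\beta,\lambda}$, and standard large-deviation bounds for $\phi_\beta$ and the slice free energy upgrade the concentration to the rate $e^{-cN^{\eps}}$.

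The hard part will be the first layer: making the Glauber-to-$\RGD$ transfer precise at the sharp $N^{-1/2+\eps}$ scale. The framework of \cite{LMRRW24} yields only a ``nontrivial correlation'' conclusion, and closing the gap to the exact value $\OPT_{\beta,\lambda}$ requires (a) using \condref{cond:sk-beta-one} to guarantee the inner SK chain equilibrates fast enough that one $\RGD$ step is faithfully realized within a single locally-stationary epoch, and (b) arguing that a locally stationary $\wh\nu$ cannot retain non-negligible mass near the repelling fixed point $0$ when $\beta\lambda>1$. Point (b) is delicate because a single Glauber flip moves $M$ by only $O(1/N)$, so local stationarity per se does not forbid concentration at $0$; one must additionally exploit that the correlation genuinely performs a biased random walk with $\Theta(N^{-1/2})$ fluctuations accumulated over $\Theta(N)$ steps, so that the trajectory---and hence its time-average---is pushed away from $0$ and toward $\OPT_{\beta,\lambda}$. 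A secondary technical point, needed to iterate the one-dimensional recursion over $\poly(N)$ steps without error accumulation, is to establish SK magnetization concentration at the $N^{-1/2+\eps}$ scale uniformly over the external field $z$.
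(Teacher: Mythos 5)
Your three-layer architecture matches the paper's: transfer local stationarity from Glauber to $\RGD$ via \Cref{lem:lsd-decay} and \Cref{lem:lsd-transfer} under \condref{cond:sk-beta-one}, reduce $\RGD$ to the one-dimensional recursion $z\mapsto f_{\beta,\lambda}(z)$ via magnetization concentration of the inner SK chain, and analyze the fixed-point structure of $f_{\beta,\lambda}$ (including $f_{\beta,\lambda}'(0)=\beta\lambda$, yielding the $\beta=1/\lambda$ transition). That much is faithful to the actual proof, and the reduction in layer two is exactly the content of \Cref{th:rgd-fixed-temp} and \Cref{thm:glauber-formal}.

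However, your resolution of ``point~(b)'' --- escaping the unstable fixed point at $0$ --- is not what the paper does, and as phrased it would be the wrong place to work. You propose to show that the Glauber correlation ``genuinely performs a biased random walk with $\Theta(N^{-1/2})$ fluctuations accumulated over $\Theta(N)$ steps,'' i.e.\ to prove escape by analyzing the Glauber trajectory directly at the $O(1/N)$-per-flip scale. The paper avoids this entirely: once local stationarity has been transferred to $\RGD$ with parameter $O(N^3/T)$, \Cref{lem:bounded-function-stability} lets one compare expectations under $\nu$ and under $\PRGD^{O(\log N)}\nu$ at cost $\wt{O}(\sqrt{N^3/T})$, and then the escape is proved \emph{at the $\RGD$ level} in $O(\log N)$ steps via \Cref{lem:stoch-proc-escape-unstable} (each $\RGD$ step carries its own $\Theta(N^{-1/2})$ Gaussian which anticoncentrates, and the map $f_{\beta,\lambda}$ amplifies by a factor $\beta\lambda>1$ per step). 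This is why the time budget is $N^{4-2\eps}$ in \Cref{thm:glauber-formal}: it is what is needed to make $\sqrt{N^3/T}\cdot\log N \le N^{-1/2+\eps}$, not to accumulate Glauber-level drift. Proving the escape at the Glauber level would require controlling covariances of the instantaneous magnetization drift along a slow-mixing trajectory, which is a substantially harder problem that the $\RGD$ abstraction is designed to bypass.

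Second, what you call a ``secondary technical point'' --- uniform-in-$h$ magnetization concentration at the $N^{-1/2+\eps}$ scale --- is in fact the paper's main new technical contribution (\Cref{lem:all-field-mag-conc}, \Cref{lem:weak-field-magnetization}). Existing estimates in the style of Hanen give $\E[(R(\bsigma,\bone)-q_1)^{2k}]=O(N^{-k})$, but when the field $h$ is of order $N^{-1/2}$ one has $q_1=\Theta(h)$, so this error is of the \emph{same order as the signal}; it cannot certify that the recursion's drift dominates the noise near the unstable fixed point. The paper closes this gap with a new second-moment computation via the Hubbard--Stratonovich transform and the Laplace method (\Cref{lem:mag-second-moment}), yielding the sharper $\left(1+O(N^{-\eps})\right)$-multiplicative form in \Cref{lem:all-field-mag-conc}(i) that the escape lemma actually feeds on. Without this, your recursion step $M_{t+1}=F_{\beta,\lambda}(M_t)+O(N^{-1/2+\eps})$ does not control the multiplicative behavior at $M_t=\Theta(N^{-1/2})$, which is precisely the regime where the escape must begin.
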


\begin{remark}
    \label{rem:lsd-speedup-n3}
    If one only cared about obtaining a correlation that is within $o_N(1)$ of $\OPT_{\beta,\lambda}$, the formal version of the above result will show that it suffices to set $T = \wt{\Omega}(N^3)$.
\end{remark}

The key technical innovation in our work is a much more fine-grained understanding of the RGD Markov chain, which is essential for proving the above theorem. As mentioned earlier, the connection between RGD and the Glauber dynamics is present in prior work \cite{LMRRW24}. Although the form of their connection breaks when $\beta > 1$, our understanding of RGD extends to this regime. We believe that a connection between RGD and the Glauber dynamics should also exist in this low-temperature regime, but do not attempt to answer this question in the present work.

Towards illustrating our fine-grained understanding of RGD, we present the following theorem, which provides a direct comparison between RGD and AMP. Indeed, this connection is precisely what yields the fixed point equations alluded to in \Cref{thm:glauber-spiked-wig}.

\begin{theorem}[Informal, see \Cref{subsec:rgd-amp-simulation}]
   \label{thm:informal-rgd-amp}
    Let $\lambda \ge \beta > 1$ and $\eps > 0$ independent of $N$. There exists some threshold $\tau \in (0,1)$ (depending on $\beta,\lambda$) such that the following is true. 
    
    Let $x \in \{\pm 1\}^N$ be an arbitrary spike, and fix $\sigma \in \{\pm 1\}^N$ such that $\frac{1}{N} \left|\langle \sigma , x\rangle\right| > \tau$. Let $\bsigma^{\RGD}$  be obtained by taking one step of RGD from $\sigma$, and $\bsigma^{\AMP}$ be obtained by taking one step of an AMP-like update from $\sigma$.
    Then, with high probability over the noise $\bW$, the following holds:
   \[ \left| \frac{1}{N} \langle \bsigma^{\RGD} , x \rangle - \frac{1}{N} \langle \bsigma^{\AMP} , x \rangle \right| = o_N(1) \mper \]
\end{theorem}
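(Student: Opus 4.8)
The plan is to show that \emph{both} one-step updates push the overlap $\rho \coloneqq \frac1N\langle\sigma,x\rangle$ into an $o_N(1)$-neighborhood of the same deterministic quantity --- the relevant fixed point of an explicit scalar map --- after which the theorem follows by a union bound. We may assume $\rho>0$ (the case $\rho<0$ follows from the symmetry $x\mapsto -x$), so $\rho>\tau$. \emph{Step 1 (reduce the RGD step to the SK model with a strong uniform field).} Write $\bz = \beta\lambda\rho + \sqrt{\beta\lambda/N}\,\bg$; since $\bg$ is independent of $\bW$ and $|\bg| = o(\sqrt N)$ with probability $1-o_N(1)$, we have $\bz = \beta\lambda\rho + o_N(1)$ with high probability, so we condition on $\bg$. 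Applying the gauge transformation $\sigma_i\mapsto \sigma_i x_i$, equivalently $\bW_{ij}\mapsto x_ix_j\bW_{ij}$: because $x$ is fixed and independent of $\bW$, the transformed matrix $\bW'$ is again a fresh $\GOE(N)$, the external field $\bz x$ becomes the uniform field $\bz\bone$, and $\frac1N\langle\bsigma^{\RGD},x\rangle$ has the same law as the magnetization of a sample from $\mu_{\beta\bW',\bz\bone}$, i.e.\ of the SK model at inverse temperature $\beta$ with uniform external field $\bz$. Thus the RGD step reduces to controlling the magnetization of this SK model, whose field has strength $h=\beta\lambda\rho\ge\beta\lambda\tau$.

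\emph{Step 2 (crux: choose $\tau$ and pin down the magnetization).} Although $\beta>1$, a sufficiently strong uniform field places the SK model in the replica-symmetric phase, where the magnetization concentrates. We choose $\tau\in(0,1)$ large enough that $\beta\lambda\tau$ exceeds the field strength above which concentration of the magnetization at its replica-symmetric value is known for the SK model; since the required field strength is, for each fixed $\beta$, a constant below $\beta^2\le\beta\lambda$, such a $\tau<1$ exists for every $\lambda\ge\beta>1$. Invoking this input --- either directly, or via Gaussian concentration of the free energy $\frac1N\log Z_N(\beta\bW',h\bone)$ together with differentiability in $h$ of its limit $\phi(\beta,h)$, using that the magnetization is $\partial_h$ of the free energy and $\partial_h\phi=m^\star$ in the replica-symmetric phase --- we obtain, with high probability over $\bW$ and the internal randomness of RGD,
\[ \frac1N\langle\bsigma^{\RGD},x\rangle = m^\star(\beta,\beta\lambda\rho) + o_N(1) , \]
where $m^\star(\beta,h)=\E[\tanh(\beta\sqrt{q^\star}\,\bG+h)]$ with $\bG\sim\calN(0,1)$ and $q^\star=q^\star(\beta,h)$ the solution of the replica-symmetric equation $q=\E[\tanh^2(\beta\sqrt q\,\bG+h)]$.

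\emph{Step 3 (match the AMP-like step).} By the one-step state evolution for the AMP-like iteration --- which, unlike the mixing statements used elsewhere in the paper, is unconditional and follows from the standard AMP analysis --- the overlap $\frac1N\langle\bsigma^{\AMP},x\rangle$ concentrates around $\Psi(\rho)$ for an explicit $\Psi$ read off from the update rule. The update is constructed so that this one-step map is exactly the scalar recursion governing RGD; concretely, matching the defining relations of $(m^\star,q^\star)$ against the state-evolution equations --- both of which reduce to the same pair of scalar equations for the overlap and the effective noise variance --- gives $\Psi(\rho)=m^\star(\beta,\beta\lambda\rho)$. Combining the three steps, both overlaps lie within $o_N(1)$ of $m^\star(\beta,\beta\lambda\rho)$ with high probability over $\bW$, which is the claim.

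The main obstacle is Step 2: obtaining concentration of the magnetization of the SK model with external field at its replica-symmetric value \emph{in the low-temperature regime} $\beta>1$, with error $o_N(1)$ (ideally polynomially small, as the formal version and \Cref{thm:glauber-spiked-wig} will need). This is precisely why we restrict to overlaps $\rho>\tau$: we need the induced field $\beta\lambda\rho$ strong enough to avoid replica-symmetry breaking. Making this rigorous requires quoting (or re-deriving in quantitative form) replica-symmetry results for the SK model with a strong external field, along with a careful verification that the admissible field strength stays below $\beta\lambda$ for all $\lambda\ge\beta>1$ so that a valid $\tau\in(0,1)$ exists. The remaining ingredients --- gauge invariance, concentration of $\bz$, one-step state evolution, and the algebraic matching in Step 3 --- are comparatively routine.
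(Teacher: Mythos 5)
Your high-level plan coincides with the paper's: reduce one step of RGD, via the gauge transformation and the measure decomposition, to the magnetization of an SK model at temperature $\beta$ with uniform external field $\beta\lambda\rho$, show this concentrates around the replica-symmetric formula $m^\star(\beta,\beta\lambda\rho) = \E[\tanh(\beta\sqrt{q^\star}\,\bg + \beta\lambda\rho)]$, and then identify this scalar map with the state-evolution update of the mismatched AMP of \cite{AKUZ19}. Step~1 and Step~3 are fine and essentially what the paper does (the paper works WLOG with $x = \bone$ rather than writing out the gauge transform, but this is cosmetic, and the algebraic matching in \Cref{subsec:rgd-amp-simulation} is exactly the identification you describe, with the caveat --- which you tacitly incorporate --- that the RGD recursion collapses the pair $(M_t,Q_t)$ to a single variable by slaving $q$ to $z$).

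There are two genuine gaps in Step~2. First, the claim that ``the required field strength is, for each fixed $\beta$, a constant below $\beta^2 \le \beta\lambda$'' has no justification and does not appear in the paper. The paper does \emph{not} prove that the AT threshold $h_*(\beta)$ is always below $\beta\lambda$; rather, it conditions on \condref{cond:at-fixed-pt} in general, establishes it unconditionally only for $\beta < 1$ (trivial) and $\beta = \lambda$ (\Cref{lem:fixed-point-existence-beta-eq-lambda}), and explicitly flags $\beta\in(1,\lambda)$ as open. So the existence of a valid $\tau < 1$ for \emph{all} $\lambda \ge \beta > 1$ is not something you can assert without proof. Second, your alternative route to magnetization concentration --- differentiating the limiting free energy in $h$ --- only gives concentration of the quenched Gibbs average $\langle m\rangle_{\bW}$ around $m^\star$, not of the sampled $m(\bsigma^{\RGD})$ around $\langle m\rangle_{\bW}$. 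To close that gap one must also control $\mathrm{Var}_{\mu}(m(\bsigma)) = \chi_N/N$, i.e.\ bound the susceptibility, and the natural way to do so is via overlap concentration --- at which point you have re-derived the paper's input (\Cref{lem:strong-field-magnetization}, adapted from \cite{Han07}). The free-energy route also yields only $o_N(1)$ error, whereas the paper's moment bounds give the polynomial $N^{-(1/2-\eps)}$ rates that the formal statements (and \Cref{thm:glauber-spiked-wig}) require. You are right to single out replica symmetry for $\beta>1$ as the crux --- the paper handles this by conditioning on \condref{cond:rs-at} and \eqref{eq:overlap-conc}, citing Talagrand's partial results (\Cref{rem:rs-at-discussion}) --- but the proposal as written both under-specifies how to finish the concentration argument and over-claims what is known about the admissible field range.
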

\begin{remark}
    The formal version of this theorem depends on verifying an explicit numerical condition which can easily be simulated on a computer; see \Cref{subsec:rgd-amp-simulation} for more details.
\end{remark}
In other words, if both algorithms are provided a sufficiently correlated warm start, one step of RGD is close to one step of AMP in terms of correlation. In the formal version, the instantiation $\sigma$ is also allowed to slightly depend on $\bW$.

One is naturally led to wonder whether a version of \Cref{thm:glauber-spiked-wig} holds true for RGD  for $\beta > 1$, so that we can relate the correlation of a locally stationary point of RGD to that of a sample from the posterior? Unfortunately, we expect such results to be false from worst-case initializations. However, we \emph{can} still hope for such results from non-worst-case initializations, specifically from warm starts that already have some non-trivial correlation.

\begin{theorem}[Informal, see \Cref{th:rgd-fixed-temp}]
   Fix $\lambda > 1$, $K > 0$ large, $\eps > 0$ small, and $\frac{1}{\lambda} < \beta \le \lambda$. There exists a threshold $\tau \in (0,1)$ such that the following holds. With high probability over the noise $\bW$, the following holds. Let $\sigma_0 \in \{\pm 1\}^N$, and assume that $|R(\sigma_0,\bone)| > \tau$. Suppose that we run the RGD Markov chain at inverse temperature $\beta$ from $\sigma_0$ for $T \ge \omega(\log N)$ steps to arrive at distribution $\nu_T$. Suppose $T \le N^{K}$. Then,
	\[ \E_{\bsigma_T \sim \nu_T}\left[ \left| |R\left( \bsigma_T , \bx \right)| - \OPT_{\beta,\lambda} \right| \right] \le O\qty(\frac{1}{N^{1/2 - \eps}}) \mper \]
\end{theorem}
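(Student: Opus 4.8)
The plan is to reduce the RGD trajectory to a noisy one-dimensional recursion for the overlap $q_t \coloneqq |R(\bsigma_t,\bx)|$ and then show that this recursion contracts towards $\OPT_{\beta,\lambda}$, so that the $O(N^{-1/2+\eps})$ error incurred per step does not accumulate over the (at most $N^{K}$) steps. Throughout we may assume $\bx = \bone$ by the $\GOE$ gauge symmetry $\bW \mapsto D_{\bx}\bW D_{\bx}$, so that the warm-start hypothesis $|R(\sigma_0,\bone)|>\tau$ is exactly the hypothesis $|R(\sigma_0,\bx)|>\tau$ needed to invoke \Cref{thm:informal-rgd-amp}.

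First I would set up the one-step recursion. A step of RGD from $\sigma$ draws $\bsigma'\sim\mu_{\beta\bW,\bz\bx}$ with field strength $\bz=\beta\lambda R(\sigma,\bx)+\sqrt{\beta\lambda/N}\,\bg$; since $|\bg|\le N^{\eps}$ except with probability $e^{-\Omega(N^{2\eps})}$, we have $\bz=\beta\lambda R(\sigma,\bx)+O(N^{-1/2+\eps})$. The key structural input is the one-dimensional reduction underlying \Cref{thm:informal-rgd-amp}: for $q=|R(\sigma,\bx)|>\tau$, with high probability over $\bW$ the overlap $R(\bsigma',\bx)$ concentrates, with sub-exponential tails of order $e^{-\Omega(N^{\eps})}$, around $F(q)\coloneqq\bar M(\beta\lambda q)$, where $\bar M(\cdot)$ is the deterministic (replica-symmetric) magnetization curve of the SK model with external field and $F$ is exactly the state-evolution overlap map whose largest fixed point is $\OPT_{\beta,\lambda}$. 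To iterate this over many steps one needs a version uniform over starting configurations, which is delicate because $\bsigma_t$ depends heavily on $\bW$. The saving observation is that the law of $R(\bsigma',\bx)$ depends on $\sigma$ only through the scalar $\bz$, hence (up to the $O(N^{-1/2+\eps})$ Gaussian perturbation, absorbed using the $O(1)$-Lipschitzness of $\bar M$ in its argument) only through the value $R(\sigma,\bx)$, which lies in an $O(N)$-element grid; a union bound over these $O(N)$ values therefore suffices and costs only a $\poly(N)$ factor against the $e^{-\Omega(N^{\eps})}$ tail. Conditioning on the resulting good event for $\bW$ --- which has probability $1-o(1)$ --- we obtain that, simultaneously for all $t\le T$, as long as $q_t>\tau$,
\[ q_{t+1}=F(q_t)+\xi_{t+1},\qquad |\xi_{t+1}|=O(N^{-1/2+\eps}), \]
except with probability at most $T\cdot e^{-\Omega(N^{\eps})}=o(N^{-1/2})$ over the RGD randomness.

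Next I would analyze the deterministic map $F$ on $(\tau,1]$. The inputs are: (i) $F$ is increasing and $O(1)$-Lipschitz on $(\tau,1]$; (ii) $\OPT_{\beta,\lambda}$ is the largest fixed point of $F$, with $F(q)>q$ on $(\tau,\OPT_{\beta,\lambda})$ and $F(q)<q$ on $(\OPT_{\beta,\lambda},1]$, so that $(\tau,1]$ is a basin of attraction of $\OPT_{\beta,\lambda}$; and (iii) $F$ is a genuine contraction at the fixed point, $F'(\OPT_{\beta,\lambda})=\rho<1$ --- here the strict inequality $\beta>1/\lambda$ is essential, to rule out the tangential/critical fixed point that occurs at the transition. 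These follow from the analysis of the AMP-like recursion already referenced in \Cref{thm:glauber-spiked-wig}, together with the fact that taking $\tau=\tau(\beta,\lambda)$ sufficiently large forces the field $\beta\lambda q\ge\beta\lambda\tau$ to be strong enough that $\mu_{\beta\bW,(\beta\lambda q)\bx}$ is in the replica-symmetric, overlap-concentrated regime for all $q\in(\tau,1]$ (this is essentially the ``explicit numerical condition'' of \Cref{thm:informal-rgd-amp}; it is needed even though $\beta$ may exceed $1$, which is why $\tau$ must be large). Consequences of (i)--(iii): there is a constant $c_0>0$ with $F(q)-q\ge c_0$ on $[\tau,\OPT_{\beta,\lambda}-\eta]$ and $q-F(q)\ge c_0$ on $[\OPT_{\beta,\lambda}+\eta,1]$ for a small constant $\eta$, and $|F(q)-\OPT_{\beta,\lambda}|\le\rho\,|q-\OPT_{\beta,\lambda}|$ on $[\OPT_{\beta,\lambda}-\eta,\OPT_{\beta,\lambda}+\eta]$.

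Finally I would run the stochastic tracking argument by induction on $t$, with $\delta\coloneqq O(N^{-1/2+\eps})\ll\min(c_0,\eta)$. In the first phase, as long as $q_t$ is at distance $\ge\eta$ from $\OPT_{\beta,\lambda}$, the drift bound gives $|q_{t+1}-\OPT_{\beta,\lambda}|\le|q_t-\OPT_{\beta,\lambda}|-c_0+\delta\le|q_t-\OPT_{\beta,\lambda}|-c_0/2$ and, crucially, $q_{t+1}$ neither overshoots past $\OPT_{\beta,\lambda}+\eta$ nor drops below $\tau$ (so the one-step comparison keeps applying); hence after $O(1)$ steps the iterate enters $[\OPT_{\beta,\lambda}-\eta,\OPT_{\beta,\lambda}+\eta]$ --- this is precisely where the warm start is used, to place $q_0$ in the basin with definite drift. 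In the second phase, writing $e_t=|q_t-\OPT_{\beta,\lambda}|$, the contraction bound gives $e_{t+1}\le\rho e_t+\delta\le\rho\eta+\delta\le\eta$, so the iterate stays in the window and $e_t\le\rho^{\,t-t_1}\eta+\delta/(1-\rho)$ where $t_1=O(1)$ ends the first phase. For any $t=T\ge\omega(\log N)$ the first term is $o(N^{-A})$ for every $A$, so $e_T=O(\delta)=O(N^{-1/2+\eps})$; the hypothesis $T\ge\omega(\log N)$ (rather than $O(1)$) is needed exactly because shrinking $e_t$ from a constant to $N^{-1/2+\eps}$ takes $\Theta(\log N)$ contraction steps, whose number depends on the unquantified rate $\rho$. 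Thus, on the good $\bW$-event we have $|\,|R(\bsigma_T,\bx)|-\OPT_{\beta,\lambda}|=O(N^{-1/2+\eps})$ except with RGD-probability $o(N^{-1/2})$, on which event the quantity is trivially $\le 1$; taking expectations, $\E_{\bsigma_T\sim\nu_T}[\,|\,|R(\bsigma_T,\bx)|-\OPT_{\beta,\lambda}|\,]\le O(N^{-1/2+\eps})+o(N^{-1/2})=O(N^{-1/2+\eps})$, and the bad $\bW$-event of probability $o(1)$ is simply excluded by the outer ``with high probability over $\bW$'' quantifier in the statement. The main obstacle is input (iii) together with the requirement that $\mu_{\beta\bW,\bz_t\bx}$ remain overlap-concentrated with $e^{-\Omega(N^{\eps})}$ tails across all $N^{K}$ steps: this is a statement about the SK model at arbitrary $\beta$ with a strong external field, it is what pins down the threshold $\tau$, and it interacts circularly with the conclusion (the field stays strong only because $q_t$ stays near $\OPT_{\beta,\lambda}>\tau$, and $q_t$ stays there only because the field is strong) --- a circularity resolved by the drift/contraction bookkeeping above.
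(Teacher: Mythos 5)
Your proposal follows essentially the same route as the paper's proof (\Cref{lem:rgd-convergence-from-warm-start}, which is exactly the warm-start case of \Cref{th:rgd-fixed-temp}): project RGD to the one-dimensional overlap recursion $z_{t+1}=f_{\beta}(z_t)+O(N^{-1/2+\eps})$ via concentration of the Gibbs-sample magnetization around the replica-symmetric formula, keep the iterates in the AT region using the monotonicity and fixed-point structure of $f_{\beta}$, and then combine a constant-drift phase with a contraction phase near $\OPT_{\beta,\lambda}$ over $\Omega(\log N)$ steps, union-bounding per-step errors over the at most $N^{K}$ steps. The only real difference is how uniformity over the encountered external fields is obtained: you union-bound over the discrete grid of overlap values and invoke Lipschitzness of the magnetization curve (which still leaves the Gaussian-perturbed, off-grid field to be covered, e.g. by monotonicity of the magnetization in $h$ or a finer net), whereas the paper sidesteps this by assuming overlap concentration uniformly in $h$ (\condref{cond:rs-at}) and uses polynomial moment bounds from \Cref{lem:strong-field-magnetization} rather than exponential tails.
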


While the requirement that the initialization has sufficiently large correlation may seem like too strong a requirement, we emphasize that this is not the case. If one ran an annealed version of the RGD Markov chain where $\beta$ is slowly incremented, the resulting trajectory of correlations is large enough that this lower bound condition is satisfied. Accordingly, we believe this might provide a path to showing that annealed \emph{Glauber} dynamics attains a correlation of $\OPT_{\beta,\lambda}$.

\begin{remark}
    Again, the formal version of the above theorem requires establishing two explicit numerical conditions, which can easily be simulated on a computer.
\end{remark}

\begin{remark}
We draw attention here to the (perhaps) surprising fact that simulations suggest the correlation $\OPT_{\beta,\lambda}$ with the spike increases \emph{past} $\beta = \lambda$.
This is a rather subtle point: while the posterior mean, corresponding to $\mu_{\lambda\bM}$ (and AMP for that matter) attains the MMSE, this does not imply that samples from the posterior are Bayes-optimal with respect to correlation. 
What \emph{is} true is that the posterior mean is Bayes optimal for the \emph{normalized squared correlation}, i.e. it is the estimator $\wh{x}$ which maximizes $\E\qty[\frac{\ev{\wh{\bx}, \bx}^2}{\norm{\wh{\bx}}^2}]$.
\end{remark}

En route to proving the above result, we prove a new high-precision estimate for the mean magnetization for the SK model under weak external field, which may be of independent interest.
These estimates enable us to  overcome technical difficulties with worst-case analysis of the RGD dynamics for $\beta < 1$.

\begin{theorem}[Informal, see \cref{lem:all-field-mag-conc}]
    Let $\beta < 1$ and $\eps > 0$ sufficiently small. With probability $1-o(1)$ over $W \sim \GOE(N)$, for all $h = O(N^{-(1/4 + \eps)})$, 
        \[ \frac{1}{N} \E_{\bx \sim \mu_{\beta W,h\bone}} \langle\bx,\bone\rangle = \E_{\bg \sim \calN(0,1)}\left[ \tanh\left( \beta\bg\sqrt{q} + h \right) \right] \left( 1 + O\left(\frac{1}{N^{\eps}}\right) \right) + O\left( \frac{1}{N^{1/2 + \eps}} \right) \mcom \]
    where $q$ is the unique solution to $q = \E[\tanh^2(\beta \bg \sqrt{q} + h)]$.
\end{theorem}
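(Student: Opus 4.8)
The plan is to exploit the smallness of $h$: since $|h| = O(N^{-1/4-\eps})$ the magnetization will be $O(N^{-1/4})$, so its leading behavior is governed by the zero-field magnetic susceptibility and the higher-order corrections are controlled by $h$ itself. By spin-flip symmetry everything is odd in $h$, so assume $h\ge 0$. Write $\phi_N(h) := \tfrac1N\E_{\bx\sim\mu_{\beta W,h\bone}}\langle\bx,\bone\rangle$; differentiating a Gibbs average in $h$ inserts a factor $\langle\sigma,\bone\rangle$, so $\phi_N'(h) = \tfrac1N\operatorname{Var}_{\mu_{\beta W,h\bone}}(\langle\sigma,\bone\rangle)$, $\phi_N''(h) = \tfrac1N\operatorname{Cum}_3(\langle\sigma,\bone\rangle)_h$ and $\phi_N'''(h) = \tfrac1N\operatorname{Cum}_4(\langle\sigma,\bone\rangle)_h$. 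In particular $\phi_N(0)=0$ and $\phi_N''(0)=0$ (odd cumulants vanish at zero field), so Taylor's theorem with integral remainder gives $\phi_N(h) = h\,\phi_N'(0) + R_N(h)$ with $|R_N(h)|\le \tfrac{h^3}{6}\sup_{t\in[0,h]}\tfrac1N|\operatorname{Cum}_4(\langle\sigma,\bone\rangle)_t|$. Thus it suffices to establish, each with probability $1-o(1)$ over $W$: \emph{(i)} $\phi_N'(0) = 1 + O(N^{-\eps})$, and \emph{(ii)} $\sup_{0\le t\le CN^{-1/4-\eps}}\tfrac1N|\operatorname{Cum}_4(\langle\sigma,\bone\rangle)_t| = O(N)$, so that $R_N(h) = O(h^3) = o(hN^{-\eps})$ uniformly over the window; a short one-dimensional computation then matches this to the stated right-hand side.

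For (i), set $V_N := \phi_N'(0) = \tfrac1N\big\langle(\textstyle\sum_i\sigma_i)^2\big\rangle_{\mu_{\beta W,0}} = 1 + \tfrac1N\sum_{i\ne j}\langle\sigma_i\sigma_j\rangle_0$. The disorder symmetry $W\mapsto D_aWD_a$ (with $D_a$ flipping coordinate $a$) preserves $\GOE(N)$ and flips $\sigma_a$, so $\E_W\langle\sigma_i\sigma_j\rangle_0 = 0$ for $i\ne j$ and hence $\E_W[V_N] = 1$ exactly. For the variance, introduce two independent replicas $\sigma^1,\sigma^2\sim\mu_{\beta W,0}^{\otimes 2}$ and expand $V_N^2 = \tfrac1{N^2}\E_{\sigma^1,\sigma^2}[(\sum_i\sigma^1_i)^2(\sum_j\sigma^2_j)^2]$; under $\E_W$ the same row-flip symmetry kills every term in which some index occurs with odd total multiplicity, and bookkeeping the survivors ($i=i',j=j'$, each contributing $1$, and $\{i,i'\}=\{j,j'\}$ with $i\ne i'$, summing to $2N^2 q_{12}^2-2N$ with $q_{12}:=\tfrac1N\langle\sigma^1,\sigma^2\rangle$) gives $\operatorname{Var}_W(V_N) = 2\big(\E_W\langle q_{12}^2\rangle-\tfrac1N\big) = \tfrac2N\,\E_W\big[\tfrac1N\sum_{i\ne j}\langle\sigma_i\sigma_j\rangle_0^2\big]$. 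The bracketed quantity is the zero-field spin-glass susceptibility (up to the diagonal), which stays $O_\beta(1)$ on the entire high-temperature line $\beta<1$ by classical estimates; hence $\operatorname{Var}_W(V_N) = O_\beta(1/N)$, and Chebyshev yields $V_N = 1 + O_P(N^{-1/2+\eps})$, far stronger than (i). (Only a polynomial bound $\E_W[\tfrac1N\sum_{i\ne j}\langle\sigma_i\sigma_j\rangle_0^2] = O(N^\gamma)$ with $\gamma<1$ is actually needed, so even crude high-temperature inputs suffice.)

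For (ii) it is enough to bound $\tfrac1N|\operatorname{Cum}_4(\langle\sigma,\bone\rangle)_t|$ at each fixed $t$ in the window and then union-bound over an $N^{-100}$-net of such $t$, the discretization being controlled by the crude estimate $\tfrac1N|\operatorname{Cum}_5(\langle\sigma,\bone\rangle)_t| = O(N^4)$. For fixed such $t$ the parameters $(\beta,t)$ lie strictly inside the replica-symmetric phase — the de Almeida--Thouless condition at $t=0$ is exactly $\beta<1$ — where $\langle\sigma,\bone\rangle$ obeys a central limit theorem and all its cumulants of order $\ge2$ are extensive, $O(N)$; together with the standard concentration of $\operatorname{Cum}_4$ over $W$, this is precisely the needed bound. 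These are the estimates produced by the high-temperature SK cavity/moment machinery run with a vanishingly small external field (which only helps); this is the step that carries the real analytic weight.

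Finally, to reconcile with the formula, expand the fixed point $q=\E[\tanh^2(\beta\bg\sqrt q+h)]$ for small $h$: this gives $q = \tfrac{h^2}{1-\beta^2}(1+O(h^2))$, so $\beta\sqrt q = \Theta(h)$ and $\E_{\bg}[\tanh(\beta\bg\sqrt q+h)] = h(1+O(h^2))$. Combining with (i) and (ii), $\phi_N(h) = h(1+O(N^{-1/2+\eps})) + O(h^3)$; since $h^2 = O(N^{-1/2-2\eps})$ this is $\E_{\bg}[\tanh(\beta\bg\sqrt q+h)]\,(1+O(N^{-\eps})) + O(N^{-1/2-\eps})$ uniformly for $|h| = O(N^{-1/4-\eps})$, with probability $1-o(1)$ over $W$. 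The hard part is (ii): the uniform, high-probability near-Gaussianity of the total magnetization all the way up to $\beta=1$; in contrast (i) is short once one knows the zero-field spin-glass susceptibility remains bounded for $\beta<1$, which is classical, and the restriction $h=O(N^{-1/4-\eps})$ is what makes the Taylor remainder $R_N(h)$ subdominant to the target error.
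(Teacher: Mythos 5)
Your route---Taylor expanding $\phi_N(h)$ around $h=0$ using the zero-field susceptibility and a uniform bound on $\phi_N'''$---is genuinely different from the paper's, and your step (i) is correct and clean: the exact identity $\E_W[\phi_N'(0)]=1$ from the gauge symmetry $W\mapsto D_aWD_a$, together with the variance computation reducing to $\E_W\langle q_{12}^2\rangle=O(1/N)$ (classical for $\beta<1$), does give $\phi_N'(0)=1+O_P(N^{-1/2+\eps})$, and your final reconciliation $q_1(h)=h(1+O(h^2))$ is fine.

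The gap is step (ii), which is where essentially all the analytic content sits and which you only assert. First, as literally stated it is too weak: if $\frac1N|\mathrm{Cum}_4(\langle\sigma,\bone\rangle)_t|=O(N)$, the remainder is only $R_N(h)=O(h^3N)$, which for $h\asymp N^{-1/4-\eps}$ is of order $N^{1/4-3\eps}$ and swamps the main term $h$; your argument needs the extensive bound $\mathrm{Cum}_4(\langle\sigma,\bone\rangle)_t=O(N)$, i.e.\ $\frac1N|\mathrm{Cum}_4|=O(1)$ (which your later sentence about extensivity suggests you intended). Second, and more seriously, that bound---holding with high probability over $W$, uniformly over the field window, and at each net point with failure probability small enough to survive a union bound over an $N^{-100}$-net---is not an off-the-shelf consequence of ``high-temperature cavity/moment machinery,'' nor is ``standard concentration of $\mathrm{Cum}_4$ over $W$'' a citable fact. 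The fourth cumulant is a difference of two $\Theta(N^2)$ quantities, so $O(N)$ extensivity amounts to a quantitative quenched CLT (with rate) for the magnetization valid for all $\beta<1$ and uniformly in weak field; the available results (Hanen, Talagrand) give moment bounds of the form $\E_W\E_{\mathrm{Gibbs}}[(m(\sigma)-q_1)^{2k}]\lesssim N^{-k}$ under overlap concentration, which do not by themselves produce the cancellation needed for a quenched cumulant bound at this order, let alone with super-polynomially small failure probability per point. This is exactly the difficulty the paper sidesteps by a different mechanism: it proves $\E_W[Z^2(\boldm-q_1)^2]\le \E[Z]^2\,O(\max\{1/N,h/\sqrt N\})^2$ via a replica plus Hubbard--Stratonovich computation, removes the $Z^2$ reweighting with negative moments of $Z$ from \cite{DW23} and H\"older, and obtains uniformity in $h$ from monotonicity of $\boldm(h)$ together with a coarse grid, so that only polynomially small per-point failure probabilities are ever required. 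As written, your proposal leaves this central step unproven.
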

In particular, this result establishes the correct asymptotics for the mean magnetization up to the phase transition in weak external field established by \cite{DW23}.


\subsection{Related work}\label{sec:related-work}

\paragraph{Spiked Wigner model.}

The spiked Wigner model has provided a rich test bed for average-case algorithms and complexity. 
Indeed, the basic question of efficient detection and recovery of the spike has been studied through the lens of spectral algorithms \cite{BBAP05}, low-degree polynomials \cite{PWBM18}, and message passing \cite{DAM16}. Interestingly, these efficient algorithms all succeed in polynomial time once $\lambda > 1$. Furthermore, when $\lambda < 1$, the mutual information between the observation $\bM$ and the signal $\bx$ turns out to be $o(1)$ \cite{LM17}, indicating that one cannot hope to design an estimator $\wh{x} \in \{\pm 1\}^N$ that is non-trivially correlated with the spike $\bx$.

For more general priors beyond $\mathrm{Unif}(\{\pm1\}^N)$, the story becomes significantly richer. 
For example, there exist sparse priors where there exists a statistical-computational gap: there is an intermediate regime of SNRs $(\lambda_c, \lambda)$ where recovery is information theoretically possible, but no known efficient algorithm succeeds at weak recovery. 
\cite{EKJ20} pin down the information-theoretic threshold at $\lambda = \lambda_c$ for general bounded i.i.d. priors. 
The algorithmic side of special cases such as sparse PCA \cite{DM14} and non-negative PCA \cite{MR15} have also been extensively studied. The problem has also been generalized to the tensor setting \cite{MR14,WEM19,BGJ20,KX25}, where there is an interesting gap in performance between naive local algorithms and methods based on the sum-of-squares hierarchy or the Kikuchi method. 

In terms of sampling from the posterior, due to the sign symmetry of the posterior creating a bottleneck in distributions, Glauber suffers an exponential worst-case mixing time.
On the flip side, it has been observed that random initialization or simulated annealing appears to empirically succeed at the goal of sampling from the posterior \cite{DKMZ11}. Nevertheless, it has remained challenging to give provable guarantees about the performance of sampling algorithms, even under algorithmic warm starts. 
A notable positive sampling result in this direction is \cite{MW23}, which uses a different diffusion-based algorithm to sample from the posterior in Wasserstein distance when the SNR $\lambda$ is a sufficiently large constant.

\paragraph{Bayes optimal estimation.}

There is by now a large body of work characterizing Bayes optimal estimation for mean-squared error. Indeed, past work has designed algorithms that attain mean-squared error within $\frac{\polylog(N)}{\sqrt{N}}$ of the MMSE \cite{RV18,LW22}, as well as generalizations to constant-rank spikes, and other priors \cite{Mio17,MV21}. These aforementioned algorithms are based on AMP---one first computes the top eigenvector of $\bM$ to obtain a vector that has non-trivial correlation with the spike $\bx$, then refines this estimate by iteratively applying the AMP update. To our knowledge, these algorithms based on AMP are the only algorithms that are known to obtain a mean-squared error within $o(1)$ of the MMSE in the entire high SNR regime $\lambda > 1$.

Typically, one considers the inference setting where knowledge of the full generative model is assumed. Perhaps confusingly, this is also referred to as the ``Bayes-optimal setting''. 
In \cite{AKUZ19}, they study a mismatched setting for spiked Wigner where the knowledge of $\lambda$ is not assumed. 
In fact, in \Cref{subsec:rgd-amp-simulation} we will show that the performance of their AMP algorithm corresponds with the fixed point equations for RGD.

In certain scenarios, Markov chain-based algorithms (or a sample from the posterior) might be preferred to AMP, since they have the ability to capture second-order information, like the variance of the correlation with the spike. Furthermore, samples can be used more flexibly to compute general posterior expectations. 

\paragraph{Dynamics of Markov chains before mixing.}
Many physically and algorithmically interesting phenomena manifest only when Markov chain dynamics do \emph{not} mix quickly, such as \emph{aging} \cite{CK93} and \emph{metastability} \cite{BH16,BJ24}. 
In the context of \emph{spin glasses}, physicists have made many predictions about the nonequilibrium behavior of natural Markov chains such as Langevin dynamics and Glauber dynamics.
For the pure spherical $p$-spin glass, the seminal work of Cugliandolo and Kurchan \cite{CK93} derived a heuristic set of integro-differential equations which govern the behavior of Langevin dynamics when run for $O(1)$ time. This was later rigorously proven for $T = O(1)$ by \cite{BDG06} and the dynamics of the energy observable was eventually established for $T = \exp(O(N))$ by \cite{Sel24}.
In recent work, \cite{DGPZ25} rigorously prove that the energy and overlap of the sequential scan block dynamics for the SK model are dictated by an analogous set of integro-difference equations for $O(N)$ timescales.

\cite{LSS22} characterized the overlap of Langevin dynamics for spiked matrix posteriors in the large system limit $N \to \infty$ for \emph{fixed} times $t$. 
The analogous result in discrete time would correspond to asymptotically controlling the overlap for $O(N)$ steps of Glauber dynamics. 
In contrast, we give control on any sufficiently large polynomial time scale for Glauber.
\cite{BGP24} give an extremely precise characterization of trajectory of Langevin diffusion for the posterior of the multi-spike tensor PCA problem given polynomially many samples of the spiked tensor (the spike being fixed, and the noise tensor i.i.d. across all samples). 
Concretely, in the rank-one matrix setting, they establish strong recovery of the spike $\bx$ when the SNR $\lambda = N^{\alpha}$ for any $\alpha > 0$.
Their proofs are based on careful analysis of the SDE governing the correlations, which are inaccessible in the discrete setting.

\paragraph{Locally stationary distributions.} Before a recent work of Liu, Mohanty, Raghavendra, Rajaraman, and Wu \cite{LMRRW24}, it was not rigorously known whether Glauber dynamics on the spiked Wigner posterior achieves nontrivial recovery of the spike. As mentioned earlier in the introduction, they introduce the framework of locally stationary distributions to analyze the performance of slow-mixing Markov chains on inference problems such as spiked Wigner. 

Our results directly improve on theirs in the following concrete ways:
\begin{enumerate}[label=(\roman*)]
    \item We achieve tight temperature and SNR thresholds for our analysis, conditional on widely believed mixing conditions for the Sherrington--Kirkpatrick (SK) model. 
    \item Wherever SK mixing holds, we nail down the exact correlation that Glauber dynamics achieves for any SNR $\lambda$, as the solution to a simple fixed point equation. In particular, widely believed mixing conditions for the SK model recover the BBP transition $\lambda > 1$.
    \item In particular, we unconditionally recover weak recovery in the regime $\lambda > \frac{1}{0.295}$ for the success of Glauber, based on the rigorous SK mixing results of \cite{AKV24}.
    \item As in \Cref{rem:lsd-speedup-n3}, we improve the runtime guarantees by an $N^2$ factor.
\end{enumerate}
Locally stationary distributions have since been used in other settings to understand the behavior of Markov chains before mixing. Notably, \cite{BCV25} introduces a definition of local stationarity for quantum Hamiltonians, and uses it to show that all metastable states satisfy certain structural guarantees. \cite{RSSLMRF25} also find an interesting application of locally stationary distributions in the context of stochastic backtracking algorithms for LLM reasoning. 

\subsection{Organization} In \Cref{sec:tech-overview}, we provide a technical overview of our techniques to understand the behavior of RGD, and its connection to AMP. In \Cref{sec:open}, we list a few open questions that seem important to fleshing out our understanding of RGD and Glauber. In \Cref{sec:prelims}, we cover some basic preliminaries that will be useful.

In \Cref{sec:rgd-dynamics}, we prove our main theorems about restricted Gaussian dynamics. Next, in \Cref{subsec:mean-magnetization}, we establish novel high-precision estimates for the mean magnetization of a high-temperature SK model under weak external field, which are crucial for analyzing RGD.
Finally, in \Cref{sec:understanding-fixed-points}, we describe some results analyzing the one-dimensional recursion governing the behavior of RGD.


\section{Technical Overview}\label{sec:tech-overview}

Our main results crucially depend on a very precise understanding of the restricted Gaussian dynamics. 
As it turns out, its study transparently clarifies the fixed point structure and its relationship to AMP. 
Using the machinery of locally stationary distributions \cite[Lemma 3.7]{LMRRW24}, we can then transfer our understanding of RGD to Glauber dynamics; see \Cref{thm:glauber-formal} for more details.

The technical overview is organized as follows.
We will first define RGD and describe the high-level reduction showing why one should expect RGD to boil down to a one-dimensional recursion on correlation. 
Then, in \Cref{subsec:amp-connection}, we show how spin glass machinery can be leveraged to explicitly understand what this one-dimensional recursion is. 
Once we have written down the explicit recursion, we can readily connect it to AMP; this is carried out in \Cref{subsec:rgd-amp-simulation}.
Finally, we describe some qualitative properties of this recursion, such as the nature of its fixed points, in \Cref{subsec:qual}.

First off, observe that due to the rotational symmetry of the noise matrix and prior, we may assume without loss of generality that the spike is the all-ones vector $\bone$. In this case, the scaled posterior \eqref{eq:posterior-spiked-wigner} of the spiked Wigner model becomes
\[ \mu_{\beta \bM}(\sigma) \propto \exp\left( \frac{\beta}{2} \sigma^\top \bM \sigma \right) = \exp\left( \frac{\beta}{2} \sigma^\top \bW \sigma + \frac{\beta\lambda}{2N} \langle \sigma , \bone\rangle^2 \right) \mper \]
We start by explaining where RGD arises from, and the connection between the posterior and the SK model. It arises from the Hubbard--Stratonovich transform \cite{Str57,Hub59} (also see \cite[Theorem 3.12]{LMRW24}).
Let $\bsigma \sim \mu_{\beta \bW + \frac{\beta\lambda}{N} \bone\bone^\top}$, and $\bg \sim \calN\left(0,1\right)$. Set $\bz = \frac{\beta\lambda}{N} \langle \bsigma , \bone\rangle + \sqrt{\frac{\beta\lambda}{N}} \bg$. Then, the posterior distribution on $\bsigma$ conditioned on $\bz$ is simply
\begin{align*}
  \Pr\left[ \sigma \mid \bz \right] &\propto \mu_{\beta\bM}(\sigma) \cdot \exp\left( - \frac{N}{2\beta\lambda} \left( \bz - \frac{\beta\lambda}{N} \langle \sigma,\bone\rangle \right)^2 \right) \\
    &\propto \exp\left( \frac{\beta}{2} \sigma^\top \bW \sigma + \bz \langle \sigma,\bone\rangle \right) = \mu_{ \beta\bW , \bz\bone } \mcom
\end{align*}
and the quadratic term from the spike in the Hamiltonian is now an external field.
In other words,
\[ \mu_{\beta \bW + \frac{\beta\lambda}{N} \bone\bone^\top} = \E_{\bz} \mu_{\beta \bW, \bz\bone} \mper \]
The RGD Markov chain (\Cref{def:rgd}) is then simply the noising-denoising process associated to the above measure decomposition: if at $\sigma \in \{\pm 1\}^N$, draw $\bg \sim \calN\left(0,1\right)$, then move to $\bsigma' \sim \mu_{\beta\bW , \left( \frac{\beta\lambda}{N} \langle \sigma,\bone\rangle + \sqrt{\frac{\beta\lambda}{N}} \bg \right) \bone}$. It is not difficult to show, given this perspective of a noising-denoising process, that RGD has stationary distribution equal to $\mu_{\beta\bM}$. 
The usefulness of RGD is that it brings forth the importance of the correlation with the spike, which is the observable we ultimately care about. 

\paragraph{Reduction to one dimension.}
So far, we have written down the $N$-dimensional RGD chain, whose evolution explicitly depends on the correlation with the spike. 
To make RGD more tractable and eventually recover the AMP update, we would like to reduce the dynamics to a finite-dimensional one. 

This motivates us to consider the related $\PRGD$ Markov chain on $[-1,1]$, which tracks the (normalized) correlation of the $N$-dimensional RGD iterate with the spike. 
In this Markov chain, if starting at $y \in \R$, we  draw $\bg \sim \calN\left(0,1\right)$, set $\bz = \beta\lambda y + \sqrt{\frac{\beta\lambda}{N}} \bg$, draw $\bsigma \sim \mu_{\beta \bW , \bz \bone}$, and finally move to $\bz' = \frac{1}{N} \langle \bsigma,\bone\rangle$. 
The stationary distribution of this Markov chain is just the (normalized) marginal of $\mu_{\beta\bM}$ along the all-ones direction.

So far, this $1$-dimensional chain still goes through an $N$-dimensional SK Gibbs sample $\bsigma$; in particular, it depends on the $N\times N$ matrix $\bW$.
Therefore, to reduce $\PRGD$ to an effective 1-dimensional recursion, we must establish two things:
\begin{itemize}
    \item Concentration of $\bz'$ around the mean correlation. More explicitly, we would like to show that with high probability, the disorder $\bW$ is such that for all the encountered external fields $\bz$, the correlation of the random draw $\bsigma$ is concentrated around the mean correlation $\langle\bsigma\rangle$.
    \item A scalar formula for the mean correlation of a Gibbs sample from an SK model with external field.
\end{itemize}
Both of these points are tractable using spin glass machinery in the high temperature regime for SK, which we describe shortly.
As we shall soon see, the scalar formula that governs the mean correlation is what crucially leads to state evolution equations.

\subsection{The mean recursion and fixed point structure}\label{subsec:amp-connection}
Let us now see more concretely how the fixed point relation arises from the analysis of $\PRGD$.

Define the mean magnetization $\boldm_h = \frac{1}{N} \E_{\bsigma \sim \mu_{\beta \bW , h}} \langle \bsigma,\bone\rangle$, and for $\bsigma \sim \mu_{\beta \bW, h}$, let $\wt{\bg}_h$ be the random variable $\sqrt{N} \cdot \left(\frac{1}{N} \langle \bsigma,\bone\rangle - \boldm_h\right)$.
Under this notation, we can write down the following description of $\PRGD$: from $z \in \R$,
\begin{enumerate}
	\item draw $\bg \sim \calN\left( 0,1 \right)$, set $\bh = \beta\lambda z + \sqrt{\frac{\beta\lambda}{N}} \bg$,
	\item move to $\bz' = \boldm_{\bh} + \frac{1}{\sqrt{N}} \wt{\bg}_{\bh}$. \label{eq:gtilde}
\end{enumerate}

For simplicity of exposition, let us first assume $z = \Omega(1)$; we will comment on the case of small $z$ later.
First, observe by gaussian concentration, with very high probability we have $\bh \approx \beta \lambda z$. 

To simplify the recursion further, we need concentration of $\bz'$ around $\boldm_h$, as well as an explicit formula for $\boldm_h$.
The key observation is that such properties are tractable to prove if the SK model $\mu_{\beta \bW, h}$ is in the high temperature regime.
We first need to introduce some standard spin glass quantities.

\begin{restatable*}[Overlap constant]{definition}{OverlapConstant}
    \label{def:q-overlap}
    For $\beta \ge 0$ and $h > 0$, we define $q = q_{\beta,h}$ as the unique solution to the recursion
    \[ q = \E_{\bg \sim \calN\left(0,1\right)} \left[ \tanh^2 \left( \beta \bg \sqrt{q} + h \right) \right] \mper \]
    We also define $q_{\beta,h} = 0$ for $h = 0$ when $\beta \le 1$, and as the nonzero solution to the above recursion when $\beta > 1$.
\end{restatable*}

The definition above is shown to be well-defined in \Cref{lem:q-properties}.
We now define the high-temperature region for the SK model, a strict weakening of a region first predicted by de Almeida and Thouless \cite{AT78}.

\begin{restatable*}[weak AT line]{definition}{ATLineDef}\label{def:at-line}
A non-negative pair $(\beta,h)$ is said to satisfy the weak Almeida--Thouless ($\textsf{wAT}$) condition, or to be above the weak AT line, if
\begin{equation}
  \label{eq:at-line}
  \tag{\textsf{wAT}}
  \beta^2(1-q_{\beta,h}) = \beta^2 \E \sech^2 \!\left( \beta z\sqrt{q_{\beta,h}} + h \right) < 1 \mper
\end{equation}
\end{restatable*}

We show that if  \eqref{eq:at-line} is satisfied, the correlation $\langle \bsigma,\bone\rangle$ of a Gibbs sample $\bsigma \sim \mu_{\beta \bW, h}$ concentrates around the mean correlation. 
In other words, we have $\bz' \approx \boldm_{\bh}$ with high probability.
Hence, we conclude that a single step of $\PRGD$ can be approximated by the map $z \mapsto \boldm_{\beta \lambda z}$.

Turning now to the mean formula, in \Cref{lem:all-field-mag-conc}, we establish that with high probability over $\bW$, for all $h = O(1)$ where  \eqref{eq:at-line} is satisfied, 
\begin{equation}\label{eq:mean-approx}
    \boldm_{h} \approx \E_{\bg \sim \calN(0, 1)}\left[\tanh(\beta \bg\sqrt{q_{\beta,h}} + h)\right],
\end{equation}
where we recall $q$ from \Cref{def:q-overlap}.
In particular, we establish tighter control on $\boldm_h$ when $h = o(1)$ than is implied by existing literature, which is crucial for analyzing the dynamics near $h = 0$. 

Thus, we have reduced a single step of $\PRGD$ to the deterministic iteration 
\[
    z_{t+1} \gets \E_{\bg \sim \calN(0, 1)} \left[ \tanh\left( \beta\bg\sqrt{q} + \beta \lambda z_t\right) \right]\mcom
\]
where $q = q_{\beta, \beta \lambda z_t}$.

This suggests that the limiting behavior of $\PRGD$ is given by fixed point solutions to the following set of equations:
\begin{equation}
      \label{eq:fixed-point-eq}
      \begin{aligned}
      z &= \E_{\bg \sim \calN\left( 0,1 \right)} \left[ \tanh \left( \beta \bg \sqrt{q} + \beta\lambda z \right) \right] \mcom \\
        q &= \E_{\bg \sim \calN\left( 0,1 \right)} \left[ \tanh^2 \left( \beta \bg \sqrt{q} + \beta\lambda z \right) \right] \mper \\
  \end{aligned}
\end{equation}
Whether $\PRGD$ provably converges to these fixed points is not completely clear, and we defer further discussion on this matter to \Cref{subsec:qual}.

Let us now address the case of small $z$.
When $z = o(1)$, and especially if $z$ is $\Theta(\frac{1}{\sqrt{N}})$, the picture is more delicate, as then the random quantities fluctuate at the same scale as $z$ itself. 
In \Cref{th:rgd-fixed-temp} we use the anticoncentration of $\bg$ to show that in $\Omega(\log N)$ steps, we reach the region $z = \Omega(1)$ with constant probability.

\subsection{RGD simulates AMP}\label{subsec:rgd-amp-simulation}

Let us now draw the explicit connection to AMP fixed points. 
The state evolution fixed points (see \Cref{lem:state-evolution} for a precise statement) that govern the late-stage performance of Bayes AMP for spiked Wigner \cite{DAM16,MV21} read 
\begin{align*}
   \mu &= \lambda \E_{\bg \sim \calN(0, 1)} \qty[\tanh(\lambda \bg \sigma + \lambda^2 \sigma^2)] \\
   \sigma^2  &= \E_{\bg \sim \calN(0, 1)} \qty[\tanh^2(\lambda \bg \sigma + \lambda^2 \sigma^2)]\mper
\end{align*} 
Let $(\mu, \sigma^2)$ be the unique fixed point to the SE equations. We claim that $(z, q) = (\frac{\mu}{\lambda}, \sigma^2)$ is also a solution to \eqref{eq:fixed-point-eq} for $\beta = \lambda$.
Indeed, using that $\E \tanh(\sqrt{\gamma} \bg + \gamma) = \E \tanh^2(\sqrt{\gamma} \bg + \gamma)$ for any $\gamma \ge 0$, we see that $\mu = \lambda \sigma^2$. 
Therefore our putative assignment satisfies $z = q$, and some straightforward algebra yields that $(z, q)$ is a valid solution to the fixed point equations.

As it turns out, the update equations for RGD given by \eqref{eq:fixed-point-eq} are nearly identical to the state evolution equations for mismatched AMP for the spiked Wigner model \cite{AKUZ19}. 
Specifically, we have $\Delta_0 = \frac{1}{\lambda^2}$, $\Delta = \frac{1}{\beta\lambda}$, $\eta_t(\cdot) = \tanh(\cdot)$.
Then some straightforward algebra, combined with \cite[Theorem 1]{AKUZ19}, yields the SE recursion
\begin{align*}
    M_{t+1} &= \E\qty[\tanh(\beta\bg \sqrt{Q_t} + \beta\lambda M_t)] \\
    Q_{t+1} &= \E\qty[\tanh^2(\beta \bg \sqrt{Q_t} + \beta \lambda M_t)]\mper
\end{align*}
The direct comparison is just identifying $M_t \leftarrow z_t$ and $Q_t \leftarrow q_t$. 
The only difference is that $(M_t, Q_t)$ update synchronously, whereas our recursion sets $q_t = q_t(z_t)$ and therefore collapses into an effective one-dimensional recursion on just $z_t$. 
\begin{remark}
    A subtle point here is that even for $\beta = \lambda$, the mismatched AMP is \emph{not} syntactically equivalent to Bayes AMP; in order to derive Bayes AMP, one must artificially enforce that $z = q$.
\end{remark}

\subsection{Qualitative behavior of the fixed point equations}\label{subsec:qual}
The qualitative behavior of the correlation recursion is determined by the fixed points of the recursion, as well as the convergence to said fixed points.

Convergence is nontrivial to guarantee, and indeed the proof of convergence for Bayes AMP does not immediately generalize to the mismatched setting where $\beta \neq \lambda$.
To this end, in \Cref{lem:fixed-point} we establish that so long as $(\beta, \beta \lambda z)$ satisfies \eqref{eq:at-line}, the map $z \mapsto \boldm_{\beta \lambda z}$ is strictly increasing in $z$. 
For any $\beta < 1$, \eqref{eq:at-line} is automatically satisfied, and in this setting we are able to fully characterize convergence for the recursion starting from any $z$. 

Turning now to the fixed points, one can immediately identify $z = 0$ as one fixed point of this recursion. As it turns out, this map $z \mapsto \boldm_{\beta \lambda z}$ behaves differently depending on $\beta$, as illustrated in \Cref{fig:magnetization-predictions}:
\begin{enumerate}[label=(\roman*)]
	\item If $\beta$ is too small (smaller than $\frac{1}{\lambda}$), the signal is washed out and $0$ is the only fixed point of this recursion. Furthermore, it is a stable fixed point.
	\item If $\beta$ is of intermediate magnitude (between $\frac{1}{\lambda}$ and $1$), in addition to an unstable fixed point at $0$, this function has stable fixed points at $\pm \OPT_{\beta,\lambda}$.
	\item If $\beta > 1$, the formula for the mean correlation $\boldm_{\beta\lambda z}$ is only valid if $z$ is sufficiently large. Indeed, if $z$ is close to $0$, \eqref{eq:at-line} is violated, and the corresponding SK model is known to be in the low-temperature replica-symmetry breaking (RSB) regime. 
    Hence, the recursion is only valid for large $z$, and in this region there are at most two stable fixed points $\pm \OPT_{\beta,\lambda}$.
	\item Finally, if $\beta$ is very large, the RSB region for $z$ is so large that it swallows the fixed points that are present at smaller $\beta$.
\end{enumerate}

\begin{figure}[!ht]
  \centering

  \begin{subfigure}[t]{0.45\textwidth}
    \centering
    \includegraphics[width=\linewidth]{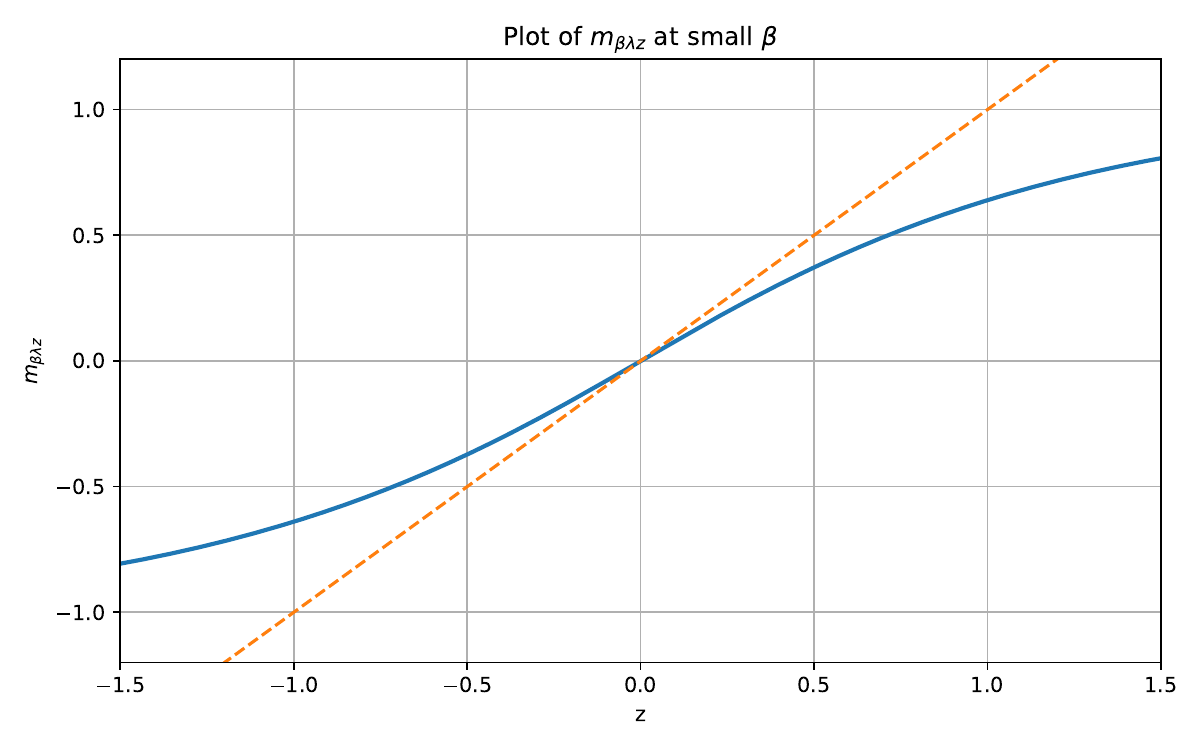}
    \par\vspace{0.35ex}
    \parbox[t][\capht][t]{\linewidth}{%
      \caption{$\boldm_{\beta\lambda z}$ at $\beta < \frac{1}{\lambda}$: $0$ is the only fixed point.}%
    }
  \end{subfigure}
  \hfill
  \begin{subfigure}[t]{0.45\textwidth}
    \centering
    \includegraphics[width=\linewidth]{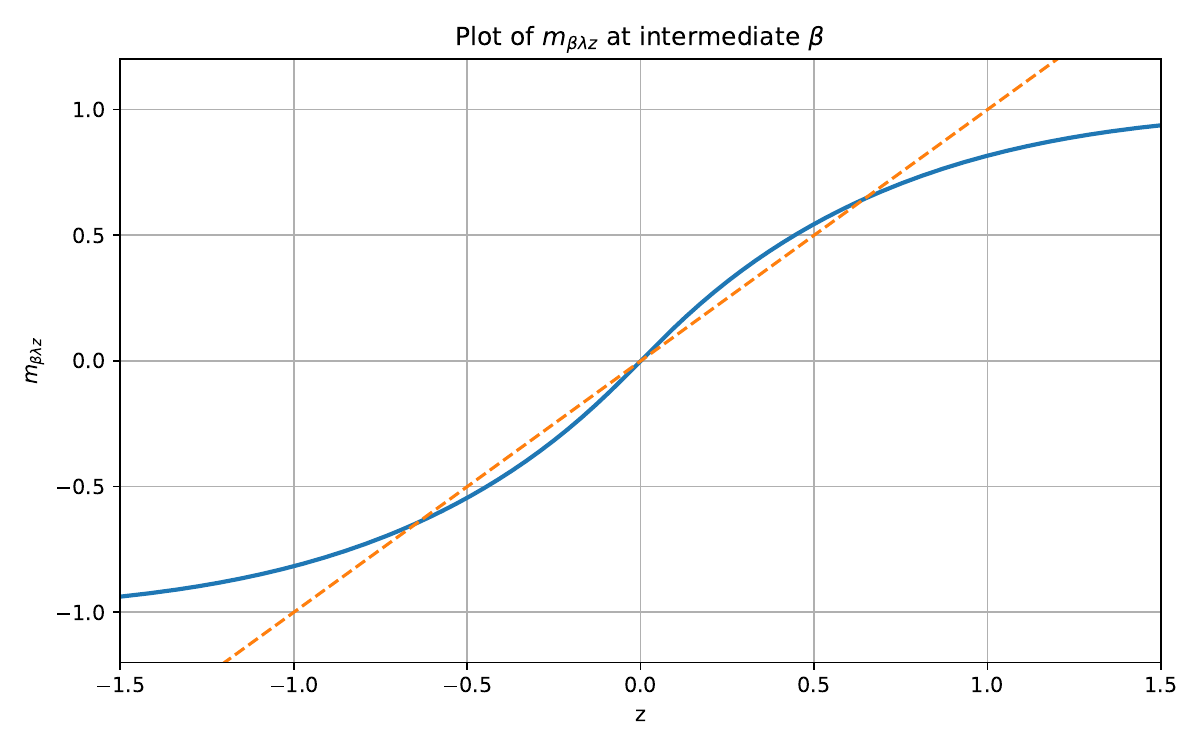}
    \par\vspace{0.35ex}
    \parbox[t][\capht][t]{\linewidth}{%
      \caption{$\boldm_{\beta\lambda z}$ at intermediate $\frac{1}{\lambda} < \beta < 1$: non-trivial fixed points appear.}%
    }
  \end{subfigure}

  \vspace{1em}

  \begin{subfigure}[t]{0.45\textwidth}
    \centering
    \includegraphics[width=\linewidth]{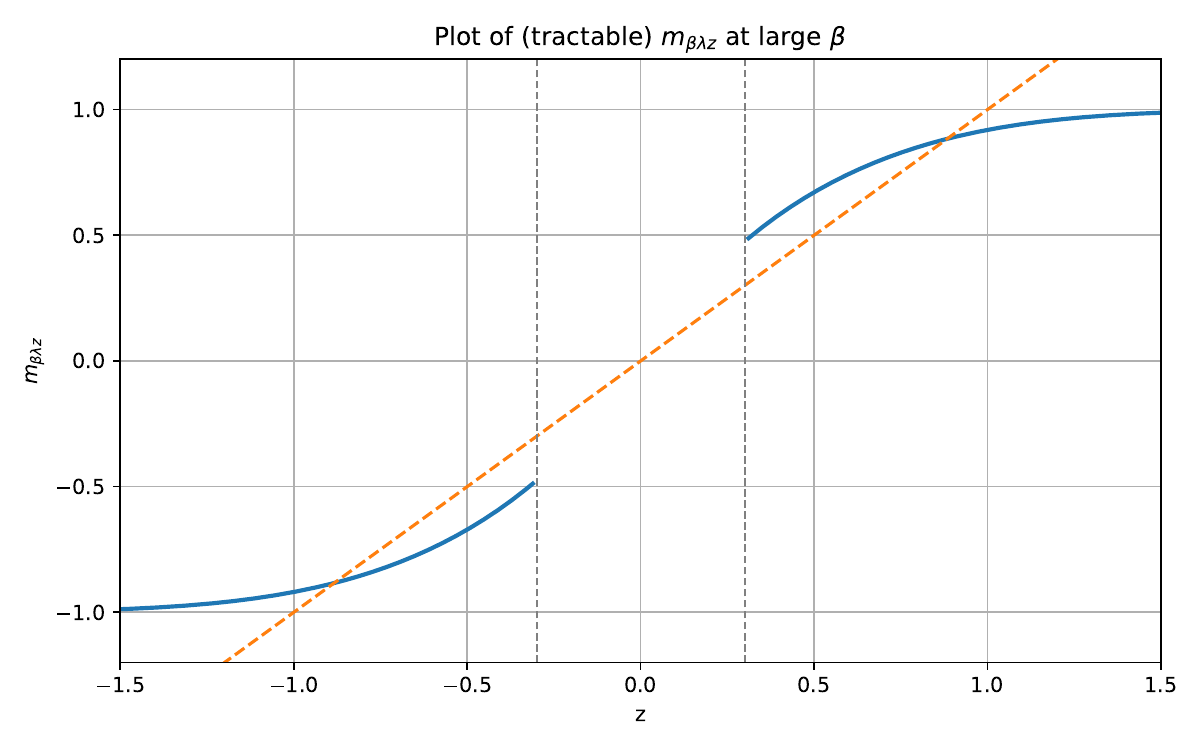}
    \par\vspace{0.35ex}
    \parbox[t][\capht][t]{\linewidth}{%
      \caption{$\boldm_{\beta\lambda z}$ at $1 < \beta < \lambda$: the $0$ fixed point is swallowed by intractability.}%
    }
  \end{subfigure}
  \hfill
  \begin{subfigure}[t]{0.45\textwidth}
    \centering
    \includegraphics[width=\linewidth]{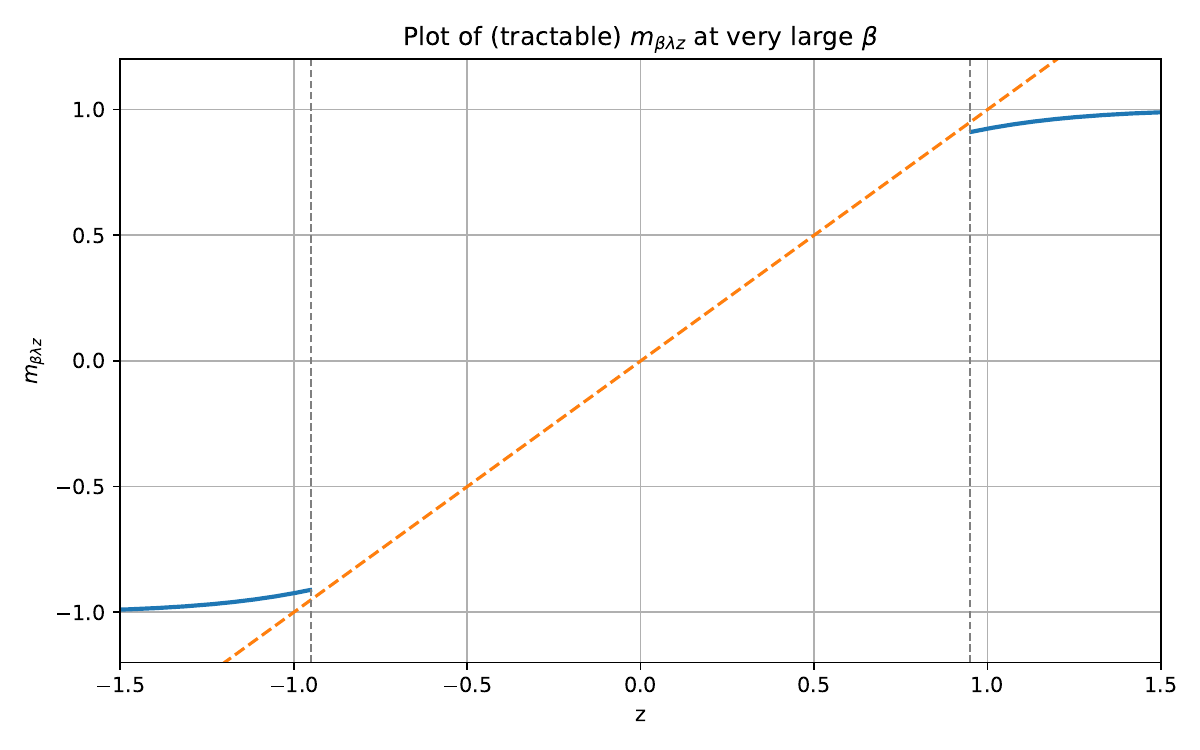}
    \par\vspace{0.35ex}
    \parbox[t][\capht][t]{\linewidth}{%
      \caption{$\boldm_{\beta\lambda z}$ at $\beta \gg \lambda$: all fixed points are swallowed by intractability.}%
    }
  \end{subfigure}

  \caption{Behavior of the RGD recursion for different values of $\beta$.}
  \label{fig:magnetization-predictions}
\end{figure}

We make the above predictions for the convergence and fixed points rigorous in \Cref{lem:fixed-point}.
\Cref{fig:magnetization-predictions} suggests two kinds of results one could hope to show.

\begin{enumerate}[label=(\alph*)]
    \item First, that for any $\beta$, the RGD Markov chain reaches the stable $\OPT_{\beta, \lambda}$ fixed point from a sufficiently warm start which puts $z_0$ in the high-temperature, replica-symmetric (RS) regime. That is, if the initialization $z_0$ of $\PRGD$ is such that $(\beta, \beta \lambda z_0)$ satisfies \eqref{eq:at-line}, then the correlation shoots to the stable fixed point $\OPT_{\beta, \lambda}$ in a small constant number of steps of RGD. This is essentially immediate from classical analyses of fixed point recursions; see \Cref{lem:rgd-convergence-from-warm-start} for details.

  \item Second, that for $\beta < 1$, the RGD Markov chain reaches the stable $\OPT_{\beta, \lambda}$ fixed point from an \emph{arbitrary} initialization. In the case that $\beta < \frac{1}{\lambda}$, $0$ is the only stable fixed point, so for now we assume that $\beta \in (\frac{1}{\lambda}, 1)$. Since the previous point implies that the correlation goes to $\OPT_{\beta, \lambda}$ once it becomes sufficiently large, the interesting part here is showing that the Markov chain successfully escapes the unstable fixed point at $0$. We establish this in \Cref{th:rgd-cold-start-analysis}, where we show that the noise $\bg$ and $\wt{\bg}$, which are typically $O\left( \frac{1}{\sqrt{N}} \right)$, are anticoncentrated enough to cause the correlation to escape $0$.
\end{enumerate}
\begin{remark}
    As a simple illustrative example, we can carry out the same analysis on the RGD dynamics for the Curie-Weiss model at inverse temperature $\beta$ and external field $h$. In this case $\boldm_{\bh}$ can be computed exactly as $\tanh(\bh)$, and so we see that RGD converges to the stable fixed point of the mean-field recursion $m = \tanh(\beta m + h)$.
\end{remark}

\section{Conjectures and open questions}\label{sec:open}

One natural question is whether we can extend our analysis to other inference problems.
At least for the case of spiked matrix problems, the general recipe seems fairly adaptable. 
For example, consider a more general spiked matrix inference problem with a different prior and noise model (i.e. not $\GOE$).
The main obstacles here are (1) establishing replica symmetry (specifically overlap concentration) for this more general spin glass, and using it to show sample magnetization concentration \`{a} la \Cref{lem:strong-field-magnetization} and (2) proving a mean correlation formula, analogous to \Cref{lem:all-field-mag-conc}.

\begin{question}
    Can one establish a kind of universality result for our fixed point characterization that does not rely on model-specific calculations?
\end{question}

It would also be interesting to see if one could push the limits of our analysis to the multi-spike setting and achieve results analogous to \cite{BGP24} for Glauber dynamics.

Another interesting (but open-ended) question is: what other classes of inference problems can be analyzed using locally stationary distributions? 
The success of RGD in the present analysis is rather contingent on the spiked matrix structure; in other inference problems, it is unclear what Markov chain should be used as a proxy to understand Glauber dynamics.

\begin{question}
    Identify broader inference problems that can be analyzed through the lens of locally stationary distributions.
\end{question}

We next turn to open questions regarding the spiked Wigner problem in particular. 
One fundamental barrier to applying our techniques to understand Glauber for $\beta > 1$ is that the transfer of local stationarity (\Cref{lem:lsd-transfer}) does not work in this setting, even if we assume mixing guarantees in the entire conjectured high-temperature region \eqref{eq:at-line}. In particular, it is unclear how to prove a version of \Cref{lem:lsd-transfer} where we only assume that the component measure satisfies an MLSI with high probability, under the assumption that the initialization from which the locally stationary distribution is obtained lives in this high-temperature region.

\begin{question}\label{conj:glauber-full-range}
    Assuming SK mixes fast for all $(\beta, h)$ satisfying \eqref{eq:at-line}, can we extend our analysis to characterize the recovery performance for annealed Glauber dynamics up to $\beta \in \left( 1, \beta_{\lambda} \right]$?
\end{question}

It seems likely that if the above question were answered affirmatively, this would provide an alternate algorithm that attains Bayes-optimal performance in the entire high-SNR regime $\lambda > 1$.

We note that \cite{AKUZ19} makes some predictions for the behavior of these models in the RSB regime, i.e. beyond $\beta_{\lambda}$.  

For technical reasons, the current techniques which involve a transfer of local stationarity between Glauber and RGD necessarily incur some $\poly(N)$ overhead in the runtime. 
On the other hand, based on physics heuristics and the rigorous results of \cite{Sel24,DGPZ25}, one would expect the dynamics to achieve nonzero correlation in even $O(N)$ steps. 
\begin{conjecture}\label{conj:glauber-linear}
    Glauber dynamics succeeds at weak recovery in (near-)linear runtime.
\end{conjecture}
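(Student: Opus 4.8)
The plan is to bypass the lossy transfer through locally stationary distributions (\Cref{lem:lsd-transfer}), whose weak quantitative error budget is precisely what forces the $\poly(N)$ overhead, and instead track the Glauber (or sequential-scan block) trajectory directly via a dynamical cavity / DMFT analysis in the spirit of \cite{CK93,DGPZ25}. Concretely, I would study the block dynamics on $\mu_{\beta\bM}$ over the $O(N)$-step timescale of \cite{DGPZ25} and set up the joint evolution of the relevant order parameters: the spike correlation $m_t = \frac1N\langle\bsigma_t,\bone\rangle$, the two-time overlaps $C(s,t) = \frac1N\langle\bsigma_s,\bsigma_t\rangle$ between iterates (or between two coupled replicas of the dynamics), and the associated response functions $R(s,t)$. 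The key structural input, already visible in the RGD reduction in \Cref{sec:tech-overview}, is that the spike enters the dynamics only through a scalar self-consistent external field; in the dynamical setting this becomes a field $h_t = \beta\lambda\, m_t$ augmented by memory terms of the form $\int R(t,s)\,(\cdots)\,\dd s$ coming from the Onsager reaction, exactly as in the integro-difference equations of \cite{DGPZ25} for the field-free SK chain.

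The second step is to derive and rigorously justify these integro-difference equations for the \emph{spiked} model. Here I would follow the \cite{DGPZ25} template — a cavity/interpolation argument controlling the effect of adding one coordinate, together with concentration of the empirical two-time correlation and response functions — but carry along the extra planted direction. Since the spike is a rank-one perturbation in the $\bone$ direction, its contribution to the cavity fields is a clean additive shift, so the main new work is bookkeeping: verifying that the error terms in the cavity recursion remain $o(1)$ uniformly over the time horizon even in the presence of the field, and checking that the replica-symmetric (high-temperature) assumptions needed to close the equations hold along the trajectory whenever $(\beta,\beta\lambda m_t)$ lies above the \eqref{eq:at-line} line, which for $\beta<1$ is automatic.

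The third step is to analyze the resulting DMFT system and extract weak recovery. Linearizing around the $m\equiv 0$ solution — where $C,R$ reduce to the field-free SK block-dynamics kernels of \cite{DGPZ25} — the correlation should obey a linear recursion $m_{t+1} = \beta\lambda\, m_t + \text{(memory)}$ whose effective per-sweep growth rate exceeds $1$ precisely when $\beta\lambda > 1$, so that starting from $m_0 = \Theta(1/\sqrt N)$ the correlation grows geometrically; combined with the nonlinear saturation analysis already carried out for the one-dimensional recursion in \Cref{th:rgd-cold-start-analysis} and \Cref{lem:fixed-point}, this drives $m_t$ to a constant. If one insists on a \emph{strictly} linear number of steps, one should instead take a warm start with $m_0 = \Omega(1)$ (e.g. a coarse spectral estimate), in which case the saturation phase alone costs $O(1)$ sweeps; the geometric-growth phase from a cold start costs $\Theta(\log N)$ sweeps, i.e. $\Theta(N\log N)$ steps, which is the source of the ``near-'' in the statement.

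The main obstacle is step two over a \emph{super-constant} number of sweeps. The rigorous DMFT results of \cite{DGPZ25,Sel24} control the dynamics over $O(N)$-type timescales, whereas escaping the unstable fixed point at $m=0$ from a random start genuinely requires $\Theta(\log N)$ sweeps, i.e. $\Theta(N\log N)$ steps. Propagating the cavity-method error bounds — and the overlap-concentration and replica-symmetry hypotheses they rest on — out to this longer horizon without the errors compounding, while simultaneously tracking a correlation that is only $\Theta(1/\sqrt N)$ in the early phase (so that the $o(1)$ slack in the equations is insufficient and one needs a refined, relative-error version of the DMFT equations near $m=0$, much as in \Cref{lem:all-field-mag-conc}), is where the real difficulty lies; it is plausible that the fine magnetization estimates proved here for the static SK model are exactly the ingredient needed to bootstrap such a dynamical relative-error bound.
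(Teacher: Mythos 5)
First, a framing issue: the statement you were asked to prove is \Cref{conj:glauber-linear}, which the paper states as a \emph{conjecture} and does not prove. The paper's only remark on it is a suggested (and different) route: establish exponential contraction of the local stationarity parameter, e.g.\ via a weak \Poincare inequality \cite{HMRW24} for symmetric test functions, so as to remove the $\poly(N)$ overhead inherent in the transfer step of \Cref{lem:lsd-transfer}. Your proposal instead goes through a dynamical cavity/DMFT analysis in the style of \cite{CK93,DGPZ25}, tracking $m_t$, two-time overlaps, and response functions directly for the spiked model. That is a legitimate and arguably natural alternative program, but as written it is a research plan, not a proof, and its central step is precisely the open problem.

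Concretely, the gap is your step two. The rigorous results you lean on (\cite{DGPZ25,Sel24}) control block/Langevin dynamics for the \emph{unspiked} model over $O(N)$-step (i.e.\ $O(1)$-sweep) horizons with additive $o(1)$ errors in the order parameters; you need (a) the spiked version of these equations, (b) control over $\Theta(\log N)$ sweeps, and (c) \emph{relative}-error control at the scale $m_t = \Theta(1/\sqrt N)$ during the escape phase, since an additive $o(1)$ (or even $O(N^{-1/4})$) error in the magnetization equation completely swamps the signal you are trying to amplify. You acknowledge this, but acknowledging it does not discharge it: the anticoncentration argument of \Cref{th:rgd-cold-start-analysis} works because RGD resamples an entire SK Gibbs configuration per step, giving independent $\Theta(1/\sqrt N)$ noise with a clean conditional structure, whereas a single Glauber sweep has strong temporal correlations and memory (Onsager) terms, and it is not established that the effective per-sweep map on $m_t$ has slope $\beta\lambda>1$ at $m=0$ once those memory terms are included, nor that the fluctuations around the DMFT prediction are anticoncentrated enough to seed geometric growth. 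The static estimates of \Cref{lem:all-field-mag-conc} do not transfer to the non-equilibrium trajectory without a new argument. Finally, the fallback of warm-starting from a spectral estimate with $m_0=\Omega(1)$ changes the statement: at that point weak recovery is already achieved by the spectral initializer, so it no longer certifies that \emph{Glauber dynamics} succeeds at weak recovery in near-linear time in the sense the conjecture intends (from a cold/arbitrary start). So the proposal identifies plausible ingredients but leaves the decisive steps unproven; the conjecture remains open, and your route is moreover different from the one the paper itself suggests.
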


One path towards achieving such a guarantee is to establish exponential contraction of the local stationarity parameter, perhaps by establishing a weak \Poincare inequality \cite{HMRW24} for symmetric test functions.
Indeed, we believe that our techniques to analyze RGD might help in showing that annealed Glauber dynamics succeeds at sampling from the posterior (and potentially even from lower-temperature models). Furthermore, while the ``conservation of variance'' property used to show sampling guarantees for annealed Glauber dynamics is typically reliant on proving certain covariance bounds (e.g. \cite[Lemma 6.8]{HMRW24}) which are often technically challenging, perhaps this can be sidestepped if one instead directly tried to establish a weak \Poincare inequality for the (one-dimensional!) RGD Markov chain.

\begin{conjecture}\label{conj:glauber-mixing}
    There is some threshold $\beta_{\lambda} > \lambda$ such that for all $\beta < \beta_{\lambda}$, annealed Glauber dynamics samples from the scaled posterior $\mu_{\beta \bM}$ in $\poly(N)$, or even  $O(N\log N)$ time. Here, define $\beta_{\lambda}$ as the largest temperature $\beta$ at which there is a nonzero fixed point $z$ to the recursion \eqref{eq:fixed-point-eq} such that $(\beta,\beta\lambda z)$ is above the AT line $\beta^2 \E\left[ \sech^4(\beta\bg\sqrt{q} + \beta\lambda z) \right] < 1$.
\end{conjecture}




\section{Preliminaries}
\label{sec:prelims}

\subsection{Notation}

\begin{itemize}
    \item Given $\sigma_1,\sigma_2 \in \{\pm 1\}^{N}$, we denote $R(\sigma_1,\sigma_2) = \frac{1}{N} \langle\sigma_1,\sigma_2\rangle$ to be the normalized correlation between the two.
    \item We use $c$ to denote small positive constants whose value may change from line to line, and $C$ to denote similarly fickle large constants.
\end{itemize}

\subsection{Markov chain basics}

\begin{definition}[Ising model]
    Let $J \in \R^{n \times n}$ be a symmetric \emph{interaction matrix} and $h \in \R^n$ an \emph{external field}.
    The \emph{Ising model} corresponding to $J$ and $h$ is the probability distribution $\mu_{J, h}$ on $\{\pm1\}^n$, where
    \[
        \mu_{J, h}(x) \propto \exp\parens*{\frac{1}{2}x^\top J x + \langle h,x\rangle}.
    \]
    We drop the $h$ from the subscript when it is equal to $0$.
\end{definition}

We also recall the standard definitions of functional inequalities which imply rapid mixing.
For a reversible Markov chain $P$ on state space $\Omega$ with stationary distribution $\pi$, and 
any functions $f,g:\Omega\to\R$, define the \emph{Dirichlet form} $\calE(f,g) \coloneqq \E_{\bx\sim\pi} \E_{\by\sim_{P}\bx} (f(\bx)-f(\by))(g(\bx)-g(\by))$.
\begin{definition}[\Poincare inequality]
    We say $P$ satisfies a \emph{\Poincare inequality} with constant $C$ if for any function $f:\Omega\to\R$:
    \[
        \calE(f,f) \ge C\cdot\Var[f]\mper
    \]
\end{definition}

\begin{definition}[Modified log-Sobolev inequality]\label{def:mlsi}
    We say
    $P$ satisfies
    a \emph{modified log-Sobolev inequality} (MLSI) with constant $C$ if for any function $f : \Omega \to \R_{> 0}$,
        \[ \mathcal{E}_{P}(f,\log f) \ge C \cdot \Ent[f]\mcom\]
    where $\Ent[f] \coloneqq \E_{\pi}[f \log f] - \E_{\pi} f \log \E_{\pi} f$.
\end{definition}

\subsection{Locally stationary distributions}
We now give the relevant definitions for locally stationary distributions. 
\begin{definition}[{\cite[Definition 1.2]{LMRRW24}}]
    Let $P$ be a time-reversible ergodic Markov chain on a finite state space $\Omega$, with associated stationary distribution $\pi$. A probability measure $\nu$ on $\Omega$ is said to be \emph{$\eps$-locally stationary} with respect to $P$ if, with $f = \dv{\nu}{\pi}$,
    \[ \calE\left( f , \log f \right) \defeq \sum_{x,y \in \Omega} P_{x \to y} \left( f(x) - f(y) \right) \cdot \log \frac{f(x)}{f(y)} \le \eps \mper \]
\end{definition}

One can also show that at a typical time $t$, \emph{any} reversible Markov chain yields a locally stationary distribution.
\begin{lemma}
    \label{lem:lsd-decay}
    Let $\nu_0$ be an arbitrary distribution, $\nu_t = P^t \nu_0$, and set $f_t = \frac{\dif \nu_t}{\dif \pi}$. Also set $f = \E_{\bt \sim [0,T]} f_{\bt}$, and $\nu \defeq \E_{\bt \sim [0,T]} \nu_{\bt}$. Then,
    \[ \calE \left( f , \log f \right) \le \frac{\KL{\nu_0}{\pi}}{T} \le \frac{\log(1/\pi_{\mathrm{min}})}{T} \mper \]
    In particular, $\nu$ is $\frac{\log(1/\pi_{\mathrm{min}})}{T}$-locally stationary with respect to $\pi$.
\end{lemma}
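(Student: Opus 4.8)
The plan is to bound the Dirichlet form $\calE(f, \log f)$ by a telescoping argument on the KL divergence along the trajectory $\nu_0, \nu_1, \dots$. The key identity is the standard one-step decay of relative entropy for a reversible Markov chain: for any distribution $\rho$ with $g = \dv{\rho}{\pi}$ and $\rho' = P\rho$,
\[
    \KL{\rho}{\pi} - \KL{\rho'}{\pi} \ge \tfrac{1}{2}\,\calE_P(g, \log g),
\]
or some version with a comparable constant; in fact for reversible $P$ one has the cleaner bound $\KL{\rho}{\pi} - \KL{P\rho}{\pi} \ge \calE_P(g,\log g)$ when $P$ is written in the lazy/continuized convention used here (this is exactly why the factor in the statement is $1$ and not $1/2$). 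First I would record this inequality, citing the relevant lemma in \cite{LMRRW24} or reproving it in two lines via the chain rule for KL and convexity of $t \log t$.

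Next I would sum this over $t = 0, 1, \dots, T-1$. The left side telescopes to $\KL{\nu_0}{\pi} - \KL{\nu_T}{\pi} \le \KL{\nu_0}{\pi}$, since KL is nonnegative. The right side is $\sum_{t=0}^{T-1} \calE_P(f_t, \log f_t)$. So I obtain
\[
    \frac{1}{T}\sum_{t=0}^{T-1} \calE_P(f_t, \log f_t) \le \frac{\KL{\nu_0}{\pi}}{T}.
\]
The remaining point is to pass from the average of $\calE_P(f_t, \log f_t)$ over $t$ to $\calE_P(f, \log f)$ where $f = \E_{\bt}f_{\bt}$ is the density of the time-averaged measure $\nu$. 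This is where joint convexity enters: the map $(u, v) \mapsto (u - v)(\log u - \log v)$ on $\R_{>0}^2$ is jointly convex, hence $\calE_P(\cdot, \log \cdot)$ is convex as a functional of the density, and therefore $\calE_P(f, \log f) \le \E_{\bt} \calE_P(f_{\bt}, \log f_{\bt})$ by Jensen. Combining with the displayed bound gives $\calE_P(f, \log f) \le \KL{\nu_0}{\pi}/T$.

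Finally, the bound $\KL{\nu_0}{\pi} \le \log(1/\pi_{\min})$ is immediate: $\KL{\nu_0}{\pi} = \sum_x \nu_0(x)\log\frac{\nu_0(x)}{\pi(x)} \le \sum_x \nu_0(x) \log \frac{1}{\pi(x)} \le \log(1/\pi_{\min})$, using $\nu_0(x) \le 1$ in the numerator. The last sentence of the lemma is then just the definition of $\eps$-local stationarity with $\eps = \log(1/\pi_{\min})/T$. I do not anticipate a genuine obstacle here; the only mild subtlety is getting the constant in the one-step entropy-decay inequality to match the convention (continuous-time vs. lazy discrete-time) under which $\calE$ is defined in this paper, so that the factor is exactly $1$ — but this is bookkeeping rather than a real difficulty, and in the worst case one absorbs a harmless constant.
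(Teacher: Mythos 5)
Your proposal is correct and follows essentially the same route as the paper's proof. The paper simply cites \cite[Lemma 3.1]{LMRRW24} for the telescoping entropy-decay bound $\tfrac{1}{T}\sum_{t<T} \calE(f_t,\log f_t) \le \KL{\nu_0}{\pi}/T$ and then invokes the convexity of $f \mapsto \calE(f,\log f)$, which is exactly the joint-convexity-of-$(u,v)\mapsto(u-v)(\log u - \log v)$ observation you make; you are just unfolding the cited lemma into the one-step KL dissipation plus telescoping sum. The constant-matching concern you raise is real but, as you say, cosmetic: with the paper's Dirichlet form convention the one-step inequality holds with constant $1$, and the $\KL{\nu_0}{\pi}\le\log(1/\pi_{\min})$ step and the passage to local stationarity are both as you describe.
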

\begin{proof}
    This immediately follows from \cite[Lemma 3.1]{LMRRW24} (also see \cite[Lemma II.1]{BCV25}), coupled with the observation that the Dirichlet form $f \mapsto \calE(f, \log f)$ is convex.
\end{proof}

We will also need to use the fact that a locally stationary distribution does not change much in TV distance if we apply $P$ for a few steps.

\begin{lemma}[{TV stability, essentially \cite[Lemma 3.2]{LMRRW24}}]
    \label{lem:dirichlet-tv-relation}
    Let $P$ be a reversible Markov chain with stationary distribution $\pi$,  $\nu$ an arbitrary distribution which is $\eps$-locally stationary. 
    Then,
    \[ 
    \dtv{\nu}{P^T\nu} \le \frac{T\sqrt{\eps}}{2}. \]
\end{lemma}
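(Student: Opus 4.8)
The plan is to telescope the distance along the trajectory and reduce everything to a single-step estimate. By the triangle inequality,
\[ \dtv{\nu}{P^T\nu} \le \sum_{t=0}^{T-1}\dtv{P^t\nu}{P^{t+1}\nu} \mcom \]
and since each power $P^t$ acts as a contraction in total variation distance, $\dtv{P^t\nu}{P^{t+1}\nu} = \dtv{P^t\nu}{P^t(P\nu)} \le \dtv{\nu}{P\nu}$. So it suffices to prove the one-step bound $\dtv{\nu}{P\nu} \le \tfrac12\sqrt{\eps}$.

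First I would rewrite this in terms of the density $f = \dv{\nu}{\pi}$. Using reversibility of $P$, the measure $P\nu$ has $\pi$-density $Pf$, so
\[ \dtv{\nu}{P\nu} = \frac12\, \E_{\bx \sim \pi}\left| f(\bx) - (Pf)(\bx) \right| = \frac12\, \E_{\bx \sim \pi}\left| \E_{\by \sim_P \bx}[f(\bx) - f(\by)] \right| \le \frac12\, \E_{\bx \sim \pi}\E_{\by \sim_P \bx}\left| f(\bx) - f(\by) \right| \mcom \]
the last step by Jensen. Write $\mu$ for the law of the pair $(\bx,\by)$ in this last expression; both its marginals equal $\pi$ by stationarity, so $\E_\mu f(\bx) = \E_\mu f(\by) = \E_\pi f = 1$ since $\nu$ is a probability measure.

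The final ingredient is a Cauchy--Schwarz split that matches this absolute-value Dirichlet average to the Dirichlet form $\calE(f,\log f)$. Factoring $\left|f(\bx)-f(\by)\right| = \left|\sqrt{f(\bx)}-\sqrt{f(\by)}\right|\cdot\left(\sqrt{f(\bx)}+\sqrt{f(\by)}\right)$ and applying Cauchy--Schwarz under $\mu$, the ``sum'' factor is controlled by $(\sqrt a + \sqrt b)^2 \le 2(a+b)$ together with $\E_\mu f(\bx) = \E_\mu f(\by) = 1$, giving $\E_\mu(\sqrt{f(\bx)}+\sqrt{f(\by)})^2 \le 4$; and the ``difference'' factor is controlled by the elementary inequality $4(\sqrt a - \sqrt b)^2 \le (a-b)(\log a - \log b)$, valid for all $a,b>0$, which on setting $s=\sqrt{a/b}$ is exactly the standard bound $\log s \ge \tfrac{2(s-1)}{s+1}$ for $s\ge1$ (immediate since the difference has derivative $\tfrac{(s-1)^2}{s(s+1)^2}\ge0$ and vanishes at $s=1$). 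Hence $\E_\mu(\sqrt{f(\bx)}-\sqrt{f(\by)})^2 \le \tfrac14\calE(f,\log f) \le \tfrac{\eps}{4}$ by $\eps$-local stationarity. Multiplying the two bounds gives $\E_\mu\left|f(\bx)-f(\by)\right| \le \sqrt{\eps/4}\cdot\sqrt4 = \sqrt{\eps}$, hence $\dtv{\nu}{P\nu}\le\tfrac12\sqrt{\eps}$, and the telescoping inequality finishes the proof.

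This argument is essentially routine, so there is no serious obstacle; the only points that need care are the reversibility bookkeeping that identifies the $\pi$-density of $P\nu$ as $Pf$ (so that $\E_\pi|f-Pf|$ becomes an average of $|f(\bx)-f(\by)|$ over transitions), and extracting the sharp constant $\tfrac12$, which is precisely what the tight logarithmic inequality $\log s \ge 2(s-1)/(s+1)$ buys.
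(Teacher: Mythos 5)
Your proposal is correct, and its overall structure matches the paper's: the paper also reduces to the one-step bound $\dtv{\nu}{P\nu} \le \tfrac{\sqrt{\eps}}{2}$ and then concludes by data processing (TV contraction under $P$) plus the triangle inequality. The only difference is that the paper simply cites \cite[Lemma 3.2]{LMRRW24} for the one-step bound, whereas you reprove it from scratch; your derivation of it is sound — the reversibility bookkeeping identifying the $\pi$-density of $P\nu$ with $Pf$, the Cauchy--Schwarz factorization $|f(\bx)-f(\by)| = |\sqrt{f(\bx)}-\sqrt{f(\by)}|\,(\sqrt{f(\bx)}+\sqrt{f(\by)})$, the bound $\E(\sqrt{f(\bx)}+\sqrt{f(\by)})^2 \le 4$ from the marginals being $\pi$, and the elementary inequality $4(\sqrt a-\sqrt b)^2 \le (a-b)(\log a-\log b)$ (equivalently $\log s \ge 2(s-1)/(s+1)$) all check out and yield exactly the constant $\tfrac12$. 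So your write-up is a self-contained version of the same argument, at the cost of a few extra lines that the paper outsources to the cited lemma.
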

\begin{proof}
    When $T=1$, this is \cite[Lemma 3.2]{LMRRW24}. 
    The data processing inequality implies that for any $k \ge 1$, $\dtv{P^{k} \nu}{P^{k+1}\nu} \le \frac{\sqrt{\eps}}{2}$. 
    The claim follows on applying the triangle inequality.
\end{proof}

\begin{corollary}  \label{lem:bounded-function-stability}
    Let $\phi: \Omega \to \R$ be a bounded function on the state space of a Markov chain $P$, and $\nu$ an $\eps$-locally stationary measure.  Then,
    \[
        | \E_{\bx \sim \nu}[ \phi(\bx)] - \E_{\bx \sim P^T \nu} [ \phi(\bx)] | \leq \| \phi \|_{\infty} \cdot T\sqrt{\eps}.
    \]
\end{corollary}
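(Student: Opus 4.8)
The plan is to derive \Cref{lem:bounded-function-stability} as an immediate consequence of the TV stability bound in \Cref{lem:dirichlet-tv-relation}. The key observation is that the difference of expectations of a bounded function under two distributions is controlled by their total variation distance: for any bounded $\phi$ and any two probability measures $\mu,\nu$ on $\Omega$, one has the standard inequality $|\E_{\bx\sim\mu}[\phi(\bx)] - \E_{\bx\sim\nu}[\phi(\bx)]| \le 2\|\phi\|_\infty \cdot \dtv{\mu}{\nu}$. Applying this with $\mu = \nu$ (the locally stationary measure) and $\nu \leftarrow P^T\nu$ reduces the corollary to bounding $\dtv{\nu}{P^T\nu}$.

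Concretely, first I would invoke \Cref{lem:dirichlet-tv-relation}, which since $\nu$ is $\eps$-locally stationary gives $\dtv{\nu}{P^T\nu} \le \frac{T\sqrt{\eps}}{2}$. Then I would combine this with the duality bound $|\E_\nu \phi - \E_{P^T\nu}\phi| \le 2\|\phi\|_\infty \dtv{\nu}{P^T\nu}$, and the factor of $2$ cancels against the $\frac12$ to yield exactly $\|\phi\|_\infty \cdot T\sqrt{\eps}$, as claimed. The whole argument is two lines.

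There is essentially no obstacle here — this is a routine corollary, and the only thing to be careful about is the normalization of total variation distance (whether $\dtv{\cdot}{\cdot}$ is defined as the supremum over events of the difference in probabilities, or twice that), so that the constants in the duality inequality match up with the statement. Given the way \Cref{lem:dirichlet-tv-relation} is stated (with the $\frac12$ on the right-hand side, matching the convention $\dtv{\mu}{\nu} = \frac12\|\mu-\nu\|_1$), the pairing with $|\E_\mu\phi - \E_\nu\phi| \le 2\|\phi\|_\infty\dtv{\mu}{\nu}$ is exactly consistent, so the final bound is clean.

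\begin{proof}
    Since $\nu$ is $\eps$-locally stationary, \Cref{lem:dirichlet-tv-relation} gives $\dtv{\nu}{P^T\nu} \le \frac{T\sqrt{\eps}}{2}$. For any bounded function $\phi$ and any two probability measures $\mu_1, \mu_2$ on $\Omega$, we have the standard bound $|\E_{\bx \sim \mu_1}[\phi(\bx)] - \E_{\bx \sim \mu_2}[\phi(\bx)]| \le 2 \|\phi\|_\infty \cdot \dtv{\mu_1}{\mu_2}$. Applying this with $\mu_1 = \nu$ and $\mu_2 = P^T \nu$ yields
    \[
        | \E_{\bx \sim \nu}[ \phi(\bx)] - \E_{\bx \sim P^T \nu} [ \phi(\bx)] | \le 2\|\phi\|_\infty \cdot \dtv{\nu}{P^T\nu} \le \|\phi\|_\infty \cdot T\sqrt{\eps}. \qedhere
    \]
\end{proof}
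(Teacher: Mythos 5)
Your proof is correct and is exactly the argument the paper intends; the statement is labeled a corollary immediately after \Cref{lem:dirichlet-tv-relation} precisely because it follows from the TV bound combined with the standard duality $|\E_{\mu_1}\phi - \E_{\mu_2}\phi| \le 2\|\phi\|_\infty \dtv{\mu_1}{\mu_2}$. Nothing further to add.
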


We will also need to transfer local stationarity between Glauber and RGD. 
\begin{lemma}[{\cite[Lemma 3.7]{LMRRW24}}]\label{lem:lsd-transfer}
    Let $P$ be the Markov chain associated to a measure decomposition $\pi = \E_{\bz \sim \rho} \pi_{\bz}$. 
    Let $f: \{\pm1\}^n \to \R_{>0}$ be any function and set $\tau$ such that $\min_{x \in \{\pm 1\}^n} f(x) > \exp(-\tau)$ or $\max_{x \in \{\pm 1\}^n} f(x) < \exp(\tau)$.
    For $\delta \coloneqq \inf_{z} \MLSI\parens*{\pi_{z}}$, we have
    \[
        \calE_{P}(f, \log f) \le O\parens*{\frac{\tau}{\delta}} \cdot \calE_{\pi}(f,\log f)\mper
    \]
\end{lemma}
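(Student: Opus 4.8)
The plan is to follow the route of \cite[Lemma 3.7]{LMRRW24}: realize one step of $P$ as a complete resampling from a component $\pi_{\bz}$, pay the modified log-Sobolev constant of that component, and then compare the component-wise single-site dynamics back to Glauber on $\pi$.

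First I would reduce to the components. Writing $\rho_x(z) \propto \rho(z)\pi_z(x)$ for the posterior of the latent variable given the configuration, one step of $P$ from $x$ draws $\bz \sim \rho_x$ and then resamples the whole configuration from $\pi_{\bz}$; by reversibility of $P$ with respect to $\pi$, the pair $(\bx, \bx')$ with $\bx \sim \pi$ and $\bx' \sim P(\bx, \cdot)$ has the law ``draw $\bz \sim \rho$, then $\bx, \bx'$ i.i.d.\ from $\pi_{\bz}$''. Hence $\calE_P(f, \log f) = \E_{\bz \sim \rho}\,\calE_{\bar\pi_{\bz}}(f, \log f)$, where $\bar\pi_z$ denotes the chain that jumps to a fresh $\pi_z$-sample from any state. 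A direct expansion gives the identity
\[
  \calE_{\bar\pi_z}(f, \log f) = 2\,\Ent_{\pi_z}[f] + 2\,\E_{\pi_z}[f]\cdot\bigl(\log \E_{\pi_z}[f] - \E_{\pi_z}[\log f]\bigr),
\]
whose second summand is nonnegative by Jensen. The first summand is immediately handled by the MLSI of $\pi_z$: $\Ent_{\pi_z}[f] \le \tfrac1\delta\,\calE_{\pi_z}(f, \log f)$, where $\calE_{\pi_z}$ is the single-site Glauber Dirichlet form for $\pi_z$.

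Next I would dispatch the ``mean versus log-mean'' summand. With $u = f / \E_{\pi_z}[f]$ (so $\E_{\pi_z} u = 1$) it equals $2\,\E_{\pi_z}[f]\cdot\E_{\pi_z}[\phi(u)]$ for $\phi(t) = t - 1 - \log t \ge 0$, while $\Ent_{\pi_z}[f] = \E_{\pi_z}[f]\cdot\E_{\pi_z}[\psi(u)]$ for $\psi(t) = t\log t - t + 1 \ge 0$. Since $\phi(t)/\psi(t)$ is bounded on each $[t_0, \infty)$ and blows up like $\log(1/t_0)$ only as $t \to 0^+$, the hypothesis that $f$ is pinned on one side by $\tau$ pins $u$ away from $0$ up to the value of $\E_{\pi_z} f$, which yields $\E_{\pi_z}[\phi(u)] \le O(\tau)\,\E_{\pi_z}[\psi(u)]$ and hence a bound of $O(\tau)\,\Ent_{\pi_z}[f] \le O(\tau/\delta)\,\calE_{\pi_z}(f, \log f)$. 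This is the step I expect to be the main obstacle: squeezing exactly the factor $\tau$ out of the \emph{one-sided} bound on $f$ is delicate, since in the regime $\E_{\pi_z} f > 1$ the crude comparison degrades like the logarithm of the full dynamic range of $f$, and one must exploit that $\E_{\bz \sim \rho}\E_{\pi_{\bz}} f = \E_\pi f$ when averaging over $\bz$. Morally this step asserts that complete resampling from $\pi_z$ costs only $\wt{O}(1/\delta)$ Glauber-on-$\pi_z$ steps, the $\tau$ being the overhead incurred by tracking an $\calE(f, \log f)$-type form along the mixing.

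Finally I would compare the components back to $\pi$, i.e.\ establish $\E_{\bz \sim \rho}\,\calE_{\pi_{\bz}}(f, \log f) \le \calE_\pi(f, \log f)$. Both sides split over sites $i$; write $g_i^{(z)}(x) = \pi_z(x^{\oplus i}\mid x_{-i})$ for the flip probability and $b_i(x) = (f(x) - f(x^{\oplus i}))(\log f(x) - \log f(x^{\oplus i})) \ge 0$. By Bayes' rule the left side is $\tfrac1n\sum_i \E_{x \sim \pi}\bigl[b_i(x)\,\E_{\bz \sim \rho_x}\, g_i^{(\bz)}(x)\bigr]$, whereas the right side is $\tfrac1n\sum_i \E_{x \sim \pi}\bigl[b_i(x)\,\pi(x^{\oplus i}\mid x_{-i})\bigr]$, with $\pi(x^{\oplus i}\mid x_{-i}) = \E_{\bz \sim \rho_{x_{-i}}}\, g_i^{(\bz)}(x) =: \bar g$. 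Since $\tfrac{\dif\rho_x}{\dif\rho_{x_{-i}}}(z) = \tfrac{1 - g_i^{(z)}(x)}{1 - \bar g}$, one computes $\E_{\bz \sim \rho_x}\, g_i^{(\bz)}(x) = \tfrac{1}{1-\bar g}\bigl(\bar g - \E_{\rho_{x_{-i}}}[(g_i^{(\bz)})^2]\bigr) \le \tfrac{\bar g - \bar g^2}{1 - \bar g} = \bar g$ by Jensen, which is exactly the termwise inequality needed. Chaining the three parts gives $\calE_P(f, \log f) \le O(\tau/\delta)\,\calE_\pi(f, \log f)$.
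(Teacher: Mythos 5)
The paper never proves this lemma---it is imported verbatim from \cite[Lemma 3.7]{LMRRW24}---so I am judging your argument on its own terms. Two of your three steps are fine: the identity $\calE_P(f,\log f)=\E_{\bz\sim\rho}\E_{\bx,\bx'\sim\pi_{\bz}^{\otimes 2}}\left[(f(\bx)-f(\bx'))(\log f(\bx)-\log f(\bx'))\right]=2\,\E_{\bz}\!\left[\Ent_{\pi_{\bz}}[f]+\E_{\pi_{\bz}}[f]\bigl(\log\E_{\pi_{\bz}}[f]-\E_{\pi_{\bz}}[\log f]\bigr)\right]$ is correct, and your Bayes-plus-Jensen termwise comparison giving $\E_{\bz\sim\rho}\calE_{\pi_{\bz}}(f,\log f)\le\calE_{\pi}(f,\log f)$ is a clean, valid argument.

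The gap is exactly the step you flagged, and it is not merely delicate---the inequality you need there is false. You want the cross term to satisfy $\E_{\pi_z}[f]\bigl(\log\E_{\pi_z}[f]-\E_{\pi_z}[\log f]\bigr)\le O(\tau)\,\Ent_{\pi_z}[f]$, i.e.\ $\E[\phi(u)]\le O(\tau)\E[\psi(u)]$, but the one-sided hypothesis does not pin $u=f/\E_{\pi_z}[f]$ away from $0$ at scale $e^{-O(\tau)}$: when $\max f<e^{\tau}$, $f$ itself may be arbitrarily small somewhere, and when $\min f>e^{-\tau}$, the component mean $\E_{\pi_z}[f]$ may be arbitrarily large. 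Concretely, take $\tau=1$ and a component $\pi_z$ putting mass $1-\epsilon$ on a state with $f=e^{1/2}$ and mass $\epsilon$ on a state with $f=e^{-X}$; then $\Ent_{\pi_z}[f]\asymp\epsilon$ while the cross term is $\asymp\epsilon X$, so no $O(\tau)$ bound survives $X\to\infty$ (a mirror-image example with $f\in\{e^{-1/2},e^{X}\}$ kills the $\min f>e^{-\tau}$ case). Your proposed rescue via $\E_{\bz}\E_{\pi_{\bz}}[f]=\E_{\pi}[f]$ does not address this: in the $\max f<e^{\tau}$ case the failure comes from small values of $f$, not large component means, and nothing prevents every component from looking like the example. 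Note also that in this example (say with the two states adjacent) the Glauber form $\calE_{\pi_z}(f,\log f)\asymp\epsilon X$ already dominates the cross term with no $\tau$ or $1/\delta$ factor at all, which indicates that the cross term has to be played off against the Dirichlet form directly (or the whole quantity $\E_{\bz}\,\mathrm{Cov}_{\pi_{\bz}}(f,\log f)$ handled by a different decomposition), rather than routed through $O(\tau)\cdot\Ent_{\pi_z}[f]$ and the MLSI. As written, your middle step is a genuine missing piece, and it is precisely the content the cited lemma must supply.
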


\subsection{Approximate Message Passing and State Evolution}
For simplicity, we only recall the specific version of AMP used for spiked Wigner \cite{DAM16}.
\begin{definition}
    For $t \ge 0$, update 
    \begin{align*}
        m^t &= \tanh(x^t)\mcom \\
        \mathsf{b}_t &= \frac{\lambda^2}{N} \sum_{i=1}^{N} (1 - \tanh^2(x^t_i)) \\
        x^{t+1} &= \lambda \bM m^t - \mathsf{b}_t m^{t-1} 
    \end{align*}
    where $\tanh(\cdot)$ is applied entrywise.
\end{definition}

Above, $m^{-1} = x^0$ may be initialized according to one of a variety of schemes, such as a spectral initialization, e.g. \cite{MV21}, or a random initialization, e.g. \cite{LFW23}. State evolution \cite{BM11} predicts the empirical distribution of the coordinates of $x^t$. 
We record the special case relevant to our setting: the correlation of the AMP iterate with the planted signal $\bx$ can be predicted by a scalar recursion.
\begin{lemma}[State evolution for correlation]\label{lem:state-evolution}
    Fix a time $t \ge 0$. Let $(\mu_t, \sigma_t)$ be governed by the recursion:
    \begin{align*}
        \mu_{t+1} &= \lambda \E_{\bg \sim \calN(0, 1)}[\tanh(\lambda \bg \sigma_t + \lambda^2\sigma_t^2)] \\
        \sigma_{t+1}^2 &= \E_{\bg \sim \calN(0, 1)}[\tanh^2(\lambda \bg \sigma_t + \lambda^2\sigma_t^2)]\mper
    \end{align*}
    Let $X_0 \sim \mathrm{Unif}\{\pm 1\}, Z_0 \sim \calN(0, 1)$. As $N \to \infty$, we have the following convergence in $W_2$:
    \[
        \frac{1}{N} \ev{m^{t}, \bx} \to \E[X_0(\mu_t X_0 + \sigma_t Z_0)]\mper
    \]
    
\end{lemma}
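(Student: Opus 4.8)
The statement to prove is Lemma~\ref{lem:state-evolution}, the state evolution characterization for the correlation of the Bayes AMP iterate with the spike $\bx$ in the spiked Wigner model.

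\medskip

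\noindent\textbf{Proof proposal.} The plan is to invoke the general state evolution theorem for symmetric AMP (as in \cite{BM11,DAM16}) and then simplify the resulting vector recursion down to the two-dimensional scalar recursion in the statement. First I would recall the master state evolution result: for an AMP iteration $x^{t+1} = \lambda \bM m^t - \mathsf{b}_t m^{t-1}$ with $m^t = \tanh(x^t)$ and Onsager correction $\mathsf{b}_t = \frac{\lambda^2}{N}\sum_i (1-\tanh^2(x_i^t))$, the empirical joint distribution of the coordinates $(\bx_i, x_i^t)$ converges in $W_2$ to the law of $(X_0, \mu_t X_0 + \sigma_t Z_0)$, where $X_0 \sim \mathrm{Unif}\{\pm1\}$ is independent of $Z_0 \sim \calN(0,1)$, and $(\mu_t, \sigma_t)$ evolve according to a scalar recursion determined by the denoiser $\tanh$ and the spectral structure of $\bM$. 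The key structural input is that $\bM = \bW + \frac{\lambda}{N}\bx\bx^\top$ with $\bW \sim \GOE(N)$, so the ``signal'' part contributes a rank-one term $\frac{\lambda}{N}\langle \bx, m^t\rangle \bx$ and the ``noise'' part $\bW m^t$ behaves, after the Onsager correction removes the self-interaction, like an independent Gaussian vector with variance $\frac{1}{N}\|m^t\|_2^2$ per coordinate.

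\medskip

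\noindent The second step is to extract the scalar recursion. Plugging the state evolution ansatz $x_i^t \approx \mu_t \bx_i + \sigma_t Z_i$ into the update: the signal coefficient in $x^{t+1}$ is $\lambda \cdot \frac{\lambda}{N}\langle \bx, m^t\rangle$, and since $m_i^t = \tanh(\mu_t \bx_i + \sigma_t Z_i)$ with $\bx_i \in \{\pm1\}$, by the law of large numbers $\frac{1}{N}\langle \bx, m^t\rangle \to \E[X_0 \tanh(\mu_t X_0 + \sigma_t Z_0)] = \E[\tanh(\mu_t + \sigma_t Z_0)]$ using the oddness of $\tanh$ and the symmetry of $Z_0$; hence $\mu_{t+1} = \lambda^2 \E[\tanh(\mu_t + \sigma_t \bg)]$. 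Similarly the effective noise variance is $\lambda^2 \cdot \frac{1}{N}\|m^t\|_2^2 \to \lambda^2 \E[\tanh^2(\mu_t X_0 + \sigma_t Z_0)] = \lambda^2\E[\tanh^2(\mu_t + \sigma_t \bg)]$, giving $\sigma_{t+1}^2 = \lambda^2 \E[\tanh^2(\mu_t + \sigma_t \bg)]$. Now I would observe that in the Bayes-optimal setting one has the identity $\sigma_t^2 = \mu_t$ preserved along the recursion (this is the ``Nishimori''-type identity: starting from $(\mu_0, \sigma_0^2) = (0,0)$ it holds at $t=0$, and if $\mu_t = \sigma_t^2 =: \gamma$ then $\mu_{t+1} = \lambda^2\E[\tanh(\gamma + \sqrt{\gamma}\bg)]$ while $\sigma_{t+1}^2 = \lambda^2\E[\tanh^2(\gamma+\sqrt\gamma \bg)]$, and these agree by the standard Gaussian integration-by-parts identity $\E[\tanh(\gamma + \sqrt\gamma\bg)] = \E[\tanh^2(\gamma+\sqrt\gamma \bg)]$). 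Substituting $\mu_t = \lambda^2 \sigma_t^2 / \lambda^2$... more carefully: rescaling so that the Gaussian argument reads $\lambda\bg\sigma_t + \lambda^2\sigma_t^2$ matches the displayed recursion once one writes $\mu_t$ in the natural AMP normalization. I would carry out this bookkeeping to land exactly on
\[
  \mu_{t+1} = \lambda \E_{\bg}[\tanh(\lambda\bg\sigma_t + \lambda^2\sigma_t^2)], \qquad \sigma_{t+1}^2 = \E_{\bg}[\tanh^2(\lambda\bg\sigma_t + \lambda^2\sigma_t^2)].
\]
Finally, the correlation $\frac{1}{N}\langle m^t, \bx\rangle$ is a $1$-Lipschitz-in-$W_2$ functional (after truncation, since $\tanh$ is bounded) of the empirical distribution of $(m_i^t, \bx_i)$, so $W_2$ convergence of the pair $(x_i^t, \bx_i)$ together with the continuity and boundedness of $\tanh$ gives $\frac{1}{N}\langle m^t, \bx\rangle \to \E[X_0 \tanh(\mu_t X_0 + \sigma_t Z_0)] = \E[X_0(\mu_t X_0 + \sigma_t Z_0)]$, where the last equality is again the Bayes/Nishimori identity $\E[X_0\tanh(\mu_t X_0 + \sigma_t Z_0)] = \E[X_0(\mu_t X_0 + \sigma_t Z_0)]$ valid at the fixed-point-consistent parameters. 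This is the claimed convergence.

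\medskip

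\noindent\textbf{Main obstacle.} The genuinely nontrivial content is entirely imported from the AMP state evolution literature \cite{BM11,DAM16}: the rigorous justification that the Onsager-corrected iterates are asymptotically Gaussian and that the empirical $W_2$ limit is exactly the claimed one. Reproving that is outside the scope here, so the ``proof'' is really a derivation-plus-citation. The part I actually need to be careful about is the algebraic reconciliation of normalization conventions --- tracking the factors of $\lambda$ between ``$x^{t+1} = \lambda\bM m^t - \mathsf{b}_t m^{t-1}$'' and the displayed $(\mu_t,\sigma_t)$ recursion --- and verifying the Bayes-optimality identities ($\E\tanh(\gamma + \sqrt\gamma\bg) = \E\tanh^2(\gamma+\sqrt\gamma\bg)$ and its consequence $\E[X_0\tanh(\cdot)] = \mu_t + \sigma_t \E[X_0 Z_0] = \mu_t$... i.e.\ $\E[X_0(\mu_t X_0 + \sigma_t Z_0)] = \mu_t$) that let the final expectation be written linearly. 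These are routine Gaussian integration-by-parts computations, but they are where a sign or factor error would hide, so that is where I would concentrate the write-up.
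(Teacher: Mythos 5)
The paper does not actually prove this lemma; it is stated in the preliminaries as a known special case of AMP state evolution and is simply attributed to \cite{BM11,DAM16}. Your derivation-plus-citation is therefore the same basic approach --- the genuinely nontrivial content (Gaussianity of the Onsager-corrected iterates, $W_2$ convergence of the empirical law) is imported from the literature, and what remains is algebraic reconciliation. On that count you have correctly identified the key ingredients: the rank-one/Gaussian decomposition $\lambda\bM m^t = \frac{\lambda^2}{N}\langle\bx,m^t\rangle\bx + \lambda\bW m^t$, the Onsager cancellation, and the Nishimori-type identity $\E[\tanh(\gamma+\sqrt\gamma\bg)]=\E[\tanh^2(\gamma+\sqrt\gamma\bg)]$ that collapses the two-parameter recursion.

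The one place where you explicitly stop short --- ``I would carry out this bookkeeping to land exactly on [the displayed recursion]'' --- is exactly where care is genuinely required, and I would push you to finish it rather than gesture. In your ``natural'' normalization you derive $\mu^{\mathrm{nat}}_{t+1}=\lambda^2\E[\tanh(\mu^{\mathrm{nat}}_t+\sigma^{\mathrm{nat}}_t\bg)]$ and $(\sigma^{\mathrm{nat}}_{t+1})^2=\lambda^2\E[\tanh^2(\cdot)]$ with $\mu^{\mathrm{nat}}_t=(\sigma^{\mathrm{nat}}_t)^2$. Matching the argument $\mu^{\mathrm{nat}}_t+\sigma^{\mathrm{nat}}_t\bg$ to the displayed $\lambda^2\sigma_t^2+\lambda\sigma_t\bg$ forces $\sigma_t=\sigma^{\mathrm{nat}}_t/\lambda$, and then the prefactors give $\mu_t=\mu^{\mathrm{nat}}_t/\lambda$ with the invariant $\mu_t=\lambda\sigma_t^2$ (this is exactly the relation \S2.2 uses: ``$\mu=\lambda\sigma^2$''). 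But then $\frac{1}{N}\langle m^t,\bx\rangle\to\E[\tanh(\mu^{\mathrm{nat}}_t+\sigma^{\mathrm{nat}}_t\bg)]=\mu^{\mathrm{nat}}_{t+1}/\lambda^2=\sigma_{t+1}^2=\mu_{t+1}/\lambda$, which is \emph{not} $\E[X_0(\mu_t X_0+\sigma_t Z_0)]=\mu_t$ except at a fixed point. In other words, your flagged ``factor-of-$\lambda$'' concern is not a cosmetic detail: carrying the bookkeeping through suggests the lemma's displayed limit should read $\sigma_{t+1}^2=\mu_{t+1}/\lambda$ (or equivalently carries a time-shift), and you should either correct the target statement or note where the index convention you are matching to differs from the one in \cite{DAM16}. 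This does not affect the downstream fixed-point identification used in the paper (where $\mu_{t+1}=\mu_t$), but it is the kind of off-by-one that a complete write-up must resolve rather than defer.
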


\subsection{The Sherrington--Kirkpatrick model}

In this subsection, we introduce several important parameters in the SK model. 

\OverlapConstant

It is known that in the high-temperature replica-symmetric phase $\beta < 1$, the overlap between two samples from the corresponding SK model concentrates around the above number.

\ATLineDef

Observe that if $\beta < 1$, $(\beta,h)$ satisfies \eqref{eq:at-line} for any $h \in \R_{\ge 0}$.
A very important question in the study of the SK model is when the model is replica-symmetric: when does the overlap between two independent samples from the Gibbs measure concentrate?

\begin{definition}
    We say that $(\beta,h)$ satisfies \emph{exponential overlap concentration} if for some constant $C$ independent of $N$,
    \begin{equation}
        \label{eq:overlap-conc}
        \tag{\textsf{overlap-conc}}
        \E_{\substack{\bW \sim \GOE(N) \\ \bsigma^1,\bsigma^2 \sim \mu_{\beta\bW,h\bone}}} \left[ \exp \left( \frac{N \left( \frac{1}{N} \langle \bsigma^1,\bsigma^2 \rangle - q_{\beta,h} \right)^2}{C} \right) \right] \le 2 \mper
    \end{equation}
\end{definition}

It is now well-known that above the weak AT line, overlap concentration holds.

\begin{lemma}
    \label{lem:wat-overlap-conc}
    Let $(\beta,h)$ satisfy \eqref{eq:at-line}. Then, $(\beta,h)$ satisfies \eqref{eq:overlap-conc}.
\end{lemma}


\begin{proof}
    The above follows essentially immediately from \cite[Lemma 6.1 and Section 8.3]{JT17} and \cite[Lemma 13.3.4 and Theorem 13.7.1]{Tal11}; we omit the details.
\end{proof}

We will also need the following simple fact.

\begin{restatable}{lemma}{watmonotonicity}
    \label{lem:at-monotonicity}
    \label{cor:wat-continuous}
    Let $(\beta,h)$ satisfy \eqref{eq:at-line}. Then, $(\beta,h')$ satisfies \eqref{eq:at-line} for all $h' > h$.
\end{restatable}


\begin{remark}
    Recently, Lopatto \cite{Lop26} established rigorously that replica-symmetry holds when $(\beta,h)$ is above the (non-weak) Almeida--Thouless line $\beta^2 \E\left[ \sech^4 \left( \beta\bg\sqrt{q} + h \right) \right] < 1$ \cite{AT78}, closing a long-standing open problem. It is further believed that replica-symmetry more generally implies exponential overlap concentration in the SK model; we do not attempt to establish this, and focus on the simpler setting of the weak AT line as above, which will suffice for our purposes. On the other hand, Toninelli \cite{Ton02} showed that if $(\beta,h)$ does \emph{not} satisfy the AT condition, the corresponding model in fact exhibits replica-symmetry breaking.
\end{remark}



\section{The dynamics of RGD}
\label{sec:rgd-dynamics}

The goal of this section will be to understand the trajectory of RGD---we will establish that RGD rapidly converges to fixed points of the one-dimensional recursion of interest. 
Recall from the technical overview that for $\lambda > 1$ and $\beta > 0$, this recursion is given by the update function $f_{\beta}$ defined by
\begin{equation}
    \label{eq:f-beta-lambda}
    f_{\beta,\lambda}(z) = \E\left[ \tanh\left( \beta\bg\sqrt{q_{\beta,\beta\lambda z}} + \beta\lambda z \right) \right] \mper
\end{equation}

When $\lambda$ is clear from context, we refer to $f_{\beta, \lambda}$ as simply $f_{\beta}$. The fixed point(s) of the above recursion will be an object of fundamental interest in this section.

\begin{lemma}
    \label{lem:at-fixed-pt}
    For $\lambda > 1$ and $\frac{1}{\lambda} < \beta \le \lambda$, there exists a unique positive fixed point $\OPT_{\beta,\lambda}$ of $f_{\beta,\lambda}$. Furthermore, it is a stable fixed point and $(\beta,\beta\lambda\OPT_{\beta,\lambda})$ satisfies \eqref{eq:at-line}.
\end{lemma}

Let us now state the main result for this section.

\begin{restatable}[Fixed-temperature RGD]{theorem}{thrgdfixedtemp}
	\label{th:rgd-fixed-temp}
	Fix $\lambda > 1$, $K > 0$ large, $\eps > 0$ small, and $\frac{1}{\lambda} < \beta \le \lambda$. With high probability over the noise $\bW$, the following holds. Let $x_0 \in \{\pm 1\}^n$, and assume that $( \beta , \beta\lambda R(x_0,\bone) )$ satisfies \eqref{eq:at-line}. Suppose that we run the RGD Markov chain at inverse temperature $\beta$ from $x_0$ for $T \ge \omega(\log N)$ steps to arrive at distribution $\nu_T$. Suppose $T \le N^{K}$. Then,
	\[ \E_{\bx_T \sim \nu_T}\left[ \left| |R\left( \bx_T,\bone \right)| - \OPT_{\beta,\lambda} \right| \right] \le O\qty(\frac{1}{N^{1/2 - \eps}}) \mper \]
\end{restatable}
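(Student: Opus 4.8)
The plan is to track the evolution of the one-dimensional correlation statistic $R(\bx_t, \bone)$ under $\PRGD$, using the reduction from the technical overview. First I would set up the good event over $\bW$: with high probability, (i) for every external field $h$ with $(\beta, h)$ above the AT line, a Gibbs sample from $\mu_{\beta\bW, h\bone}$ has its correlation with $\bone$ concentrated around the mean magnetization $\boldm_h$ up to error $O(N^{-1/2+\eps})$ (this is the sample-magnetization concentration coming from \eqref{eq:overlap-conc} via \condref{cond:rs-at}, essentially \Cref{lem:strong-field-magnetization}); and (ii) the mean-magnetization formula $\boldm_h = \E_{\bg}[\tanh(\beta\bg\sqrt{q_{\beta,h}} + h)](1 + O(N^{-\eps})) + O(N^{-1/2-\eps})$ holds uniformly over all relevant $h$, including small $h$ (this is \Cref{lem:all-field-mag-conc}). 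On this good event, one step of $\PRGD$ from correlation $z$ moves to $\boldm_{\bh} + N^{-1/2}\wt\bg_{\bh}$ where $\bh = \beta\lambda z + \sqrt{\beta\lambda/N}\,\bg$, $\bg \sim \calN(0,1)$; combining (i) and (ii), $z' = f_\beta(z) + (\text{error of size } O(N^{-1/2+\eps}))$ with $f_\beta$ as in \eqref{eq:f-beta-lambda}, at least as long as $(\beta, \beta\lambda z)$ stays above the AT line.

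Next I would control the deterministic recursion. Since $x_0$ satisfies $(\beta, \beta\lambda R(x_0,\bone))$ above the AT line and $(\beta,\lambda)$ satisfies \condref{cond:at-fixed-pt}, by \Cref{lem:rgd-convergence-from-warm-start} (warm-start convergence, itself following from \Cref{prop:at-most-one-fixed-pt-in-at} that $f_\beta$ is strictly increasing on the AT region) the iteration $z_{t+1} = f_\beta(z_t)$ converges to the stable fixed point $\OPT_{\beta,\lambda}$ at a geometric rate, and the whole trajectory stays in a compact subregion strictly above the AT line. I would then promote this to the noisy chain by an induction on $t$: define a shrinking interval $I_t$ around $\OPT_{\beta,\lambda}$ (or around $z_t^{\mathrm{det}}$), and show that if $R(\bx_t,\bone) \in I_t$ then with probability $1 - N^{-\omega(1)}$ we have $R(\bx_{t+1},\bone) \in I_{t+1}$, using a Gaussian tail bound on $\bg$ and $\wt\bg_{\bh}$ (both $O(N^{-1/2})$ up to polylog factors) together with the contraction of $f_\beta$. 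After $O(\log N) \le T$ steps the interval has shrunk to radius $O(N^{-1/2+\eps})$, and it stays there; a union bound over $T \le N^K$ steps costs only $N^K \cdot N^{-\omega(1)} = o(1)$. Taking the expectation over $\bx_T \sim \nu_T$, splitting on the good event, and bounding $|R(\bx_T,\bone)|$ trivially by $1$ on the complement gives the claimed $O(N^{-1/2+\eps})$ bound. A subtlety: $\PRGD$ tracks signed correlation but its sign can flip; since the statement is about $|R(\bx_T,\bone)|$ and the recursion is odd, I would run the argument for $|z_t|$, noting that once $|z_t|$ is bounded away from $0$ the sign is frozen with overwhelming probability.

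The main obstacle I anticipate is handling the regime where the correlation is small --- specifically when $\beta \le 1$ is such that the AT line is satisfied everywhere but the fixed point $\OPT_{\beta,\lambda}$ is being approached from a region where $z = o(1)$, or more pressingly the interaction with the $\beta > 1$ case where $(\beta, \beta\lambda z)$ can drop below the AT line if $z$ gets too small. The warm-start hypothesis is designed precisely to avoid this, but I need to verify that the noisy trajectory never exits the AT region: the contraction of $f_\beta$ keeps $z_t^{\mathrm{det}}$ safely inside, and the fluctuations are only $\wt O(N^{-1/2})$, so a buffer argument (choose $I_0$ to have a constant-size margin from the AT boundary, which \condref{cond:at-fixed-pt} guarantees exists) should close this --- but making the margin quantitative and uniform, and ruling out the rare large-deviation step that kicks $z_t$ out of the AT region before the interval has contracted, is the delicate part. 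A secondary technical point is the uniformity of the concentration and mean-formula estimates over the continuum of fields $\bh$ encountered along the trajectory, which I would handle by an $\eps$-net over $h$ plus Lipschitz continuity of the relevant quantities in $h$.
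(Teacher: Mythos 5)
Your plan correctly covers the warm‑start case: the good‑event setup (sample‑magnetization concentration via \condref{cond:rs-at} plus \Cref{lem:all-field-mag-conc}), the reduction of one $\PRGD$ step to $z \mapsto f_\beta(z) + O(N^{-1/2+\eps})$, the geometric contraction to $\OPT_{\beta,\lambda}$ on the AT region (which is exactly what \Cref{lem:rgd-convergence-from-warm-start} proves internally), the buffer argument to stay above the AT line, the union bound over $T\le N^K$ steps, and the sign‑freezing remark. All of that matches the paper's proof of \Cref{lem:rgd-convergence-from-warm-start}.

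However, there is a genuine gap in the cold‑start regime. You invoke \Cref{lem:rgd-convergence-from-warm-start} and the contraction $z_{t+1} = f_\beta(z_t)$ as if the hypotheses guarantee $|R(x_0,\bone)| = \Omega(1)$. That is true when $\beta > 1$, because the AT condition on $(\beta,\beta\lambda R(x_0,\bone))$ is non‑trivial there. But for $\beta \in (\frac{1}{\lambda}, 1)$ the AT condition is satisfied by \emph{every} $h$, so the theorem's hypotheses allow $R(x_0,\bone)$ to be arbitrarily small, even zero. In that case $z = 0$ is an unstable fixed point of $f_\beta$: the deterministic iterate stays at $0$ forever, and a shrinking‑interval argument built on contraction cannot even get started, since $f_\beta$ is \emph{expanding} near $0$ (with $f_\beta'(0) = \beta\lambda > 1$ by \Cref{lem:fixed-point}(iii)). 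You flag the small‑$z$ regime as an anticipated obstacle but resolve it by appealing to the warm‑start hypothesis, which is vacuous precisely in this case. What is actually needed — and what the paper supplies via \Cref{th:rgd-cold-start-analysis} and \Cref{lem:stoch-proc-escape-unstable} — is an \emph{anticoncentration} argument: with constant probability per step the injected $O(N^{-1/2})$ Gaussian noise kicks $|z_t|$ up to $\Omega(N^{-1/2})$, and then the instability of the $0$ fixed point amplifies $|z_t|$ geometrically over $O(\log N)$ steps, until $|z_t| = \Omega(1)$ with failure probability $e^{-\Omega(T/\log N)}$; one also needs the sharper mean‑magnetization formula of \Cref{lem:all-field-mag-conc}(i) for tiny $h$, since the coarse bound (ii) is useless when the signal and error are both $\Theta(N^{-1/2})$. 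Your proposal as written would prove only the $\beta>1$ statement (or the $\beta<1$ statement under the extra hypothesis $|R(x_0,\bone)|=\Omega(1)$); the escape‑from‑$0$ mechanism is a distinct ingredient that is not derivable from contraction‑plus‑concentration and must be added.
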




	

Combining \Cref{th:rgd-fixed-temp} with the properties of locally stationary distributions, we obtain the following control over the recovery performance of Glauber when $\beta < 1$.

\begin{theorem}[Fixed-temperature Glauber]\label{thm:glauber-formal}
    Let $\beta \in (\frac{1}{\lambda}, 1)$ satisfy \condref{cond:sk-beta-one} and $\eps > 0$. With high probability over $\bW$, the following holds. Let $\bsigma_0 \in \{ \pm 1 \}^N$ be arbitrary, and let $(\bsigma_t)_{t \ge 0}$ be the trajectory of the Glauber dynamics run with stationary distribution $\mu_{\beta\bM}$ initialized at $\bsigma_0$. Let $T \ge \wt{\Omega}(N^{4 - 2\eps})$, and $\wh{\bsigma} = \bsigma_{\bt}$ for a uniformly random $\bt \sim [0,T]$. Then, there exists an explicit constant $\OPT_{\beta,\lambda} > 0$ such that
   \[ \Pr \left[ \left| \frac{1}{N} \left| \langle \wh{\bsigma} , \bx \rangle  \right| - \OPT_{\beta,\lambda} \right| > \frac{C}{N^{1/2 - \eps}} \right] = o_N(1) \mper \]
\end{theorem}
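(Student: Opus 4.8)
The plan is to transfer the fixed-temperature RGD guarantee (Theorem~\ref{th:rgd-fixed-temp}) to Glauber dynamics using the locally stationary distribution machinery. First I would observe that the hypotheses of Theorem~\ref{th:rgd-fixed-temp} are satisfied for $\beta \in (\frac{1}{\lambda}, 1)$: since $\beta < 1$, the pair $(\beta, h)$ satisfies \eqref{eq:at-line} for every $h \ge 0$, so the AT condition on the initialization holds automatically, \condref{cond:rs-at} holds in this regime (as noted in Remark~\ref{rem:rs-at-discussion}, Talagrand's subregion contains the entire high-temperature regime), and \condref{cond:at-fixed-pt} holds since \eqref{eq:at-line} is satisfied at the nontrivial fixed point. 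So the RGD chain at inverse temperature $\beta$, started from \emph{any} $x_0$, reaches $\E|\,|R(\bx_T, \bone)| - \OPT_{\beta,\lambda}\,| \le O(N^{-1/2+\eps})$ for all $\omega(\log N) \le T \le N^K$.

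Next I would set $\nu$ to be the time-averaged distribution $\nu = \E_{\bt \sim [0,T_0]} \bsigma_{\bt}$ of Glauber with $T_0 = \wt\Theta(N^{4-2\eps})$, and apply Lemma~\ref{lem:lsd-decay} to conclude $\nu$ is $\delta$-locally stationary with respect to $\mu_{\beta\bM}$ for $\delta = \frac{\log(1/\pi_{\min})}{T_0} = \wt O(N / T_0) = \wt O(N^{-3+2\eps})$, since $\log(1/\pi_{\min}) = O(\beta N + N) = O(N)$ for the spiked Wigner posterior. This is local stationarity with respect to \emph{Glauber}, and the key step is to transfer it to \emph{RGD}: using Lemma~\ref{lem:lsd-transfer} with the Hubbard--Stratonovich measure decomposition $\mu_{\beta\bM} = \E_{\bz} \mu_{\beta\bW, \bz\bone}$, and invoking \condref{cond:sk-beta-one} to lower bound $\delta_{\mathrm{SK}} = \inf_z \MLSI(\mu_{\beta\bW, z\bone}) = \Omega(1/N)$, we get that $\nu$ is also $\delta'$-locally stationary with respect to the RGD chain, with $\delta' = O(\frac{\tau}{\delta_{\mathrm{SK}}}) \cdot \delta = \wt O(N^{-1+2\eps})$; here $\tau = O(N)$ is the bound on $|\log f|$ coming from $\pi_{\min}$ for $f = \dv{\nu}{\mu_{\beta\bM}}$. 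Then Corollary~\ref{lem:bounded-function-stability} applied to the bounded observable $\phi(\sigma) = |R(\sigma, \bone)|$ and the RGD chain gives, for $T_1 = N^{K_0}$ a suitable polynomial, $|\E_{\bsigma \sim \nu}[\phi] - \E_{\bsigma \sim P_{\mathrm{RGD}}^{T_1}\nu}[\phi]| \le T_1 \sqrt{\delta'} = \wt O(N^{K_0 - 1/2 + \eps})$; choosing $T_0$ a large enough polynomial (this is where the $N^{4-2\eps}$ comes from) drives this error below $N^{-1/2+\eps}$. Since $\nu$ is a mixture over start points $x_0$ and Theorem~\ref{th:rgd-fixed-temp} applies uniformly to every $x_0$ with $\omega(\log N) \le T_1 \le N^K$, we get $\E_{\bsigma \sim P_{\mathrm{RGD}}^{T_1}\nu}[\,|\phi(\bsigma) - \OPT_{\beta,\lambda}|\,] \le O(N^{-1/2+\eps})$, hence $\E_{\bsigma \sim \nu}[\,|\phi(\bsigma) - \OPT_{\beta,\lambda}|\,] \le O(N^{-1/2+\eps})$.

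Finally, since $\nu = \E_{\bt}\bsigma_{\bt}$, the bound $\E_{\bsigma \sim \nu}|\phi(\bsigma) - \OPT_{\beta,\lambda}| \le O(N^{-1/2+\eps})$ is exactly $\E_{\bt \sim [0,T_0]} \E|\,|R(\bsigma_{\bt}, \bx)| - \OPT_{\beta,\lambda}\,| \le O(N^{-1/2+\eps})$, and Markov's inequality converts this first-moment bound into the desired tail bound $\Pr[\,|\,|R(\wh\bsigma, \bx)| - \OPT_{\beta,\lambda}\,| > C N^{-1/2+\eps}\,] = o_N(1)$ after adjusting the constant $C$ and the exponent slightly. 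I expect the main obstacle to be the bookkeeping in the transfer step: one must carefully track that $\tau$ (the log-density bound for $f = \dv{\nu}{\mu_{\beta\bM}}$) is only $O(N)$ — which follows from $\pi_{\min} \ge e^{-O(N)}$ — so that the $\frac{\tau}{\delta_{\mathrm{SK}}} = O(N^2)$ blowup in Lemma~\ref{lem:lsd-transfer} is only polynomial, and then chase the resulting polynomial factors through Corollary~\ref{lem:bounded-function-stability} to verify that $T_0 = \wt\Omega(N^{4-2\eps})$ is the correct threshold. A secondary technical point is that Theorem~\ref{th:rgd-fixed-temp} requires the \emph{warm-start} AT condition on $R(x_0, \bone)$, which for $\beta < 1$ is vacuous, so no additional argument is needed to handle worst-case Glauber initializations here.
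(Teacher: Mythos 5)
Your proposal is correct and follows essentially the same route as the paper's proof: time-average Glauber and invoke \Cref{lem:lsd-decay}, transfer local stationarity to RGD via \Cref{lem:lsd-transfer} with $\tau = O(N)$ and the \condref{cond:sk-beta-one} bound $\delta_{\mathrm{SK}} = \Omega(1/N)$, apply \Cref{lem:bounded-function-stability} to the correlation observable, invoke \Cref{th:rgd-fixed-temp} (whose hypotheses are automatic for $\beta < 1$), and finish with Markov. The one bookkeeping slip is your choice of a genuinely polynomial number of RGD steps $T_1 = N^{K_0}$: with $T_0 = \wt{\Theta}(N^{4-2\eps})$ the error $T_1\sqrt{\delta'} = \wt{O}(N^{K_0 - 1/2 + \eps})$ is \emph{not} below $N^{-1/2+\eps}$, and the paper instead runs RGD for only $O(\log N)$ steps (permissible since \Cref{th:rgd-fixed-temp} needs just $\omega(\log N)$, or barely superlogarithmic, steps), which is exactly what makes the exponent $4-2\eps$ come out.
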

\begin{proof}
Let us condition on the very high probability event that $\opnorm{\bW} = O(1)$. Let $\nu$ denote the distribution $\E_{\bt \sim [0, T]} \nu_{\bt}$, and $f$ its relative density with respect to $\mu_{\beta \bM}$. By \Cref{lem:lsd-decay}, we have that $\calE(f , \log f) \le O\qty(\frac{N}{T})$.
Applying \Cref{lem:lsd-transfer} yields that $\calE_{\mathrm{RGD}}(f_{\bt}, \log f_{\bt}) \le O\qty(\frac{\tau}{\delta} \cdot \frac{N}{T})$. 
We can crudely bound $\tau = O(N)$ because $\opnorm{\beta \bM} = O(1)$, and since we assumed that $\beta$ satisfies \condref{cond:sk-beta-one}, we have $\delta = \Omega(\frac{1}{N})$, yielding that 
\[
\calE_{\mathrm{RGD}}(f_{\bt}, \log f_{\bt}) \le O\qty(\frac{N^3}{T})\mper
\]

We can now apply \Cref{lem:bounded-function-stability} to $\nu$ with the test function $\phi(x) = \abs{\left| R(x, \bone) \right| - \OPT_{\beta, \lambda}}$, to conclude that
\[
\abs{\E_{\nu}\qty[\abs{R(\bx, \bone) - \OPT_{\beta, \lambda}}] - \E_{(\PRGD)^{C\log N}\nu}\qty[\abs{R(\bx, \bone) - \OPT_{\beta, \lambda}}] } \le \wt{O}\qty(\sqrt{\frac{N^3}{T}}).
\]
Now since $\beta < 1$, \eqref{eq:at-line} automatically holds. Therefore, \Cref{th:rgd-fixed-temp} yields the appropriate control on the latter expectation, and we conclude by applying Markov.
\end{proof}

We shall prove \Cref{th:rgd-fixed-temp} in two steps. First, in \Cref{lem:rgd-convergence-from-warm-start}, we shall prove the claim assuming a warm start initialization where $|R(x_0,\bone)| = \Omega(1)$; such a warm start is necessary for $\beta > 1$ since the AT condition is nontrivial there. We then show in \Cref{th:rgd-cold-start-analysis} that when $\beta < 1$, RGD escapes the unstable fixed point where $R(x_0,\bone) = 0$.

For use later in this section, we also note the following about the structure of $f_{\beta,\lambda}$, a more precise formulation of \Cref{eq:f-beta-lambda} proved in \Cref{sec:understanding-fixed-points}.

\begin{restatable}{lemma}{lemfixedptstructure}
	\label{lem:fixed-pt-structure-unconditional}
    Fix $\lambda > 1$ and $\beta \ge 0$. For $f_{\beta} = f_{\beta,\lambda}$ defined as in \eqref{eq:f-beta-lambda},
    \begin{enumerate}[label=\normalfont{(\roman*)}]
        \item If $\beta < \frac{1}{\lambda}$, then $h=0$ is the only fixed point for $f_{\beta}(h)$. Furthermore, $h=0$ is a stable fixed point.
        \item If $\frac{1}{\lambda} < \beta \le \lambda$, then there are exactly two non-negative fixed points for $f_{\beta}$, one at $0$ and another at $\OPT_{\beta,\lambda}$. Furthermore, $h=0$ is an unstable fixed point and $h=\OPT_{\beta,\lambda}$ is a stable fixed point.
    \end{enumerate}
\end{restatable}


\subsection{RGD from a warm start reaches the stable fixed point}

	In this section, we will show that running RGD on the posterior of spiked Wigner attains the Gibbs correlation, provided that the Markov chain is initialized at a point (or distribution) with typically large correlation. Our analysis will require RGD to be run for at least a sufficiently large constant number of steps, but the analysis extends to any large polynomial number of steps. In contrast, existing nonasymptotic analyses of AMP only work for sublinearly many steps.
	
	\begin{lemma}
        \label{lem:rgd-convergence-from-warm-start}
		Fix $\lambda > 1$, $K > 0$ large, $\eps > 0$ small, and $\frac{1}{\lambda} < \beta \le \lambda$. With high probability over the noise $\bW$, the following holds. Let $x_0 \in \{\pm 1\}^n$, and assume that $( \beta , \beta\lambda |R(x_0,\bone)| )$ satisfies \eqref{eq:at-line}, and $|R(x_0,\bone)| = \Omega(1)$.\footnote{Here and below, satisfying \eqref{eq:at-line} means we satisfy the weak AT line inequality with some constant gap.} Suppose that we run the RGD Markov chain at inverse temperature $\beta$ from $x_0$ for $T \ge \omega(\log N)$ steps to arrive at distribution $\nu_T$. Suppose $T \le N^{K}$. Then,
    	\[ \E_{\bx_T \sim \nu_T}\left[ \left| |R\left( \bx_T,\bone \right)| - \OPT_{\beta,\lambda} \right| \right] \le O\qty(\frac{1}{N^{1/2 - \eps}}) \mper \]
	\end{lemma}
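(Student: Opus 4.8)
\textbf{Proof plan for \Cref{lem:rgd-convergence-from-warm-start}.}
The plan is to track the normalized correlation $\by_t = R(\bx_t, \bone)$ of the RGD trajectory and show that, after a single step from the warm start, it lands in a neighborhood of $\OPT_{\beta,\lambda}$ and stays there, up to $O(N^{-1/2+\eps})$ fluctuations, for all of the (at most $N^K$) subsequent steps. The heart of the argument is a one-step estimate: conditioned on the current state $\bx_t$ with correlation $y_t$, the next correlation $\by_{t+1}$ is given by $\by_{t+1} = R(\bsigma, \bone)$ where $\bsigma \sim \mu_{\beta\bW, \bh\bone}$ with $\bh = \beta\lambda y_t + \sqrt{\beta\lambda/N}\,\bg$ and $\bg \sim \calN(0,1)$. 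First I would use Gaussian concentration to say $\bh = \beta\lambda y_t + O(N^{-1/2+\eps})$ with probability $1 - e^{-\Omega(N^{2\eps})}$. Then, invoking the magnetization concentration result (\Cref{lem:strong-field-magnetization}, which applies since $(\beta,\beta\lambda y_t)$ is above the AT line and \condref{cond:rs-at} holds) I get $R(\bsigma,\bone) = \boldm_{\bh} + O(N^{-1/2+\eps})$ with high probability over $\bW$, and the mean-magnetization formula \Cref{lem:all-field-mag-conc} gives $\boldm_{\bh} = f_{\beta,\lambda}(y_t) + O(N^{-1/2+\eps})$ (here the error is additive because $\bh = \Omega(1)$, so we are in the regime where the cleaner form of the estimate applies). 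Combining, $\by_{t+1} = f_{\beta}(y_t) + \bxi_t$ where $|\bxi_t| = O(N^{-1/2+\eps})$ except on an event of probability $e^{-\Omega(N^{2\eps})}$.

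Next I would feed this noisy recursion into a deterministic contraction argument near $\OPT_{\beta,\lambda}$. By \Cref{lem:fixed-pt-structure}, $\OPT_{\beta,\lambda}$ is a stable fixed point, so there is an interval $I$ around it and a constant $\kappa < 1$ with $|f_\beta(y) - \OPT_{\beta,\lambda}| \le \kappa |y - \OPT_{\beta,\lambda}|$ for $y \in I$; moreover, since $f_\beta$ is increasing and has no other fixed point above the AT line, the basin of attraction of $\OPT_{\beta,\lambda}$ (restricted to the AT region, after the sign of $y$ is fixed by the warm start) is an interval that contains $|R(x_0,\bone)|$. So from the warm start, the deterministic iteration $y \mapsto f_\beta(y)$ enters $I$ in $O(1)$ steps, after which $|y_t - \OPT_{\beta,\lambda}|$ shrinks geometrically. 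With the additive $O(N^{-1/2+\eps})$ perturbation at each step, a standard argument shows that once inside $I$ the correlation remains within $O\big(\tfrac{1}{1-\kappa} N^{-1/2+\eps}\big) = O(N^{-1/2+\eps})$ of $\OPT_{\beta,\lambda}$, and that it cannot escape $I$; one must also verify the perturbed iterates stay above the AT line so that the one-step estimate keeps applying, which holds because the AT condition is satisfied at $\OPT_{\beta,\lambda}$ with a constant gap and the perturbations are $o(1)$. A subtlety is the sign of the correlation: RGD's update is sign-equivariant, and $f_\beta$ is odd, so one should argue about $|\by_t|$ throughout; since we started with $|R(x_0,\bone)| = \Omega(1)$ well away from the unstable fixed point at $0$, the sign of $\by_t$ does not flip, and $|\by_t|$ obeys the same perturbed recursion.

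Finally I would handle the union bound over time. Since $T \le N^K$ and each step's ``bad event'' (where $|\bxi_t| > C N^{-1/2+\eps}$, or the Gaussian $\bg$ is atypically large) has probability $e^{-\Omega(N^{2\eps})}$, a union bound over the at most $N^K$ steps still gives failure probability $N^K e^{-\Omega(N^{2\eps})} = o(1)$; similarly the ``high probability over $\bW$'' events in \Cref{lem:strong-field-magnetization} and \Cref{lem:all-field-mag-conc} are required to hold simultaneously for all the $O(N^{-1/2+\eps})$-net of external fields $h$ that could arise, which is exactly the ``for all $h$'' form in which those lemmas are stated, so no extra union bound over $\bW$ is needed. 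On the complement of the bad event, $|\,|R(\bx_T,\bone)| - \OPT_{\beta,\lambda}\,| = O(N^{-1/2+\eps})$ deterministically, and on the bad event we use the trivial bound $|\,|R(\bx_T,\bone)| - \OPT_{\beta,\lambda}\,| \le 2$; taking expectations gives $\E_{\bx_T \sim \nu_T}[\,|\,|R(\bx_T,\bone)| - \OPT_{\beta,\lambda}\,|\,] \le O(N^{-1/2+\eps}) + 2\cdot o(1)$, which is $O(N^{-1/2+\eps})$ after adjusting $\eps$. The main obstacle is the one-step estimate in the second paragraph of this plan: one must ensure the magnetization concentration and mean-magnetization formula hold \emph{uniformly} over the (random, trajectory-dependent) sequence of external fields with errors that do not degrade over $\poly(N)$ steps, and that the perturbed iteration provably cannot drift out of the AT region — this is where the quantitative strength of \Cref{lem:all-field-mag-conc} and the constant-gap AT assumption are essential.
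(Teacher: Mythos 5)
Your proposal is correct and follows essentially the same approach as the paper's proof: reduce to the one-dimensional correlation process, show each step is $f_\beta$ plus an $O(N^{-1/2+\eps})$ perturbation (via Gaussian concentration of $\bh$ and magnetization concentration from \Cref{lem:strong-field-magnetization} under \condref{cond:rs-at}), verify the iterates stay in the \eqref{eq:at-line} region using monotonicity and the fixed-point structure (\Cref{lem:fixed-pt-structure}), and conclude by a noisy contraction argument near $\OPT_{\beta,\lambda}$ over $O(\log N)$ steps with a union bound over the $\le N^K$ time steps. The only minor imprecision is your claim of $e^{-\Omega(N^{2\eps})}$ failure probability for the magnetization concentration step — \Cref{lem:strong-field-magnetization} gives polynomial tails via Markov on the $2k$-th moment, which is still sufficient (take $k$ large relative to $K$) but not exponential; the paper sidesteps this by phrasing event (ii) as a moment bound holding uniformly over $h \in \calH$ with high probability over $\bW$.
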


	\begin{remark}
		If one only desired to have correlation within a(n arbitrarily small) constant of $\OPT_{\beta}$, our proof (specifically \eqref{eq:one-dim-fixed-point-convergence}) will show that it suffices to run RGD for a (large) constant number of steps. 
	\end{remark}

	\begin{proof}
		For ease of notation, let $c > 0$ be a sufficiently small constant such that $(\beta, \beta\lambda c)$ satisfies \eqref{eq:at-line}, and let 
		\[ \calH = \left\{ h \in [-2 , 2] : (\beta,\beta\lambda h) \text{ satisfies \eqref{eq:at-line} and $|h| > c$} \right\} \mper \]
        Observe that $\calH$ contains $R(x_0, \bone)$.       
		Consider the one-dimensional projection of the RGD walk, initialized at $z_0 = R(x_0,\bone)$. One step of this one-dimensional walk is defined by walking from $\bz_t$ as
		\[ \bz_{t+1} = R(\bx_{t+1} , \bone) \mcom \text{ where } \bx_{t+1} \sim \mu_{\beta W , \beta\lambda \bz_t + \sqrt{\frac{\beta\lambda}{n}} \bg_t }\mcom \]
		where each $\bg_t$ is iid drawn from $\calN(0,1)$. Let us condition on the following high-probability events:
		\begin{enumerate}[label=(\roman*)]
			\item For every $t < T$, $|\bg_t| < N^{\eps}$. By standard Gaussian tail bounds, this occurs with probability $1 - T \cdot e^{-\Omega(N^{\eps})} = 1 - e^{-cN^{\eps}}$.

			\item For all $h \in \calH$ and constant $k$,
            \[ \E_{\bsigma \sim \mu_{\beta\bW , \beta\lambda h}} \left( R(\bsigma,\bone) - \E\left[ \tanh\left( \beta\sqrt{q_{\beta,\beta\lambda h}} \bg + \beta\lambda h \right) \right] \right)^{2k} < \frac{1}{N^k} \mper \]
            This follows from \Cref{lem:strong-field-magnetization} and \Cref{lem:wat-overlap-conc}, on taking a polynomially fine net of $\calH$.
		\end{enumerate}

        We claim that, on the above events, for all $t < T$, $\beta \lambda \bz_{t}$ is bounded away from the AT line by a constant.
        The base case $t=0$ is true by assumption.
        Next, since $\beta\lambda \bz_t$ is bounded away from the AT line by a constant, (i) implies that $\beta \lambda \bz_t + \sqrt{\frac{\beta\lambda}{n}} \bg_t$ does so as well. (ii) then implies that with high probability, for all $t < T$,
		\[
			\left| \bz_{t+1} - f_{\beta}\left( \bz_t + \sqrt{\frac{1}{\beta\lambda N}} \bg_t \right) \right| \le O\left(\frac{1}{N^{1/2 - \eps}}\right)\mper
		\]
		The Lipschitzness of $f_{\beta,\lambda}$ with (i) implies that
		\begin{equation}
			\label{eq:dynamic-growth-one-dim}
			\left| \bz_{t+1} - f_{\beta}(\bz_t) \right| \le O\left(\frac{1}{N^{1/2 - \eps}}\right)\mper
		\end{equation}
		Note in particular that by the monotonicity of \Cref{lem:at-monotonicity}, if the above is true, then $\beta\lambda\bz_{t+1}$ also satisfies \eqref{eq:at-line} with a constant gap. Indeed, by \Cref{lem:fixed-pt-structure-unconditional}(ii), if $|z_t| < \OPT_{\beta,\lambda}$, we have $|f_{\beta}(z_t)| > |z_t|$, so it satisfies \eqref{eq:at-line}. 
        On the other hand, if $|z_t| > \OPT_{\beta,\lambda}$, then $|f_{\beta}(z_t)| > \OPT_{\beta,\lambda}$, so it satisfies \eqref{eq:at-line} due to \Cref{lem:at-monotonicity}.
		As a result, we conclude that for all $t < T$, using $f_\beta$ is valid as an approximation to the RGD update at step $t$.

        Next, we show that $|z_t|$ converges to a small constant interval around $\OPT_{\beta}$ in a constant number of steps. 
        Let $\eta > 0$ such that $f_{\beta}'$ is bounded by some $L < 1$ on $[\OPT - \eta , \OPT + \eta]$.
        
        There exists a large constant $C$ and random variables $(\delta_t)_{0 \le t \le T}$ such that $|\delta_t| < \frac{C}{N^{1/2 - \eps}}$ almost surely for all $t < T$ and $\bz_{t+1} = f_{\beta}(\bz_t) + \delta_t$.
        We claim that after some constant number of steps $\tau$, $|z_{\tau}| \in [\OPT-\eta/2 , \OPT+\eta/2]$. Indeed, set
        \[ \mathrm{gap} = \inf_{ \substack{ h \in \left(\calH \cap [0,2]\right) \setminus \left[ \OPT - \eta/2 , \OPT + \eta/2 \right] } } \left| f_{\beta}(h) - h \right| \mcom \]
        a constant bounded away from $0$ due to the definition of $\calH$ and the boundedness of $f_{\beta}'$ near $\OPT$. Then, while $|\bz_t| \in \calH \setminus \left[ \OPT - \eta/2 , \OPT + \eta/2 \right]$, the fact that $f_\beta$ is increasing (\Cref{lem:fixed-point}(ii)), we have 
        \[ \left| |\bz_{t+1}| - \OPT \right| < \left| |\bz_t| - \OPT \right| - \mathrm{gap} + |\delta_t| \le \left| |\bz_t| - \OPT \right| - \frac{\mathrm{gap}}{2} \mper \]
        Furthermore, once $|\bz_t|$ reaches $[\OPT - \eta/2, \OPT + \eta/2]$, because $\delta_t = o(1)$, it is easy to check that $f_\beta'(z_t) < L$ for all future times $t > \tau$. Let us assume without loss of generality that $\bz_t$ is positive. For all such times, we have
		\[ \left| \bz_{t+1} - \OPT \right| = \left| f_\beta(\bz_t) + \delta_t - f_\beta(\OPT) \right| \le L \left| \bz_t - \OPT  \right| + \frac{C}{N^{1/2 - \eps}} \mper  \]
		Iterating this inequality yields that
	    \begin{equation}
	        \label{eq:one-dim-fixed-point-convergence}
	        \left| \bz_{\tau+t} - \OPT \right| \le L^{t} \left| \bz_\tau - \OPT \right| + \frac{1}{1-L} \cdot \frac{C}{N^{1/2 - \eps}}\mper
	    \end{equation}
		It follows that once we set $T = \Omega(\log N)$,
		\[ \left|\bz_T - \OPT\right| \le O\left(\frac{1}{N^{1/2 - \eps}}\right) \]
		as desired.
	\end{proof}

\subsection{High-temperature RGD escapes the unstable fixed point}
This section is dedicated to proving the following lemma about RGD escaping from the trivial fixed point at $0$ correlation.
\begin{lemma}
	\label{th:rgd-cold-start-analysis}
	Let $\lambda > 1$, $\beta \in (\frac{1}{\lambda}, 1)$, and $\eps > 0$ be a sufficiently small constant. Let $x_0 \in \{\pm 1\}^n$ arbitrary, and $(x_t)_{t \ge 0}$ the (random) trajectory of RGD initialized at $x_0$. Then,
	\[ \Pr\left[ \max_{0 \le t \le T} \frac{1}{N} \left|\left\langle x_t , \bone \right\rangle\right| \le \eps \right] \le e^{-\Omega\left( T / \log N \right)} \mper \]
\end{lemma}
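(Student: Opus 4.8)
The plan is to pass to the effective one‑dimensional Markov chain $z_t \coloneqq R(x_t,\bone)$. As noted in the technical overview, one step of RGD from $\sigma$ affects $R(\cdot,\bone)$ only through $R(\sigma,\bone)$, so for fixed $\bW$ the sequence $(z_t)_{t\ge0}$ is itself a Markov chain: from $z$, draw $\bg\sim\calN(0,1)$, set $\bh \coloneqq \beta\lambda z + \sqrt{\beta\lambda/N}\,\bg$, draw $\bsigma\sim\mu_{\beta\bW,\bh\bone}$, and move to $R(\bsigma,\bone) = \boldm_{\bh} + \tfrac1{\sqrt N}\wt{\bg}_{\bh}$. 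I would condition throughout on the high‑probability event over $\bW$ that $\opnorm{\bW}=O(1)$ and that the conclusions of \Cref{lem:all-field-mag-conc,lem:strong-field-magnetization} hold; since $\beta<1$, every pair $(\beta,h)$ lies above the AT line and $\beta$ satisfies \condref{cond:rs-at}, so for all $|h|=O(1)$ these give (i) the mean formula $\boldm_h = f_{\beta,\lambda}(h/(\beta\lambda))\cdot(1+O(N^{-\eps})) + O(N^{-1/2-\eps})$, using $\E[\tanh(\beta\bg\sqrt{q_{\beta,h}}+h)] = f_{\beta,\lambda}(h/(\beta\lambda))$ and $q_{\beta,0}=0$, and (ii) the moment bound $\E_{\bsigma\sim\mu_{\beta\bW,h\bone}}[(R(\bsigma,\bone)-\boldm_h)^{2k_0}]\le CN^{-k_0}$ for a fixed $k_0$, i.e.\ conditionally on $\bh$ we have $\E[\wt{\bg}_{\bh}\mid\bh]=0$ and $\E[\wt{\bg}_{\bh}^{2k_0}\mid\bh]\le C$. (The exponentially rare event $|\bg_t|\gg\sqrt N$ only drives $|z_{t+1}|$ to within $o(1)$ of $1$, which trivially helps.)

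The reduction I would then perform is to the single‑block claim: there is $T_0 = C\log N$ such that for every $z$ with $|z|\le\eps$, the chain started at $z$ reaches $\{|z_t|>\eps\}$ within $T_0$ steps with probability $1-o_N(1)$. Granting this, the lemma follows from a blocking argument: on $\{\max_{t\le T}|z_t|\le\eps\}$ each of the $\lfloor T/T_0\rfloor$ consecutive length‑$T_0$ blocks must fail to escape, and since $z_t$ is Markov and $|z_{jT_0}|\le\eps$ on this event, each block fails with probability $o_N(1)\le\tfrac12$; multiplying gives $\Pr[\max_{t\le T}|z_t|\le\eps]\le 2^{-\lfloor T/T_0\rfloor} = e^{-\Omega(T/\log N)}$.

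To prove the single‑block claim I would track the \emph{level} of the walk along a geometric ladder $\ell_k \coloneqq \gamma^k\tfrac{K}{\sqrt N}$ with $\gamma\coloneqq\tfrac{1+\beta\lambda}{2}\in(1,\beta\lambda)$ and $0\le k\le K_0\coloneqq\lceil\log_\gamma(\eps\sqrt N/K)\rceil=\Theta(\log N)$: set $L_t = \max\{k\ge0:|z_t|\ge\ell_k\}$ when $|z_t|\ge\ell_0$, and $L_t=-1$ (``the core'') otherwise, so $L_t\ge K_0$ forces $|z_t|\ge\eps$. Using (i)--(ii) together with the standard local behavior of $f_{\beta,\lambda}$ near $0$ ($f_{\beta,\lambda}$ odd, $f_{\beta,\lambda}(z)=\beta\lambda z(1-O(z^2))$, $f_{\beta,\lambda}''(z)=O(z)$, so that $\E_{\bg}[\boldm_{\bh}]=f_{\beta,\lambda}(z)(1+O(N^{-\eps}))+O(N^{-1/2-\eps})+O(z/N)$; cf.\ \Cref{lem:fixed-pt-structure}), one gets, for $|z_t|\le2\eps$, the uniform step relation $z_{t+1} = \beta\lambda z_t(1+O(\eps^2+N^{-\eps})) + \tfrac1{\sqrt N}(\sqrt{\beta\lambda}\,\bg_t+\wt{\bg}_t) + O(N^{-1/2-\eps})$. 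This yields: (a) \emph{core escape} --- from $|z_t|<\ell_0$, with constant probability the Gaussian kick $\sqrt{\beta\lambda/N}\,\bg_t$ lands in a fixed $\Theta(1/\sqrt N)$‑window whose size exceeds the $O(K/\sqrt N)$ current position plus the (typically $O(1/\sqrt N)$) remaining noise, so $\Pr[L_{t+1}\ge0\mid z_t]\ge c_1>0$; and (b) \emph{upward drift in the signal regime} --- from level $0\le k<K_0$, since $\beta\lambda\ell_k$ exceeds $\ell_{k+1}=\gamma\ell_k$ by the fixed margin $(\beta\lambda-\gamma)\ell_k$, a step with $|\bg_t|,|\wt{\bg}_t|\le B$ (a large constant) moves to level $\ge k+1$ and preserves the sign of $z_t$ once $K$ is large enough that $(\beta\lambda-\gamma)\ell_0\ge\tfrac{C}{\sqrt N}(2B+1)$, whereas a drop of $j$ or more levels forces $|\bg_t|+|\wt{\bg}_t|\gtrsim K\gamma^{k+j}$, which by the $2k_0$‑th moment bound on $\wt{\bg}_t$ has probability at most $C''\rho^j$ with $\rho\coloneqq\gamma^{-2k_0}<1$ and $C''\to0$ as $K\to\infty$. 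Hence $(L_t)$ stochastically dominates below a random walk on $\{-1,0,\dots,K_0\}$ with $O(1)$ expected escape time from $-1$, drift at least $(1-\delta(B))-C''\sum_{j\ge1}j\rho^j\ge\tfrac12$ for a suitable choice of constants, and exponentially light increment tails; such a walk is absorbed at $K_0$ within $C\log N$ steps with probability $1-e^{-\Omega(K_0)}=1-o_N(1)$ by a Bernstein bound, which is exactly the single‑block claim.

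The hard part is item (b) and the bookkeeping in the transition regime $|z|\asymp1/\sqrt N$, where the signal drift $\beta\lambda z_t$ and the fluctuations $\tfrac1{\sqrt N}(\sqrt{\beta\lambda}\,\bg_t+\wt{\bg}_t)$ are comparable: one must simultaneously control $\boldm_h$ to \emph{relative} precision near $h=0$ --- precisely what \Cref{lem:all-field-mag-conc} supplies beyond the literature --- exploit the anticoncentration of the Gaussian field noise to leave the core, and rule out large downward excursions of the level process via the polynomial tail of $\wt{\bg}_t$, fixing constants in the order ``$B$ large, then $K$ large, then $\eps$ small, then $N$ large''. Once $|z_t|$ reaches a constant the dynamics is governed as in \Cref{lem:rgd-convergence-from-warm-start} and pushed toward $\OPT_{\beta,\lambda}>\eps$, but for this lemma it suffices to have crossed level $\eps$.
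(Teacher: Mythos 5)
Your proof follows essentially the same two-phase strategy as the paper's: escape the $O(1/\sqrt N)$ core via Gaussian anticoncentration, then amplify geometrically, combined with the same blocking argument over $O(\log N)$-length windows to obtain $e^{-\Omega(T/\log N)}$. The paper isolates the escape dynamics into a standalone abstract lemma (\Cref{lem:stoch-proc-escape-unstable}) about stochastic recursions near unstable fixed points and then reduces RGD to it using the same mean-magnetization inputs you cite; a cosmetic difference is that the paper tracks the effective external field $\bh_t$ rather than the correlation $z_t$.

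The amplification steps diverge, and yours has a concrete error in the downward-jump tail. You claim that a drop of $j$ or more levels from level $k$ forces $|g_t|+|\tilde g_t|\gtrsim K\gamma^{k+j}$, giving a $j$-geometric tail $C''\rho^j$ and hence a Bernstein bound for the level process. This threshold is wrong. From $|z_t|\ge \ell_k$ the signal part of the next iterate is $\gtrsim \beta\lambda\ell_k$, and landing below $\ell_{k-j+1}$ requires noise of size $\gtrsim \beta\lambda\ell_k - \ell_{k-j+1} = \ell_k\bigl(\beta\lambda - \gamma^{-j+1}\bigr)$, which is $\Theta(K\gamma^k/\sqrt N)$ essentially \emph{independently} of $j$ (since $\gamma^{-j+1}\le 1 < \beta\lambda$). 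So the tail you actually get is $\Pr[\text{drop}\ge j\mid L_t=k]\le C/(K\gamma^k)^{2k_0}$, uniform in $j$, not $\rho^j$. The drift estimate still survives because the drop is trivially bounded by $k+1$ (one cannot go below the core), giving an expected downward increment at most $(k+1)C/(K\gamma^k)^{2k_0}$, small for $K$ large --- but the ``exponentially light increment tails plus Bernstein'' step as written is unjustified, and bounded-increment Azuma with increments as large as $O(\log N)$ does not deliver the $1-o_N(1)$ per-block claim you assert. The paper sidesteps all of this by conditioning on a \emph{single} constant-probability event for the entire block --- that the cumulative noise after the $s$-th step is at most $\iota\kappa(s+1)/\sqrt N$, which has probability $\ge c$ by Markov plus $\sum_{j}j^{-2}<\infty$ --- on which the recursion grows geometrically deterministically. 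Since the blocking argument only requires a constant per-block escape probability, not your stronger $1-o_N(1)$, this is the simpler and correct route.
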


\Cref{th:rgd-cold-start-analysis} almost immediately follows from the following more general lemma about appropriate stochastic processes escaping unstable fixed points.

\begin{lemma}
	\label{lem:stoch-proc-escape-unstable}
	Let $g : \R \to \R$ be a smooth function such that for some constants $L > 1$ and $\eps > 0$, $|g(x)| > L|x|$ for all $|x| < 2\eps$. Let $X_0 \sim \nu_0$ for some arbitrary distribution $\nu_0$ on $\R$. Suppose we have three sequences $(X_t,Y_t,Z_t)_{t \ge 0}$ of random variables on a common probability space.
    Assume that
	\begin{enumerate}[label=\normalfont{(\roman*)}]
		\item for all $t \ge 0$, $Z_t$ has law $\calN\left( 0 , \Theta\left( \frac{1}{N} \right) \right)$, and is independent of $\sigma( (X_s)_{0 \le s \le t-1} , (Y_s)_{0 \le s \le t} )$,
        \item $\E \left[ Y_t^2 \mid (X_s,Y_s,Z_s)_{0 \le s \le t-1} \right] \le \frac{C}{N}$.
        \item $(X_t)$ is defined by
        \[ X_{t+1} = g(X_t) + Y_t + Z_t \mper \]
	\end{enumerate}
	Then,
	\[ \Pr\left[ \max_{0 \le t \le T} |X_t| \le \eps \right] \le e^{-\Omega\left(T / \log N\right)} \mper \]
\end{lemma}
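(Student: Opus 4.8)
The plan is to show that while $|X_t| \le \eps$, the process is, in a quantitative sense, repeatedly kicked away from $0$ by the Gaussian increments $Z_t$, and once it has been kicked to scale $\Omega(1/\sqrt N)$ the deterministic expansion $|g(x)| > L|x|$ (with $L > 1$) takes over and amplifies it geometrically until it exceeds $\eps$. The key point is that the whole escape happens on a timescale of $O(\log N)$ steps with constant probability, so over $T$ steps we get $\Omega(T/\log N)$ independent-ish attempts, each succeeding with constant probability, giving the claimed $e^{-\Omega(T/\log N)}$ bound.

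Concretely, I would first fix a window length $\ell = C_0 \log N$ for a large constant $C_0$, and partition $[0,T]$ into $\lfloor T/\ell\rfloor$ consecutive blocks. It suffices to show that, conditioned on $|X_s| \le \eps$ for all $s$ up to the start of a block (otherwise we are already done), the process exits $[-\eps,\eps]$ within that block with probability $\ge p$ for some absolute constant $p > 0$; then the number of blocks that fail to escape is stochastically dominated by a geometric-type variable and $\Pr[\text{no escape in } T] \le (1-p)^{\lfloor T/\ell\rfloor} = e^{-\Omega(T/\log N)}$. For a single block, reindex time so the block starts at $t=0$. Split into two phases. \textbf{Phase 1 (getting off the ground):} In the first step, $X_1 = g(X_0) + Y_0 + Z_0$; since $Z_0 \sim \calN(0,\Theta(1/N))$ is independent of $(X_0, Y_0)$ and anticoncentrated, with constant probability $|Z_0| \ge c_1/\sqrt N$ and $Z_0$ has a definite sign, say positive; I want to also control $Y_0$, which by (ii) has $\E[Y_0^2 \mid \mathcal{F}_0] \le C/N$, so by Markov $|Y_0| \le 10\sqrt{C}/\sqrt N$ with probability $\ge 0.99$, and $|g(X_0) - X_0|$ is small when $X_0$ is small (since $g$ is smooth with $g(0)=0$; more carefully $g(X_0)$ has the same sign as $X_0$ and magnitude between $L|X_0|$ and $\|g'\|_\infty|X_0|$ on $[-2\eps,2\eps]$). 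The cleanest route is to condition on the sign of $X_0$ (or handle $X_0$ near $0$ directly) and argue that on a constant-probability event, $|X_1| \ge c_2/\sqrt N$ for a suitable constant $c_2$ — here one may need to intersect with the event that $Z_0$ overwhelms both $Y_0$ and any unfavorable contribution, using that all three are $O(1/\sqrt N)$ in scale and $Z_0$ is the only one with a guaranteed lower bound and free sign. \textbf{Phase 2 (geometric amplification):} Once $|X_{t_0}| \ge c_2/\sqrt N$ for some $t_0$ in the block, I claim that for the next $\sim \log_L N$ steps, as long as $|X_t| \le \eps$, we have $|X_{t+1}| \ge |g(X_t)| - |Y_t| - |Z_t| \ge L|X_t| - |Y_t+Z_t|$. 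Since $|X_t| \ge c_2/\sqrt N$ already dominates the typical $O(1/\sqrt N)$ size of $Y_t + Z_t$, one sets up a supermartingale / direct inductive argument: on the event that $|Y_t + Z_t| \le \tfrac{L-1}{2}|X_t|$ (which holds with probability $\ge 1 - O(1/\log N)$ per step by (i), (ii) and Markov/Gaussian tails, since $|X_t| \gtrsim 1/\sqrt N$ and the noise is $O(1/\sqrt N)$, with room growing as $|X_t|$ grows), we get $|X_{t+1}| \ge \tfrac{L+1}{2}|X_t|$, i.e. geometric growth with ratio $\tfrac{L+1}{2} > 1$. A union bound over the $O(\log N)$ amplification steps shows all of them are favorable with probability $\ge 1 - O(1)$; choosing constants appropriately (the per-step failure probability must be summably small over the window, which is why $\ell = C_0\log N$ with $C_0$ large enough makes the total failure probability a small constant), $|X_t|$ reaches $\eps$ within $O(\log N) \le \ell$ steps, establishing the block-escape probability bound $p = \Omega(1)$.

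The main obstacle I anticipate is \textbf{Phase 1 with an adversarial $Y_t$ and $X_0$}: $Y_t$ is only controlled in second moment conditionally, not in sign, and it lives at the same $1/\sqrt N$ scale as the helpful noise $Z_t$, so I cannot simply say "$Z_0$ dominates" — I need $Z_0$ to beat $Y_0$ plus the (possibly adverse) drift $g(X_0)$ when $X_0$ is, say, slightly negative while we are hoping to escape upward. The fix is to not commit to a direction: escape means $|X_t| > \eps$, so I should argue that regardless of the current sign of $X_t$, the drift $g(X_t)$ and the free-sign Gaussian $Z_t$ conspire to increase $|X_t|$ with constant probability — e.g. condition on $\sgn(Z_0) = \sgn(X_0)$ when $|X_0|$ is not too small (so drift and noise align), and when $|X_0| = o(1/\sqrt N)$ just use that $Z_0$ alone, being $\Omega(1/\sqrt N)$ with a free sign on a constant-probability event, dominates everything. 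A second, more technical obstacle is making the "$\Omega(T/\log N)$ independent attempts" rigorous despite the conditioning: since block $i+1$'s analysis is conditioned on $\{|X_s| \le \eps \ \forall s \le i\ell\}$, one should phrase it via the filtration — let $E_i$ be the event that the process is still in $[-\eps,\eps]$ throughout block $i$; then $\Pr[E_i \mid \mathcal{F}_{i\ell}] \le 1 - p$ on $\bigcap_{j<i} E_j$, and the multiplicative bound $\Pr[\bigcap_i E_i] \le (1-p)^{\lfloor T/\ell\rfloor}$ follows by iterated conditioning. Once Phase~1 and Phase~2 are in place, deducing Lemma~\ref{th:rgd-cold-start-analysis} is immediate: take $g(x) = f_\beta(x)$, which for $\beta > 1/\lambda$ satisfies $|f_\beta(x)| > L|x|$ near $0$ for some $L > 1$ by the fixed-point structure (\Cref{lem:fixed-pt-structure}), identify $Y_t$ with the magnetization-concentration error (controlled in all moments by \Cref{lem:strong-field-magnetization}) and $Z_t$ with the scaled Gaussian increment $\sqrt{1/(\beta\lambda N)}\,\bg_t$, and verify the three hypotheses.
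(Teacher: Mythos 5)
Your proposal follows the same two-phase block-partition strategy as the paper: Phase~1 kicks $|X_t|$ to scale $\kappa/\sqrt N$ using anticoncentration of $Z_t$, Phase~2 amplifies geometrically via $|g(x)|>L|x|$ over $O(\log N)$ steps, and iterating the constant per-block escape probability over $\Omega(T/\log N)$ blocks gives $e^{-\Omega(T/\log N)}$. Two points are handled more cleanly in the paper's execution and should replace what you sketch. For Phase~1, rather than a sign argument (which tacitly requires $g(X_0)$ to point in the direction of $X_0$ --- the hypothesis only bounds $|g|$ --- and still leaves adversarial $Y_0$ to contend with), the paper simply case-splits on whether $|g(X_t)+Y_t|\ge 2\kappa/\sqrt N$: if so, $|Z_t|\le\kappa/\sqrt N$ already gives $|X_{t+1}|\ge\kappa/\sqrt N$; if not, $|Z_t|\ge 3\kappa/\sqrt N$ does; and both $Z_t$ events hold with constant probability regardless of the history. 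For Phase~2, the per-step failure probability is not $O(1/\log N)$ as you claim (that would need the threshold to be $\Omega(\sqrt{\log N}/\sqrt N)$, which Phase~1 does not deliver); what actually closes the argument is that the failure probabilities are summable in $j$ because the threshold grows with $j$. The paper makes this precise by fixing the single event $\calE=\{|Y_s+Z_s|\le\frac{\iota\kappa}{\sqrt N}(s-t+1)\ \forall s\ge t\}$, bounding $\Pr[\calE^c]\le\sum_j C/(\iota^2\kappa^2(j+1)^2)$ (a small constant for $\iota\kappa$ large, independent of the window length), and then showing by a deterministic induction on $\calE$ that $|X_{t+j}|\ge(\tfrac{L+1}{2})^j\kappa/\sqrt N$, so the geometric growth of the iterate swamps the merely linear growth of the threshold.
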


\begin{proof}
    Our proof proceeds in two stages. We first show that the dynamics escape zero to a ``lukewarm'' start, and after this exponential growth kicks in to make $X_t$ grow to $\Omega(1)$.

    To be more precise, let $\kappa > 0$ be a (large) constant to be fixed later. We shall show the following two statements, where $c > 0$ is a sufficiently small constant:
    \begin{align}
        \Pr\left[ \left| X_{t+1} \right| \ge \frac{\kappa}{\sqrt{N}} \mid (X_s,Y_s)_{0 \le s \le t}, (Z_s)_{0 \le s \le t-1} \right] &\ge c \mcom \label{eq:escape-0-one-step} \\
        \Pr\left[ \max_{t \le s \le t + C \log N} \left| X_{s} \right| \ge \eps \mid (X_s,Y_s)_{0 \le s \le t}, (Z_s)_{0 \le s \le t-1} , |X_{t}| \ge \frac{\kappa}{\sqrt{N}} \right] &\ge c \label{eq:escape-0-exp-growth} \mper
    \end{align}
    Putting these two together with the independence structure of the $(Y_s,Z_s)$ immediately yields the result.
    
    Let us start by proving \eqref{eq:escape-0-one-step}, which is a consequence of anticoncentration and concentration of $Z_t$. Indeed, we have
    \begin{align*}
        &\Pr\left[ \left| X_{t+1} \right| \ge \frac{\kappa}{\sqrt{N}} \mid \left(X_s,Y_s)_{0 \le s \le t} , (Z_s\right)_{0 \le s \le t-1} , \left| g(X_{t}) + Y_{t} \right| \ge \frac{2\kappa}{\sqrt{N}} \right] \\
        &\qquad\qquad\ge \Pr\left[ \left| Z_{t} \right| \le \frac{\kappa}{\sqrt{N}} \mid \left(X_s,Y_s)_{0 \le s \le t} , (Z_s\right)_{0 \le s \le t-1} , \left| g(X_{t}) + Y_{t} \right| \ge \frac{2\kappa}{\sqrt{N}} \right] \ge c \mcom \text{ and} \\
        &\Pr\left[ \left| X_{t+1} \right| \ge \frac{\kappa}{\sqrt{N}} \mid \left(X_s,Y_s)_{0 \le s \le t} , (Z_s\right)_{0 \le s \le t-1} , \left| g(X_{t}) + Y_{t} \right| \le \frac{2\kappa}{\sqrt{N}} \right] \\
        &\qquad\qquad\ge \Pr\left[ \left| Z_{t} \right| \ge \frac{3\kappa}{\sqrt{N}} \mid \left(X_s,Y_s)_{0 \le s \le t} , (Z_s\right)_{0 \le s \le t-1} , \left| g(X_{t}) + Y_{t} \right| \le \frac{2\kappa}{\sqrt{N}} \right] \ge c \mcom
    \end{align*}
    implying \eqref{eq:escape-0-one-step}.

    Let us next prove \eqref{eq:escape-0-exp-growth}. Consider the following event, parametrized by some constant $\iota > 0$ that we shall fix later:
    \[ \calE = \left\{ (Y_s,Z_s)_{s \ge t} : \left| Y_s + Z_s \right| \le \frac{\iota\kappa}{\sqrt{N}} \cdot (s-t+1) \right\} \mper \]
    Observe that if $\iota\kappa$ is sufficiently large, the above holds with nonzero probability. Indeed, we have $\E \left[ (Y_t + Z_t)^2 \mid (Y_s,Z_s)_{0 \le s \le t-1} \right] \le \frac{C}{N}$, so Markov's inequality followed by a union bound gives that
    \begin{equation}
        \label{eq:escape-0-union-bound-event}
        \Pr\left[ \calE \mid (X_s,Y_s,Z_s)_{0 \le s \le t-1} \right] \ge 1 - \sum_{j \ge 0} \frac{C}{\iota^2\kappa^2 (j+1)^2} \mcom
    \end{equation}
    which is bounded away from $0$ for $\iota\kappa$ sufficiently large. 
    
    Now, towards \eqref{eq:escape-0-exp-growth}, we claim the following for sufficiently small $\iota$ and some constant $\wt{L} \in \left( 1 , L \right)$ bounded away from $1$ (to be set later): if $\calE$ holds, $|X_{t+j}| \le \eps$, and $|X_{t+j}| \ge \wt{L}^j \cdot \frac{\kappa}{\sqrt{N}}$, then $|X_{t+j+1}| \ge \wt{L}^{j+1} \cdot \frac{\kappa}{\sqrt{N}}$. It is not difficult to see that this implies \eqref{eq:escape-0-exp-growth}: all the $X_s$ being smaller than $\eps$ would contradict the exponential growth prescribed by the previous sentence if the constant $C$ is taken large enough that $\wt{L}^{C \log N} \cdot \frac{\kappa}{\sqrt{N}} \ge \eps$. To complete the proof, we have that if the events described earlier hold,
    \begin{align*}
        \left| X_{t+j+1} \right| &= \left| g(X_{t+j}) + Y_{t+j} + Z_{t+j} \right| \\
            &\ge |g(X_{t+j})| - \left| Y_{t+j} + Z_{t+j} \right| \\
            &\ge L \cdot \wt{L}^j \cdot \frac{\kappa}{\sqrt{N}} - \frac{\iota\kappa}{\sqrt{N}} \cdot (j+1) \\
            &\ge \wt{L}^j \cdot \frac{\kappa}{\sqrt{N}} \cdot \left( L - \iota \cdot \sup_{j \ge 0} \frac{j+1}{\wt{L}^j} \right) \mper
    \end{align*}
    To conclude, choose the constants such that
    \begin{itemize}
        \item $\wt{L} = \frac{L+1}{2}$ is bounded away from $1$,
        \item $\iota$ is small enough that $L - \iota \cdot \sup_{j \ge 0} \frac{j+1}{\wt{L}^j} \ge \wt{L}$, and
        \item $\kappa$ is large enough that the probability in \eqref{eq:escape-0-union-bound-event} is bounded away from $0$. \qedhere
    \end{itemize}
\end{proof}

Let us go back and prove \Cref{th:rgd-cold-start-analysis}.

\begin{proof}[Proof of \Cref{th:rgd-cold-start-analysis}]
	We shall prove a similar statement not for the correlation, but for the external fields $(\bh_t)_{t \ge 0}$ encountered along the trajectory of RGD (see \Cref{eq:gtilde}), normalized by a factor of $\frac{1}{\beta\lambda}$. Concretely, this process is defined as follows, starting from arbitrary $h = \bh_0 \in \R$.
    \begin{enumerate}
    	\item sample $\bz = \boldm_{\beta\lambda \bh_t} + \frac{1}{\sqrt{N}} \wt{\bg}_{\beta\lambda\bh_t}$, where we recall $\wt{\bg}_h$ is the deviation from the expected magnetization of a Gibbs sample from $\mu_{\beta \bW, h}$.
        \item draw $\bg \sim \mathcal{N}\left(0, \frac{1}{\beta\lambda N}\right)$, and move to $\bh' = \bz + \bg$.
    \end{enumerate}
    \Cref{lem:all-field-mag-conc} yields that with probability $1 - o(1)$, we may write 
	\[ \bh_{t+1} = f_{\beta,\lambda}(\bh_{t}) \left( 1 + O\left( \frac{1}{N^\eps} \right) \right) + O\left( \frac{1}{N^{1/2 + \eps}} \right) + \frac{1}{N^{1/2}} \cdot \wt{\bg}_{\beta\lambda\bh_t}+ \bg_t \mper \]
    Let us verify that we can apply \Cref{lem:stoch-proc-escape-unstable} with $X_t = \bh_t$, $Y_t = O\left( \frac{1}{N^{1/2 + \eps}} \right) + \frac{1}{N^{1/2}} \cdot \wt{\bg}_{\beta\lambda \bh_t}$, and $Z_t = \bg_t$.
    
    By definition, $\bg_t$ is independent of the collection $\qty{\wt{\bg}_{\beta\lambda\bh_t},  (\wt{\bg}_{\beta\lambda\bh_s}, \bg_s)_{0 \le s \le t-1}}$. The bound on the conditional second moment of $\wt{\bg}_h$ follows from \Cref{lem:strong-field-magnetization}---recall that for $\beta < 1$, $(\beta,h)$ satisfies \eqref{eq:overlap-conc} for all $h$. The lower bound of $|f_{\beta,\lambda}(x)| \left( 1 + O\left( \frac{1}{N^{\eps}} \right) \right) > L |x|$ for small $x$ follows from the instability of $f_{\beta,\lambda}$ at $0$ for $\beta > \frac{1}{\lambda}$: this is \Cref{lem:fixed-pt-structure-unconditional}(ii).
    The desideratum follows on using \Cref{lem:stoch-proc-escape-unstable}.
 \end{proof}


We may now prove \Cref{th:rgd-fixed-temp}, restated for convenience.

\thrgdfixedtemp*

\begin{proof}
    If $\beta \ge 1$, this immediately follows from \Cref{lem:rgd-convergence-from-warm-start}, since it is then non-trivial to satisfy the AT line with a margin. For $\beta < 1$, this follows from \Cref{th:rgd-cold-start-analysis,lem:rgd-convergence-from-warm-start}.
\end{proof}


\section{High-temperature mean magnetization estimates in the SK model}

\label{subsec:mean-magnetization}

In the RGD recursion for spiked Wigner, the measure decomposition is \[
    \mu_{\beta M, 0} = \E_{\substack{x \sim \mu_{\beta M, 0} \\ \bg \sim N(0,1)}}[\mu_{\beta W, (\frac{\beta \lambda}{n} \ev{\bone, x} + \sqrt{\beta \lambda /n}\bg )\bone}].
\]
The following explicit formula allows us to determine the deterministic scalar recursion which determines the behavior of locally stationary RGD, in the high-temperature regime $\beta < 1$.

\begin{restatable}[High-temperature magnetization concentration for all fields]{lemma}{skformula}
    \label{lem:all-field-mag-conc}
    Let $\beta < 1$, $\eps > 0$ sufficiently small, and $K > 0$ large independent of $N$. With probability $1-o(1)$ over $W \sim \GOE(N)$, for all $0 \le h \le K$,
    \begin{enumerate}[label=(\roman*)]
        \item if $h < N^{- \left( 1/4 + \eps \right)}$,
        \[ \frac{1}{N} \E_{\bx \sim \mu_{\beta W,h\bone}} \langle\bx,\bone\rangle = \E_{\bg \sim \calN(0,1)}\left[ \tanh\left( \beta\bg\sqrt{q} + h \right) \right] \left( 1 + O\left(\frac{1}{N^{\eps}}\right) \right) + O\left( \frac{1}{N^{1/2 + \eps}} \right) \mper \]

        \item if $h > N^{- \left( 1/4 + \eps \right)}$,
        \[ \frac{1}{N} \E_{\bx \sim \mu_{\beta W,h\bone}} \langle\bx,\bone\rangle = \E_{\bg \sim \calN(0,1)}\left[ \tanh\left( \beta\bg\sqrt{q} + h \right) \right] + O \left( \frac{1}{N^{1/2 - \eps}} \right) \mper \]

    \end{enumerate}
    where $q = q_h$ is the functional order parameter defined by
    \[ q = \E_{\bg \sim \calN(0,1)} \left[\tanh^2 \left( \beta\bg\sqrt{q} + h \right)\right]. \]
\end{restatable}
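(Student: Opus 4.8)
The plan is to reduce the mean magnetization $\frac{1}{N}\E_{\bx\sim\mu_{\beta W,h\bone}}\langle\bx,\bone\rangle$ to the Thouless--Anderson--Palmer (TAP) fixed-point prediction $\E_{\bg}[\tanh(\beta\bg\sqrt{q}+h)]$ via a cavity/interpolation argument, tracking the error terms carefully as a function of the field strength $h$. First I would recall that in the replica-symmetric regime $\beta<1$, the overlap of two independent samples concentrates around $q=q_{\beta,h}$ with exponential tails (this is classical for $\beta<1$; cf.\ \Cref{rem:rs-at-discussion}), and that the single-site marginal satisfies the cavity equation: if $\langle x_i\rangle$ denotes the local magnetization, then $\langle x_i\rangle\approx\tanh(h+\beta\sum_{j}W_{ij}\langle x_j\rangle_{\sim i}-\text{Onsager})$. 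The standard route is to show that the cavity field $\beta\sum_j W_{ij}\langle x_j\rangle_{\sim i}$ is approximately Gaussian with variance $\beta^2 q$ (by a CLT over the independent-ish increments $W_{ij}$, using overlap concentration to pin down the variance), so that averaging $\tanh$ over sites reproduces $\E_{\bg}[\tanh(\beta\bg\sqrt{q}+h)]$. I would organize this as: (1) establish overlap concentration and control of $q_h$ as a function of $h$, including its behavior as $h\to0$ where $q_h\to0$; (2) run the cavity computation for one spin to get the self-consistent equation with an explicit error; (3) average over spins and use concentration of the empirical magnetization (McDiarmid / Gaussian-Poincar\'e in $W$) to pass from expectation to the formula; (4) separately bound the Onsager correction and the non-Gaussianity error, which is where the $h$-dependence of the error rate enters.

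The delicate point, and the reason for the two-regime split, is the behavior near $h=0$. When $h$ is large (say $h>N^{-(1/4+\eps)}$), the magnetization $\boldm_h$ is itself $\Theta(h)$ or larger and comfortably dominates the additive fluctuation of size $O(N^{-1/2})$ coming from the finite-$N$ CLT error and the concentration slack; so one simply gets an additive error $O(N^{-1/2+\eps})$. But when $h$ is very small the signal $\boldm_h$ can be as small as $\Theta(h)$, which may be below $N^{-1/2}$, so an additive $O(N^{-1/2})$ error would swamp the main term. Here one must instead argue a \emph{multiplicative} error: I would linearize, writing $\boldm_h = h\cdot(\text{susceptibility-type factor}) + (\text{higher order})$, and show that both $\boldm_h$ and $\E_{\bg}[\tanh(\beta\bg\sqrt q+h)]$ share the same leading linear-in-$h$ coefficient $1/(1-\beta^2)$-ish (coming from the resolvent of $\beta W$ at the RS point), so their ratio is $1+O(N^{-\eps})$, with a residual additive error $O(N^{-1/2+\eps})$ absorbing the genuinely uncontrollable fluctuations. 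The interval threshold $N^{-(1/4+\eps)}$ is exactly the crossover where the quadratic-in-$h$ corrections to this linearization become comparable to the additive noise floor, so it is chosen to make both regimes' error bounds consistent. The work of \cite{DW23}, referenced in the excerpt, presumably controls precisely this weak-field phase transition and should supply (or at least guide) the needed susceptibility estimate.

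The main obstacle I expect is obtaining the \emph{uniformity over all $h\in[0,K]$} with a single high-probability event on $W$: the cavity/CLT errors are naturally controlled pointwise in $h$, and one needs a net argument plus Lipschitz continuity of both sides in $h$ to upgrade to a uniform statement, being careful that the Lipschitz constant of $h\mapsto\boldm_h$ does not blow up (it is bounded, being a derivative of a log-partition function, but one needs this with high probability over $W$ uniformly). A secondary obstacle is making the Gaussian approximation of the cavity field quantitative enough near $h=0$: the third-moment/Berry--Esseen error in the local CLT is $O(N^{-1/2})$ in Kolmogorov distance, and translating this into the multiplicative guarantee requires the extra input that the main term is not too small, i.e.\ that $q_h$ and the susceptibility are well-behaved as $h\downarrow 0$ — precisely the content of the phase transition established in \cite{DW23}. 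I would handle the $h\to0$ limit and the pointwise cavity estimate first, then stitch them together with the net argument at the end.
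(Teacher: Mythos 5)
Your high-level plan — split into two field regimes, aim for an additive error for large $h$ and a multiplicative error for small $h$, then stitch together with a net argument using monotonicity — matches the skeleton of the paper's proof. The net argument you worry about is handled in the paper exactly as you anticipate: $\boldm(h)$ is nondecreasing (its $h$-derivative is a variance), $q_1(h)$ is strictly increasing, so one nets on values of $q_1$ and sandwiches. For the strong-field regime $h>N^{-(1/4+\eps)}$, the paper does not re-derive anything via cavity; it simply applies Markov to a $2k$-th moment bound from \cite{Han07} (restated as \Cref{lem:strong-field-magnetization}), which gives $\E[(R(\bsigma,\bone)-q_1)^{2k}]\le O(N^{-k})$ under overlap concentration. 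This is available off the shelf for $\beta<1$ and is where your cavity/Hanen-style intuition is essentially aligned with the paper.

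The genuine gap is in your weak-field argument, which is also the hardest and most novel part of the lemma. You propose linearizing $\boldm_h$ and $\E[\tanh(\beta\bg\sqrt{q}+h)]$ in $h$ and matching the leading susceptibility coefficient; but to get the stated bound one must show that the \emph{fluctuation} of $\boldm_h-q_1(h)$ (over $W$) is $O(\max\{1/N, h/\sqrt N\})$, a quantitative two-term tail that your Berry--Esseen/linearization route does not supply. The $O(N^{-1/2})$ Kolmogorov error from a local CLT on the cavity field is too crude: when $h=\Theta(N^{-1/2})$, both the signal and the noise floor sit at $N^{-1/2}$ and you need the error to degrade like $h/\sqrt N$, not be stuck at $N^{-1/2}$. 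The paper gets this via a completely different mechanism (\Cref{lem:weak-field-magnetization} and \Cref{lem:mag-second-moment}): a direct second-moment computation of $\E\left[Z^2\left(\boldm-q_1\right)^2\right]/\E[Z]^2$ using the replica trick and the Hubbard--Stratonovich transform, reducing the moment to a one-dimensional Gaussian integral in the overlap, then removing the $Z^2$ weight by H\"older using negative-moment control of the partition function from \cite{DW23}. That Gaussian-integral computation is what produces the $\max\{1/N, h/\sqrt N\}$ rate, and it has no analogue in your sketch. You also do not address why the second-moment result is only obtained as a $(2-\delta)$-moment rather than a full second moment (the need for negative moments of $Z$ and the $\delta$-loss in the H\"older step), which affects how tight the exponent $c_1$ can be in the Markov-plus-union-bound step of the net argument. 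Without some substitute for \Cref{lem:mag-second-moment}, the multiplicative bound in regime (i) is not established.
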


Note that for $h > N^{-(1/4 + \eps)}$, a bound of the form of (i) is strictly stronger than that in (ii).



The above theorem will be a corollary of two estimates for the mean magnetization. The first is essentially already present in \cite{Han07} (also see \cite{CT21}).

\begin{restatable}[{essentially \cite{Han07}}]{lemma}{strongfieldmeanmagnetization}
    \label{lem:strong-field-magnetization}
    Let $(\beta,h)$ satisfy \eqref{eq:overlap-conc}. Then, for any constant $k$,
    \[ \E \left[\left( R(\sigma,\bone) - q_1 \right)^{2k} \right] \le O\left( \frac{1}{N^k} \right) \mcom \]
    where we denote $q_1 = \E_{\bg \sim \calN\left(0,1\right)} \left[ \tanh\left( \beta \bg \sqrt{q} + h \right) \right]$.
\end{restatable}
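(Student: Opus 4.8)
I would follow the replica-symmetric cavity analysis of the SK model with external field from \cite{Han07} (see also \cite{CT21}); the only points needing attention are the quantitative bookkeeping of the error terms and the upgrade from an $L^2$ to an $L^{2k}$ bound, which comes from the \emph{exponential} form of \eqref{eq:overlap-conc}. Write $\langle\cdot\rangle$ for the Gibbs average under $\mu_{\beta\bW,h\bone}$ and $m_N = m_N(\bW) = \langle R(\sigma,\bone)\rangle$ for the disorder-dependent mean magnetization. Since $(a+b)^{2k} = O_k(1)\cdot(a^{2k}+b^{2k})$, it suffices to prove Gibbs-internal concentration of the magnetization and self-averaging of $m_N$ to the scalar $q_1$, i.e.
\begin{align}
   \E_{\bW}\,\big\langle\,(R(\sigma,\bone) - m_N)^{2k}\,\big\rangle &= O(N^{-k}), \label{eq:gibbsfluct}\\
   \E_{\bW}\,\big[\,(m_N - q_1)^{2k}\,\big] &= O(N^{-k}). \label{eq:biasterm}
\end{align}

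For \eqref{eq:biasterm} I would run the one-site cavity computation: isolating the last coordinate, $\langle\sigma_N\rangle$ equals the average of $\tanh\big(\beta\sum_{i<N}\bW_{iN}\sigma_i + h\big)$ under the cavity measure reweighted by $\cosh$ of the same quantity. Conditioned on the cavity disorder, \eqref{eq:overlap-conc} forces the self-overlap of this reweighted cavity measure to concentrate within $O(N^{-1/2})$ of $q$, so that the law of the cavity field lies within $O(N^{-1/2+o(1)})$, in a suitable transport sense, of the effective Gaussian whose variance is prescribed by the replica-symmetric order parameter. Carrying the Onsager reaction term through and averaging over the $N$ exchangeable coordinates yields $m_N = q_1 + O(N^{-1/2+o(1)})$ with high probability over $\bW$; the exponential tail in \eqref{eq:overlap-conc} then promotes this to the moment bound \eqref{eq:biasterm}.

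For \eqref{eq:gibbsfluct} I would use that $\langle(R(\sigma,\bone) - m_N)^2\rangle = \tfrac12\langle(R(\sigma^1,\bone) - R(\sigma^2,\bone))^2\rangle$ for independent replicas $\sigma^1,\sigma^2$, and that this Gibbs variance equals the per-site susceptibility $\tfrac1N\,\partial_h m_N(h)$. The monotonicity and boundedness of $m_N(\cdot)$, combined with the self-averaging of the free energy $h\mapsto\tfrac1N\log Z(h)$ (Gaussian concentration of $\bW\mapsto\log Z$) and the cavity-field control above (which exhibits $\partial_h\E_{\bW}m_N$ as an $O(1)$ quantity), localize $\partial_h m_N$ to $O(1)$ with high probability, giving $\langle(R(\sigma,\bone) - m_N)^2\rangle = O(1/N)$; equivalently, one couples the two cavity computations and uses that the common disorder pins down the macroscopic state up to $O(N^{-1/2+o(1)})$. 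The higher moments again follow from the exponential form of \eqref{eq:overlap-conc}.

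The main obstacle is the quantitative cavity estimate behind \eqref{eq:biasterm}: \eqref{eq:overlap-conc} controls only the $\pm1$-valued overlap, and turning this into closeness of the cavity field to the \emph{correct} Gaussian --- correct mean, after the Onsager reaction term is accounted for, and variance $\beta^2 q$ rather than the naive $\beta^2$ --- with an $N^{-1/2}$-type rate is exactly the delicate content of \cite{Han07}. Our only additional work is to verify that these error exponents survive and that the exponential tail of \eqref{eq:overlap-conc} propagates so as to yield $L^{2k}$, rather than merely $L^2$ or in-probability, control.
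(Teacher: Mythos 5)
Your proposal is correct and takes essentially the same route as the paper. The paper's ``proof'' of this lemma is in fact just a citation: it says the argument is identical to Corollary~4.1, Theorem~5.1, and Theorem~6.1 of \cite{Han07}, with the one observation (which you also identify) that those moment bounds only require overlap concentration \eqref{eq:overlap-conc} and not the overlap central limit theorem that Hanen needs for his stronger CLT result. Your decomposition into the Gibbs-internal fluctuation \eqref{eq:gibbsfluct} and the disorder-averaged bias \eqref{eq:biasterm}, the reduction of the former to the per-site susceptibility, the one-site cavity computation behind the latter, and the remark that the exponential tail of \eqref{eq:overlap-conc} is what upgrades $L^2$ to $L^{2k}$, are all reconstructions of what happens inside Hanen's proof; the paper simply doesn't spell these steps out. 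The one spot where your sketch is genuinely thinner than Hanen's is the claim that $\partial_h m_N = O(1)$ with high probability --- monotonicity and boundedness of $m_N$ give only an integrated bound, and the pointwise susceptibility control is itself part of the RS cavity analysis --- but since both you and the paper defer the quantitative cavity estimates to \cite{Han07}, this does not constitute a gap relative to the paper's own level of detail.
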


The proof is identical to that of Corollary 4.1, Theorem 5.1, and Theorem 6.1 in \cite{Han07}---we observe that all these proofs of upper bounds on the moments go through assuming overlap concentration, even if we do not have a central limit theorem for the overlap (such a CLT \emph{is} required for the stronger Theorem 1.2 in Hanen's paper, which proves a CLT for the magnetization).

However, this lemma is not useful in the regime where $h$ is tiny, say $\Theta\left(N^{-1/2}\right)$, since $q_1$ itself is $O(h)$, and the error is at the same scale as the estimate. To get around this, we prove a mean magnetization formula that is more precise in the small $h$ regime, albeit only with a lower moment that will suffice for our purposes.

\begin{restatable}[Magnetization estimates under weak external field]{lemma}{weakfieldmagnetization}
    \label{lem:weak-field-magnetization}
    Let $\delta > 0$ and $h < N^{-\alpha}$ for some $\alpha > 1/4$. Then,
    \[ \E \left[ \left| \frac{1}{N} \sum_{i=1}^{N} \langle \sigma_i\rangle - q_1 \right|^{2-\delta}  \right] \le O\left( \max\left\{ \frac{1}{N} , \frac{h}{\sqrt{N}}\right\} \right)^{2-\delta} \mper \]
\end{restatable}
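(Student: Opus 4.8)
\textbf{Proof proposal for \Cref{lem:weak-field-magnetization}.}

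The plan is to bound the $(2-\delta)$-th moment of the deviation of the empirical magnetization $\frac1N\sum_i \langle\sigma_i\rangle$ from its (replica-symmetric) prediction $q_1$ by splitting the error into (a) the fluctuation of $\frac1N\sum_i\langle\sigma_i\rangle$ around its quenched mean $\frac1N\E\sum_i\langle\sigma_i\rangle$ over the disorder $W$, and (b) the bias of that quenched mean relative to $q_1$. For part (a), I would use a martingale/Gaussian-interpolation (Efron--Stein--type) argument: exposing the rows of $W$ one at a time, each coordinate change perturbs the free energy by $O(1/\sqrt N)$, and $\sum_i\langle\sigma_i\rangle$ is a sufficiently smooth function of $W$ that standard Gaussian concentration gives sub-Gaussian tails with variance proxy $O(1/N)$ — here is where the smallness of $h$ matters, since the relevant Lipschitz constant of $\sigma\mapsto \frac1N\langle\sigma,\bone\rangle$ composed with the Gibbs average scales like $\max\{1/\sqrt N, h\}$ rather than $\Theta(1)$. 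Concretely, I expect $\Var_W(\frac1N\sum_i\langle\sigma_i\rangle) \le O(\max\{1/N, h/\sqrt N\}^2)$, which after raising to the $(2-\delta)/2$ power and using $\|\cdot\|_{2-\delta}\le\|\cdot\|_2$ yields the claimed bound for this piece.

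For part (b), the bias, I would run the standard cavity/TAP computation for the SK model at $\beta<1$: adding one spin and using overlap concentration (\eqref{eq:overlap-conc}, which holds for all $h$ when $\beta<1$) one shows $\frac1N\E\sum_i\langle\sigma_i\rangle = \E_{\bg}[\tanh(\beta\bg\sqrt q + h)] + (\text{error})$, where the error comes from replacing the true cavity field by a Gaussian with the matching variance $q = q_{\beta,h}$. The subtlety at small $h$ is that $q_1$ and $q$ are themselves $O(h)$, so one cannot afford multiplicative errors: I would instead track the error \emph{additively}, showing it is $O(1/N) + O(h\cdot N^{-c})$ for some $c>0$ by carefully expanding $\tanh$ and using the $\beta<1$ contraction to control the self-consistent corrections; the hypothesis $h<N^{-\alpha}$ with $\alpha>1/4$ is exactly what makes these correction terms dominated by $\max\{1/N, h/\sqrt N\}$ rather than the naive $h^2$ or $h^3$ terms one gets from a Taylor expansion of $\tanh$.

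The main obstacle I anticipate is part (b): getting the \emph{additive} rather than multiplicative control on the cavity error at the critical scale $h\sim N^{-1/4}$, since this is precisely the regime where the second-order cavity corrections (which would normally be lower-order) become comparable to $q_1$ itself, and one has to be careful that the error from overlap concentration (which gives fluctuations of size $\Theta(1/\sqrt N)$ in the overlap, hence $\Theta(h/\sqrt N)$ in the relevant quantity after the $\tanh'$ factor $\approx 1$ but multiplied by the external-field scale) does not blow up. I would handle this by doing the cavity argument with the external field $h\bone$ retained explicitly throughout, bounding each cavity increment by its conditional variance (which carries the factor $h$), and summing; the moment exponent $2-\delta$ rather than $2$ gives a small amount of slack that lets one avoid proving a full CLT (as in \cite{Han07}) and instead get away with an $L^{2-\delta}$ bound via Hölder. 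Once parts (a) and (b) are assembled, the triangle inequality in $L^{2-\delta}$ finishes the proof.
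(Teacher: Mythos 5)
Your proposal takes a genuinely different route from the paper, but both of its halves assume exactly the estimates that constitute the real difficulty, so as written there is a gap. The paper does not decompose into disorder-fluctuation plus bias at all: it bounds the weighted second moment $\E\bigl[Z^2\bigl(\frac1N\sum_i\langle\sigma_i\rangle-q_1\bigr)^2\bigr]$ (\Cref{lem:mag-second-moment}), which after introducing replicas becomes a completely explicit Gaussian computation (no quenched concentration needed), evaluated via the Hubbard--Stratonovich transform and a Laplace-type analysis of a one-dimensional Gaussian integral (\Cref{prop:magnetization-control,prop:overlap-control,prop:reweigh-control}); the weight $Z^2$ is then stripped off by H\"older together with negative moments of the partition function from \cite[Corollary~3.5]{DW23}. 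This is the sole place where the exponent $2-\delta$ and the hypothesis $\alpha>1/4$ are used: $\alpha>1/4$ is what makes $Z$ concentrate well enough that $\E[Z^{-(4/\delta-2)}]$ is comparable to $\E[Z]^{-(4/\delta-2)}$, and $2-\delta$ is the slack that lets H\"older absorb that negative moment.

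The concrete gaps in your version are the following. For part (a), the assertion that $W\mapsto\frac1N\sum_i\langle\sigma_i\rangle$ has "Lipschitz constant $\max\{1/\sqrt N,h\}$" so that standard Gaussian concentration gives variance $O(\max\{1/N,h/\sqrt N\}^2)$ is not a standard fact and would not follow from Efron--Stein or Gaussian--Poincar\'e as stated: the derivative of $\langle\sigma_i\rangle$ in the entry $W_{jk}$ is $\frac{\beta}{\phantom{.}}\bigl(\langle\sigma_i\sigma_j\sigma_k\rangle-\langle\sigma_i\rangle\langle\sigma_j\sigma_k\rangle\bigr)$ up to normalization, so the gradient norm involves sums of $N^2$ Gibbs covariances; the naive bound is useless, and extracting the crucial $h$-dependence (the fact that these covariances are small when $h$ is small) requires precisely the quantitative weak-field control the lemma is meant to establish. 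For part (b), additive control of the cavity bias at scale $\max\{1/N,h/\sqrt N\}$ is likewise not available off the shelf: Hanen-type arguments (which the paper does use, but only for the strong-field regime in \Cref{lem:strong-field-magnetization}) produce errors of order $1/\sqrt N$ around $q_1$, which is exactly the wrong scale when $q_1=O(h)=o(N^{-1/4})$, and your appeal to "the $2-\delta$ slack via H\"older" has no identified quantity to apply H\"older to in your decomposition. In short, the structural split is reasonable, but each half currently rests on an unproved estimate that is as hard as the lemma itself; the paper circumvents both by the $Z^2$-reweighting trick, which converts the problem into an exact annealed computation plus a negative-moment bound on $Z$.
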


We will defer the proof of \Cref{lem:weak-field-magnetization} to \Cref{subsec:weak-ext-field-magnetization}. We may now prove \Cref{lem:all-field-mag-conc}.

\begin{proof}
    For ease of notation, let $\boldm(h)$ be the random variable $\frac{1}{N} \sum_{i=1}^{N} \langle \sigma_i\rangle$ under external field $h$. It is not difficult to show that $\boldm(h)$ is non-decreasing in $h$: indeed, its derivative with respect to $h$ is the variance of the magnetization in the corresponding model. For $h > N^{- \left( \frac{1}{4} + \eps \right)}$, Markov's inequality applied to \Cref{lem:strong-field-magnetization} yields that
    \begin{equation}
        \label{eq:strong-field-high-prob-markov-estimate}
        \Pr\left[ \left| \boldm(h) - q_1(h) \right| > \frac{1}{N^{\frac{1}{2} - \eps}} \right] < \frac{1}{N^2} \mper
    \end{equation}
    On the other hand, for $h < N^{- \left( \frac{1}{4} + \eps \right)}$, Markov's inequality applied to \Cref{lem:weak-field-magnetization} yields that for some $c_1$ we will fix shortly,
    \[ \Pr\left[ \left| \boldm(h) - q_1(h) \right| > \max\left\{ \frac{h}{N^{ \frac{1}{2} - \frac{c_1}{2-\delta} }} , \frac{1}{N^{1 - \frac{c_1}{2-\delta}}} \right\} \right] < O\left(N^{-c_1}\right) \mper \]
    Now, choose $\delta = \eps$ and $c_1$ such that $\frac{c_1}{2-\eps} = \frac{1}{2}-\eps$, so appealing to \Cref{lem:q1-q-bound}, the above reduces to
    \begin{equation}
        \label{eq:weak-field-high-prob-markov-estimate}
        \Pr\left[ \left| \boldm(h) - q_1(h) \right| > \max\left\{ \frac{q_1(h)}{N^{\eps}} , \frac{1}{N^{1/2 + \eps}} \right\} \right] < O\left( N^{-c_1} \right) \mper
    \end{equation}
    
    For $c_2 = \frac{1}{2} + \eps$, let $S = \left\{ q_1^{-1}\left( i N^{-c_2} \right) : 0 \le i \le KN^{c_2} \right\}$; note that $S$ is well defined because $q_1$ is strictly increasing in $h$. We shall perform a union bound over $S$. For sufficiently small $\eps$, it is plainly true that $N^{-c_1} \cdot N^{c_2} = o(1)$, since $c_1 \approx 1$ and $c_2 \approx \frac{1}{2}$. Consequently, \eqref{eq:weak-field-high-prob-markov-estimate} yields that with probability $1-o(1)$, for all $h \in S$ with $h < N^{- \left( \frac{1}{4} + \eps \right)}$,
    \[ \left| \boldm(h) - q_1(h) \right| < O\left(\max\left\{ \frac{q_1(h)}{N^{\eps}} , \frac{1}{N^{1/2 + \eps}} \right\}\right) \mper \]
    On the other hand, \eqref{eq:strong-field-high-prob-markov-estimate} yields that with probability $1-o(1)$, for all $h \in S$ with $h > N^{- \left( \frac{1}{4} + \eps \right)}$,
    \[ \left| \boldm(h) - q_1(h) \right| < \frac{1}{N^{1/2 - \eps}} \mper \]
    This bound further extends from $S$ to all $h$ bounded by $K$: indeed, suppose $h \le K$ and $h_1 < h < h_2$ such that $h_1,h_2 \in S$, $|q_1(h_1) - q_1(h)| < N^{-c_2}$ and $|q_1(h_2) - q_1(h)| < N^{-c_2}$. Then, if $h < N^{- \left( \frac{1}{4} + \eps \right)}$,
    \[ \boldm(h) \ge \boldm(h_1) \ge q_1\left(h_1\right) - O\left(\frac{q_1(h_1)}{N^{\eps}}\right) - O\left(\frac{1}{N^{1/2 + \eps}}\right) \ge q_1(h) - O\left(\frac{q_1(h)}{N^{\eps}}\right) - O\left(\frac{1}{N^{1/2 + \eps}}\right) \mcom \]
    where we have used the fact that $q_1$ is strictly increasing in $h$.
    Similarly, we have 
    \[ \boldm(h) \le q_1(h) + O\left(\frac{q_1(h)}{N^{\eps}}\right) + O\left(\frac{1}{N^{1/2 + \eps}}\right) \mper \]
    An identical argument works for the alternate error bound when $h > N^{ - \left( \frac{1}{4} + \eps \right) }$.
\end{proof}

\subsection{High-precision mean magnetization estimates under weak external field}

\label{subsec:weak-ext-field-magnetization}

For this section, let $h < N^{-\alpha}$ for some $\alpha > 1/4$, and $\beta < 1$. We also define $q,q_1$ by the recursions
\begin{align*}
    q &= \E_{\bz \sim \calN(0,1)} \left[ \tanh^2 \left( \beta \sqrt{q} \bz + h \right)  \right] \mcom \\
    q_1 &= \E_{\bz \sim \calN(0,1)} \left[ \tanh\left( \beta\sqrt{q} \bz + h \right)  \right] \mper
\end{align*}
We also denote $Z$ to be the partition function of the SK model with external field $h$.

In this section, we will prove the following lemma.

\weakfieldmagnetization*

\begin{lemma}\label{lem:mag-second-moment}
    $\displaystyle \E \left[ Z^2 \left( \frac{1}{N} \sum_{i=1}^{N} \langle \sigma_i\rangle - q_1 \right)^{2}  \right] \le \E[Z]^2 \cdot O\left( \max\left\{ \frac{1}{N} , \frac{h}{\sqrt{N}}\right\}  \right)^2 \mper$
\end{lemma}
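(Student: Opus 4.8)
The plan is to integrate out the GOE disorder $\bW$, reducing the statement to a second-moment computation under a product measure, and then to control that via a Hubbard--Stratonovich linearization of the quadratic two-replica interaction. \emph{Step 1 (Gaussian integration).} Writing $m(\sigma)\coloneqq\frac1N\langle\sigma,\bone\rangle$ and $H_{\bW}(\sigma)\coloneqq\frac\beta2\sigma^\top\bW\sigma+h\langle\sigma,\bone\rangle$, one has $Z\big(\tfrac1N\sum_i\langle\sigma_i\rangle-q_1\big)=\sum_\sigma e^{H_{\bW}(\sigma)}(m(\sigma)-q_1)$, hence $Z^2\big(\tfrac1N\sum_i\langle\sigma_i\rangle-q_1\big)^2=\sum_{\sigma^1,\sigma^2}e^{H_{\bW}(\sigma^1)+H_{\bW}(\sigma^2)}(m(\sigma^1)-q_1)(m(\sigma^2)-q_1)$. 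Using $\E_{\bW}\exp\big(\tfrac\beta2\sum_a(\sigma^a)^\top\bW\sigma^a\big)=\exp\big(\tfrac{\beta^2}{4N}\sum_{a,b}\langle\sigma^a,\sigma^b\rangle^2\big)$ (the diagonal GOE correction drops once $\sigma_i^2=1$ is used) together with $\E[Z]=e^{\beta^2N/4}(2\cosh h)^N$, dividing by $\E[Z]^2$ reduces the claim to
\[ \E_{\xi^1,\xi^2}\!\left[\exp\!\Big(\tfrac{\beta^2}{2N}\langle\xi^1,\xi^2\rangle^2\Big)(m(\xi^1)-q_1)(m(\xi^2)-q_1)\right]=O\!\left(\max\!\left\{\tfrac1N,\tfrac h{\sqrt N}\right\}\right)^2, \]
where $\xi^1,\xi^2\in\{\pm1\}^N$ are independent with i.i.d.\ coordinates of mean $t\coloneqq\tanh h$.

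\emph{Step 2 (linearization and per-site moments).} Write $\exp\big(\tfrac{\beta^2}{2N}\langle\xi^1,\xi^2\rangle^2\big)=\E_{\mathsf g\sim\calN(0,1)}\prod_i\exp\big(\tfrac{\beta\mathsf g}{\sqrt N}\xi^1_i\xi^2_i\big)$. Conditioned on $\mathsf g$, the tilted law $\wh{\mu}_{\mathsf g}$ of $(\xi^1,\xi^2)$ has i.i.d.\ coordinate pairs with per-site normalizer $\zeta(a)\coloneqq\cosh a+t^2\sinh a$, $a\coloneqq\beta\mathsf g/\sqrt N$; by Fubini the quantity in Step~1 equals $\E_{\mathsf g}\big[\zeta(a)^N\big(\mathrm{Cov}_{\wh{\mu}_{\mathsf g}}(m(\xi^1),m(\xi^2))+(\bar m(\mathsf g)-q_1)^2\big)\big]$ with $\bar m(\mathsf g)\coloneqq\E_{\wh{\mu}_{\mathsf g}}m(\xi^1)=\E_{\wh{\mu}_{\mathsf g}}m(\xi^2)$. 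A direct two-point computation gives $\bar m(\mathsf g)=te^a/\zeta(a)$, hence $|\bar m(\mathsf g)-t|=t(1-t^2)|\sinh a|/\zeta(a)=O(t|a|)$, and $\mathrm{Cov}_{\wh{\mu}_{\mathsf g}}(m(\xi^1),m(\xi^2))=\frac1N\cdot\frac{(1-t^2)^2\sinh 2a}{2\zeta(a)^2}$, which is odd in $a$ up to an even remainder of size $O(t^2a^2)$. Also $|t-q_1|=O(hq)=O(h^3)$ from the defining fixed-point equations together with $q=O(h^2)$ (\Cref{lem:q1-q-bound}); since $h<N^{-\alpha}$ with $\alpha>1/4$ this is $O(h/\sqrt N)$.

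\emph{Step 3 (Gaussian integrals and assembly).} For $\beta<1$ and $h\le N^{-1/4}$, $\log\zeta(a)\le\tfrac{a^2}2+t^2|a|$ for $|a|\le1$, so $\zeta(\beta\mathsf g/\sqrt N)^N\le\exp\big(\tfrac{\beta^2}2\mathsf g^2+\beta t^2\sqrt N|\mathsf g|\big)$ on $|\mathsf g|\le\sqrt N/\beta$, while the complementary range contributes $e^{-\Omega(N)}$; since $\beta^2<1$ and $t^2\sqrt N=O(1)$, this yields $\E_{\mathsf g}[\zeta(\beta\mathsf g/\sqrt N)^N\mathsf g^k]=O_{\beta,k}(1)$ for each fixed $k$, and the odd part of $a\mapsto\zeta(a)^N$ satisfies $|(\zeta(a)^N)_{\mathrm{odd}}|\le Nt^2|\sinh a|\,\zeta(|a|)^{N-1}$. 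Plugging the Step~2 decomposition into $\E_{\mathsf g}[\zeta^N(\cdots)]$: the even $O(t^2a^2)$ pieces (the covariance remainder and $(\bar m(\mathsf g)-t)^2$) each contribute $O(t^2/N)\cdot\E_{\mathsf g}[\zeta^N\mathsf g^2]=O(t^2/N)$; the piece $(t-q_1)^2=O(h^6)$ contributes $O(h^6)\cdot\E_{\mathsf g}[\zeta^N]=O(h^6)=O(t^2/N)$ (using $h^4<1/N$); and the odd leading part of the covariance, integrated against the even $\mathsf g$-density, pairs only with the odd part of $\zeta^N$, giving $O(t^2)\cdot\E_{\mathsf g}[\mathsf g^2\,\zeta(|a|)^{N-1}]=O(t^2/N)$. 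Since $t=\Theta(h)$, the total is $O(t^2/N)=O(\max\{1/N,h/\sqrt N\})^2$, which gives the lemma after multiplying back by $\E[Z]^2$.

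\emph{Main obstacle.} The delicate point is the boundedness of the annealed moments $\E_{\mathsf g}[\zeta(\beta\mathsf g/\sqrt N)^N\mathsf g^k]=O(1)$: this is exactly where the hypothesis $\beta<1$ enters (it is the standard ``bounded annealed second moment'' phenomenon for the SK partition function), and it requires care with the atypically large values of $\mathsf g$. A secondary subtlety is that the covariance term is naively of order $N^{-3/2}$, too large for the target once $h\ll N^{-1/4}$; one genuinely has to exploit its oddness in $\mathsf g$, against the near-evenness of the weight $\zeta^N$, to recover the extra factor $t^2\sqrt N$ and land at $O(t^2/N)$.
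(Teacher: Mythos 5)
Your proof is correct, and the overall reduction coincides with the paper's: the replica trick plus Gaussian integration over $\bW$, the identity $\E[Z]=(2\cosh h)^N e^{\beta^2 N/4}$, the Hubbard--Stratonovich linearization (\Cref{fact:hs}), and the observation that conditionally on the scalar Gaussian the coordinate pairs are i.i.d., with exactly the per-site mean and correlation that the paper computes (your $\zeta,\bar m,k$ are the paper's $C,m,k$ after normalizing by $(2\cosh h)^2$). Where you genuinely diverge is in how the resulting one-dimensional Gaussian integral is estimated. The paper completes the square to tilt the Gaussian to mean $\mu=\gamma^2\tanh^2(h)$ and variance $\gamma^2/N$, isolates a residual weight $g_N(z)^N$ whose Taylor expansion vanishes to second order, and then applies Cauchy--Schwarz together with the moment bounds of \Cref{prop:magnetization-control,prop:overlap-control,prop:reweigh-control}; the delicate covariance term is controlled there because the shifted mean $\mu+h^2$ matches $q$ up to $O(1/N)$. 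You instead keep the centered Gaussian and exploit parity: the exact per-site covariance $\frac{(1-t^2)^2\sinh 2a}{2\zeta(a)^2}$ vanishes at $a=0$ and is odd up to an $O(t^2a^2)$ remainder, so against the even Gaussian density it only pairs with the odd part of $\zeta(a)^N$, which you correctly bound by $Nt^2\abs{\sinh a}\,\zeta(\abs{a})^{N-1}$; combined with the annealed moment bound $\E[\zeta(\beta\mathsf g/\sqrt N)^N\mathsf g^k]=O(1)$ (which is where $\beta<1$ and $h\lesssim N^{-1/4}$ enter, playing the role of the paper's $\E[g_N^{2N}]=O(1)$), every surviving piece is $O(t^2/N)$. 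You also correctly identified that the crude bound $\abs{t-q_1}=O(h^2)$ from \Cref{lem:q1-q-bound} is insufficient and supplied the sharper $O(h^3)$ Taylor estimate, which parallels the role of \Cref{prop:magnetization-control}. Your route is somewhat more self-contained (no change of measure, no $g_N$, no separate Gaussian-moment propositions), at the cost of the explicit odd/even bookkeeping; the paper's tilting argument is more mechanical and reuses its Gaussian-expectation propositions elsewhere in spirit. One small slip in your assembly: in the odd--odd pairing the displayed bound should read $O(t^2/N)\cdot\E_{\mathsf g}[\mathsf g^2\,\zeta(\abs{a})^{N-1}]$ rather than $O(t^2)\cdot\E_{\mathsf g}[\mathsf g^2\,\zeta(\abs{a})^{N-1}]$ (the factor $1/N$ comes from converting $a^2=\beta^2\mathsf g^2/N$); since $\E[\mathsf g^2\zeta(\abs{a})^{N-1}]=O(1)$, your stated conclusion $O(t^2/N)$ is the correct one and the argument goes through as intended.
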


\begin{proof}[Proof of \Cref{lem:weak-field-magnetization}]
    We have by H\"{o}lder's inequality and \cite[Corollary 3.5]{DW23} (using the fact that $\alpha > 1/4$) that
    \begin{align*}
        \E \left[  \left| \frac{1}{N} \sum_{i=1}^{N} \langle \sigma_i\rangle - q_1\right|^{2-\delta} \right] &\le \E\left[ Z^2 \cdot \left( \frac{1}{N} \sum_{i=1}^{N} \langle \sigma_i\rangle - q_1 \right)^2  \right]^{\frac{2-\delta}{2}} \cdot \E\left[ \frac{1}{Z^{4/\delta - 2}} \right]^{\delta/2} \\
            &\le \E[Z]^{2-\delta} \cdot O\left( \max\left\{ \frac{1}{N} , \frac{h}{\sqrt{N}}\right\} \right)^{2-\delta} \cdot \frac{1}{\E[Z]^{2-\delta}} \\
            &= O\left( \max \left\{ \frac{1}{N} , \frac{h}{\sqrt{N}}\right\} \right)^{2-\delta} \mcom
    \end{align*}
    as desired.
\end{proof}

We dedicate the remainder of this section to proving \Cref{lem:mag-second-moment}.
Using replicas, we can write 
\[
\E\left[Z^2 \left( \frac{1}{N} \sum_{i=1}^{N} \langle \sigma_i\rangle - q_1 \right)^{2}  \right]
= \sum_{\sigma, \rho} \E[e^{H_N(\sigma) + H_N(\rho)} (m(\sigma) - q_1)(m(\rho) - q_1)],\]
where $H_N(\sigma) = \frac{\beta}{2} \langle\sigma, \bW \sigma\rangle + h \ev{\sigma, 1}$, so that $\Cov(H_N(\sigma), H_N(\rho)) = \frac{\beta^2}{2} NR(\sigma, \rho)^2$. It follows that $H_N(\sigma) + H_N(\rho)$ is distributed as $\calN\left( h \left\langle \sigma+\rho , \bone\right\rangle , N\beta^2 \left( 1 + R(\sigma,\rho)^2 \right) \right)$, and so,
\[ \E \left[ e^{H_N(\sigma) + H_N(\rho)}  \right] = e^{ \frac{N\beta^2}{2} \left( 1 + R(\sigma,\rho)^2 \right) + h \left\langle \sigma+\rho , \bone\right\rangle  } \mper \]

Since the summand in fact has no dependence on the disorder, we have reduced the moment calculation to an evaluation of the double sum 
\begin{align*}
    &\sum_{\sigma, \rho \in \{\pm 1\}^N} e^{\frac{N\beta^2}{2}(1 + R(\sigma, \rho)^2) + h\ev{\sigma + \rho, \bone}} (m(\sigma) - q_1)(m(\rho) - q_1) \\
    &\qquad \qquad = (\E Z)^2 (2 \cosh(h))^{-2N} \sum_{\sigma, \rho} e^{\frac{N\beta^2}{2} R(\sigma, \rho)^2 + h\ev{\sigma + \rho, \bone}}(m(\sigma) - q_1)(m(\rho) - q_1)\mcom
\end{align*}
where we have used that $\E[Z] = (2\cosh(h))^Ne^{\frac{\beta^2N}{4}} $. This explains the dependence on $(\E Z)^2$.

It thus remains to compute the double sum. To this end, we will use the HS transform, followed by the Laplace method. 
\begin{fact}[Hubbard--Stratonovich transform]\label{fact:hs}
    For any $c > 0$, we have 
    \[
        e^{cx^2} = \frac{1}{\sqrt{c\pi}}\int_{\R} e^{-\frac{z^2}{c} + 2zx}\dif z
    \]
\end{fact}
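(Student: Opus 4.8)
The plan is to prove this by completing the square in the exponent and reducing to the standard Gaussian integral $\int_{\R} e^{-t^2}\dif t = \sqrt{\pi}$. First I would rewrite the exponent as the purely algebraic identity
$-\frac{z^2}{c} + 2zx = -\frac{1}{c}(z - cx)^2 + cx^2$,
which is valid for every $c > 0$ and every $x \in \R$; here $c \neq 0$ is what makes the division meaningful, and $c > 0$ is what makes the improper integral converge. Since the term $e^{cx^2}$ does not involve the integration variable $z$, it pulls out of the integral, leaving $\frac{1}{\sqrt{c\pi}}\,e^{cx^2}\int_{\R} e^{-(z-cx)^2/c}\dif z$.

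Second, I would evaluate $\int_{\R} e^{-(z-cx)^2/c}\dif z$ via the substitution $u = (z - cx)/\sqrt{c}$, so that $\dif z = \sqrt{c}\,\dif u$ and the integral becomes $\sqrt{c}\int_{\R} e^{-u^2}\dif u = \sqrt{c\pi}$. Multiplying by the prefactor $\frac{1}{\sqrt{c\pi}}$ cancels this factor exactly and yields $e^{cx^2}$, as claimed. There is no genuine obstacle: the only point requiring care is that the hypothesis $c > 0$ is used twice, once for convergence of the improper integral and once so that $\sqrt{c}$ is a well-defined positive real; equivalently, one may phrase the entire computation as the observation that a centered Gaussian of variance $c/2$ has moment generating function $z \mapsto \E[e^{2xz}] = e^{cx^2}$, which is the asserted identity after writing out the Gaussian density.
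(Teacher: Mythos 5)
Your proof is correct: completing the square, the substitution $u = (z-cx)/\sqrt{c}$, and the Gaussian integral $\int_{\R} e^{-u^2}\dif u = \sqrt{\pi}$ give exactly the asserted identity, and you use $c>0$ precisely where it is needed. The paper states this as a standard fact without proof, and your argument is the standard computation (equivalently, the MGF of a $\calN(0,c/2)$ variable) that it implicitly relies on.
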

Applying the HS transform with $c = \frac{2\beta^2}{N}$ and $x = \frac{\ev{\sigma, \rho}}{2}$, we obtain that 
\[
    e^{\frac{N\beta^2}{2} R(\sigma, \rho)^2} = e^{\frac{\beta^2}{2N} \ev{\sigma, \rho}^2} = \sqrt{\frac{N}{2\pi\beta^2}} \int_{\R} e^{-\frac{N}{2\beta^2} z^2 + z\ev{\sigma, \rho}} \dif z,
\]
so that the double sum becomes 
\begin{align*}
    \sqrt{\frac{N}{2\pi\beta^2}} \cdot \int_{\R} \exp(-\frac{N}{2\beta^2} z^2) \sum_{\sigma, \rho} \exp(z\ev{\sigma, \rho} + h\ev{\sigma + \rho, \bone}) (m(\sigma) - q_1)(m(\rho) - q_1) \dif z.
\end{align*}
Observe that given $z$, the pairs $(\sigma_i, \rho_i)$ are independent of each other, and hence this sum over $\sigma, \rho$ is tractable to explicitly compute. 
In particular, consider the distribution $p$ on $\{\pm 1\}^2$, with $p(a, b) = \frac{1}{C(z, h)} \exp(zab + ha + hb)$. Then one can check the normalization constant is $C(z, h) = 2(e^z \cosh(2h) + e^{-z})$, and we have
\[  \sum_{\sigma, \rho} \exp(z\ev{\sigma, \rho} + h\ev{\sigma + \rho, \bone}) (m(\sigma) - q_1)(m(\rho) - q_1) = C(z,h)^N \cdot \E_{p^{\otimes N}} \left[ \left(m(\sigma) - q_1\right) \left( m(\rho) - q_1 \right)  \right] \mper \]

Now, expanding it out, we have
\[ (m(\sigma) - q_1)(m(\rho) - q_1) = \frac{1}{N^2} \sum_{i \neq j} (\sigma_i - q_1)(\rho_j - q_1) + \frac{1}{N^2} \sum_{i} (\sigma_i - q_1)(\rho_i - q_1) \mcom \]
and therefore, 
\begin{align*}
    f_N(z) &\defeq \E_{p^{\otimes N}}[(m(\sigma) - q_1)(m(\rho) - q_1)] \\
        &= \frac{N(N-1)}{N^2} (m(z) - q_1)^2 + \frac{1}{N} (k(z) - 2q_1 m(z) + q_1^2) \\
        &= \left( m(z) - q_1  \right)^2 + \frac{1}{N} \left( k(z) - m(z)^2 \right) \\
        &= \left( m(z) - q_1 \right)^2 + \frac{1}{N} \left( k(z) - q \right) + \frac{1}{N} \left( q - q_1^2 \right) + \frac{1}{N} \left( q_1^2 - m(z)^2 \right)
\end{align*}
where $m(z) \defeq \E_{p}[a] = \E_{p}[b]$ and $k(z) \defeq \E_{p}[ab]$. 
Thus, our goal is to show that
\begin{align*}
    \left( 2 \cosh(h) \right)^{-2N} \cdot \sqrt{\frac{N}{2\pi\beta^2}} \cdot \int_{\R} \exp(N\qty[-\frac{z^2}{2\beta^2} + \log(2(e^z \cosh(2h) + e^{-z}))]) f_N(z) \dif z \le O\left( \frac{h^2}{N} \right) \mper
\end{align*}
Rearranging, simplifying, and summarizing, we would like to show that
\[ \E_{\bz \sim \calN\left( 0 , \frac{\beta^2}{N} \right)} \left[ \left( \frac{ e^z \cosh(2h) + e^{-z} }{ \cosh(2h) + 1  } \right)^N f_N(z) \right] \le O\left( \frac{h^2}{N} \right) \mcom \]
where an explicit calculation yields that
\begin{align*}
    m(z) &= \frac{e^z \sinh(2h)}{e^z \cosh(2h) + e^{-z}} \text{ and} \\
    k(z) &= \frac{e^z \cosh(2h) - e^{-z}}{e^z \cosh(2h) + e^{-z}}\mper
\end{align*}
Let us massage this expression to deal with the first part of the integrand. Set
\[ g_N(z) = \frac{\cosh(z) + \sinh(h)^2 e^z}{ \left( 1 + \sinh(h)^2 \right) e^{\tanh(h)^2 \cdot z} \left( e^{\frac{z^2}{2} \left( 1 - \tanh(h)^4 \right) }  \right)  } \mper \]
Then,
\[ (*) = \E_{\bz \sim \calN\left( 0 , \frac{\beta^2}{N} \right)} \left[ g_N(\bz)^N \cdot e^{ N \tanh(h)^2 \bz } \cdot e^{ \frac{N\bz^2}{2} \left( 1 - \tanh(h)^4 \right) } \cdot f_N(\bz) \right] \]
Define $\gamma > 0$ by $\frac{1}{\gamma^2} = \frac{1}{\beta^2} - \left( 1 - \tanh(h)^4 \right)$, and $\mu = \gamma^2 \tanh(h)^2$, so
\begin{align*}
    (*) &= \frac{\gamma}{\beta} \cdot \E_{\bz \sim \calN\left( 0 , \frac{\gamma^2}{N} \right)} \left[ g_N(\bz)^N \cdot e^{ N \tanh(h)^2 \bz } \cdot f_N(\bz) \right] \\
        &= \frac{\gamma}{\beta} \cdot e^{ \frac{1}{2} \cdot N \gamma^2 \tanh(h)^4 } \cdot \E_{\bz \sim \calN\left( \mu , \frac{\gamma^2}{N} \right)} \left[ g_N(\bz)^N \cdot f_N(\bz) \right] \\
        &\asymp \E_{\bz \sim \calN\left( \mu , \frac{\gamma^2}{N} \right)} \left[ g_N(\bz)^N \cdot f_N(\bz) \right] \mcom
\end{align*}
where the final line uses the fact that $h \ll N^{-1/4}$. Why is this useful? By the choice of $g$, we in fact have $g(0) = 1$, $g^{(1)}(0) = g^{(2)}(0) = 0$, and $g^{(3)}(0) = -2 \tanh(h)^2 \left( 1 - \tanh(h)^4 \right) = O(h^2)$. Thus, since $\bz$ is typically $\lesssim \frac{1}{\sqrt{N}}$, $g_N(\bz)$ is typically $1 + O\left( \frac{h^2}{N^{3/2}} \right)$, and so $g_N(\bz)^N$ is typically $1+O\left( \frac{h^2}{\sqrt{N}} \right)$, essentially allowing us to ignore it in the integral.

More concretely, for our purposes (where we do not care about precise constants), it suffices to use the looser Cauchy--Schwarz bound, yielding
\begin{align*}
    (*) \lesssim \E \left[ g_N(\bz)^{2N} \right]^{1/2} \cdot \E \left[ f_N(\bz)^2 \right]^{1/2}
\end{align*}
To control this, we will show
\begin{enumerate}[label=(\roman*)]
    \item $ \E \left[ \left( m(\bz) - q_1 \right)^{2k} \right] \le O\left( \frac{h^{2k}}{N^{k}} \right)$ for all constant $k$, proved in \Cref{prop:magnetization-control},
    \item $ \E \left[ \left( k(\bz) - q \right)^{2k} \right] \le O\left( \frac{1}{N^{k}} \right)$ for all constant $k$, proved in \Cref{prop:overlap-control}, and
    \item $\E \left[ g_N(\bz)^{2N} \right] = O(1)$, proved in \Cref{prop:reweigh-control}.
\end{enumerate}
These essentially immediately yield \Cref{lem:mag-second-moment}.

\subsection{A series of Gaussian expectations}

\begin{proposition}[Controlling the magnetization]
    \label{prop:magnetization-control}
    For any constant $k$,
    \[ \E_{\bz \sim \calN\left( \mu , \frac{\gamma^2}{N} \right)} \left[ \left( \frac{ e^{\bz} \sinh(2h) }{ e^{\bz} \cosh(2h) + e^{-z} } - q_1 \right)^{2k} \right] \le O\left( \frac{h^2}{N} \right)^k \mper \]
\end{proposition}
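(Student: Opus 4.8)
The plan is to extract two structural facts about the magnetization function $m(z) = \frac{e^z\sinh(2h)}{e^z\cosh(2h)+e^{-z}}$ and then reduce the proposition to a pair of elementary moment estimates. First, at $z=0$ we have $m(0)=\frac{\sinh(2h)}{\cosh(2h)+1}=\tanh h$, which will serve as our reference point. Second, $m$ is $h$-Lipschitz: differentiating gives $m'(z)=\frac{2\sinh(2h)}{(e^z\cosh(2h)+e^{-z})^2}$, and the AM--GM bound $e^z\cosh(2h)+e^{-z}\ge 2\sqrt{\cosh(2h)}$ yields $0\le m'(z)\le \tfrac12\tanh(2h)\le h$ for $h\ge 0$. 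With these in hand I would write, by the triangle inequality and convexity,
\[
\E_{\bz\sim\calN(\mu,\gamma^2/N)}\bigl[(m(\bz)-q_1)^{2k}\bigr]\le 2^{2k-1}\,h^{2k}\,\E\bigl[|\bz|^{2k}\bigr] + 2^{2k-1}\,(q_1-\tanh h)^{2k},
\]
using $|m(\bz)-m(0)|\le h|\bz|$.

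For the first term I would record that $\gamma^2\le \frac{\beta^2}{1-\beta^2}=O(1)$ (from $\gamma^{-2}=\frac{1-\beta^2}{\beta^2}+\tanh^4 h$) and $|\mu|=\gamma^2\tanh^2 h\le\frac{\beta^2}{1-\beta^2}h^2=O(h^2)$; since $h\ll N^{-1/4}$ this makes $|\mu|=O(N^{-1/2})$ and the variance $\gamma^2/N=O(1/N)$, so the standard Gaussian moment bound gives $\E|\bz|^{2k}\le 2^{2k-1}\bigl(|\mu|^{2k}+(2k-1)!!(\gamma^2/N)^k\bigr)=O_k(N^{-k})$. Hence the first term is $O_k((h^2/N)^k)$. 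For the second term I would bound $|q_1-\tanh h|=O(h^3)$ as follows: from the fixed-point equation and $\tanh^2 x\le x^2$ we get $q=\E[\tanh^2(\beta\sqrt q\,\bg+h)]\le\beta^2 q+h^2$, so $q\le\frac{h^2}{1-\beta^2}=O(h^2)$; then, symmetrizing in $\bg$, $q_1=\E_{\bg}[\psi(\beta\sqrt q\,\bg)]$ with $\psi(t)=\tfrac12(\tanh(t+h)+\tanh(h-t))$ even and smooth, and Taylor expansion $\psi(t)=\tanh h+\tfrac12\tanh''(h)\,t^2+O(t^4)$ (the linear and cubic terms vanish, $\psi''(0)=\tanh''(h)$, $\tanh''''$ bounded) gives $q_1-\tanh h=\tfrac12\tanh''(h)\,\beta^2 q+O(q^2)$. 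Since $|\tanh''(h)|=2\sech^2(h)|\tanh h|\le 2h$ and $q=O(h^2)$, this is $O(h^3)$.

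Combining, $\E[(m(\bz)-q_1)^{2k}]=O_k(h^{2k}N^{-k})+O_k(h^{6k})$, and since $h<N^{-1/4}$ implies $h^{4k}\le N^{-k}$, we have $h^{6k}=h^{2k}h^{4k}\le h^{2k}N^{-k}=(h^2/N)^k$, giving the claimed bound $O((h^2/N)^k)$. The one delicate point is obtaining the cancellation $|q_1-\tanh h|=O(h^3)$ rather than the naive $O(h^2)$ from Lipschitzness: since $h$ is only assumed $\ll N^{-1/4}$ (not $\ll N^{-1/2}$), an $O(h^2)^{2k}=O(h^{4k})$ error would not be absorbed into $(h^2/N)^k$, whereas $O(h^3)^{2k}=O(h^{6k})$ is (exactly at the $\alpha>1/4$ threshold). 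This is the same phenomenon that forces the $h<N^{-(1/4+\eps)}$ hypothesis in \Cref{lem:all-field-mag-conc}, and it is the only place where the precise exponent matters; everything else is routine.
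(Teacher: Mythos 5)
Your proposal is correct and follows essentially the same strategy as the paper's proof: control the fluctuation of $m(\bz)$ via its $O(h)$-Lipschitzness in $z$ together with Gaussian moment/concentration bounds at scale $N^{-1/2}$, and handle the bias via a cubic-order Taylor estimate, with the $\alpha>1/4$ threshold absorbing the bias term exactly as you note. The only cosmetic difference is that you center at $m(0)=\tanh h$ and bound $|q_1-\tanh h|=O(h^3)$, whereas the paper centers at $\E[m(\bz)]$ and shows $|\E[m(\bz)]-q_1|=O(h/\sqrt{N})$; both hinge on the same estimates ($q=O(h^2)$, $\mu=O(h^2)$, $q_1=h+O(h^3)$) and yield the same bound.
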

\begin{proof}
    Let us write
    \begin{align*}
       &\E \left[ \left( \frac{ e^{\bz} \sinh(2h) }{ e^{\bz} \cosh(2h) + e^{-z} } - q_1 \right)^{2k} \right] \\
       &\qquad\lesssim \E \left[ \left( \frac{ e^{\bz} \sinh(2h) }{ e^{\bz} \cosh(2h) + e^{-z} } - \E \left[ \frac{ e^{\bz} \sinh(2h) }{ e^{\bz} \cosh(2h) + e^{-z} } \right] \right)^{2k} \right] + \left( \E \left[\frac{ e^{\bz} \sinh(2h) }{ e^{\bz} \cosh(2h) + e^{-z} }\right] - q_1 \right)^{2k} \\
       &\qquad= O\left( h^{2k} \right) \E \left[ \left( \frac{ e^{\bz} }{ e^{\bz} \cosh(2h) + e^{-z} } - \E \left[ \frac{ e^{\bz} }{ e^{\bz} \cosh(2h) + e^{-z} } \right] \right)^{2k} \right] + \left( \E \left[\frac{ e^{\bz} \sinh(2h) }{ e^{\bz} \cosh(2h) + e^{-z} }\right] - q_1 \right)^{2k} \mper
    \end{align*}
    The first term is easy to deal with: because $z \mapsto \frac{ e^{\bz} }{ e^{\bz} \cosh(2h) + e^{-z} }$ is an $O(1)$-Lipschitz function, Gaussian concentration of Lipschitz functions yields that the associated $2k$th moment is bounded by $O\left( \frac{\gamma^2}{N} \right)^{k}$.

    For the second term, because $q = O(h^2)$,
    \[ q_1 = \E_{\bg \sim \calN\left(0,1\right)} \left[ \tanh\left( \beta \bg \sqrt{q} + h \right) \right] = h + O(h^3). \]
    On the other hand, setting $f(z) = \frac{ e^{\bz} \sinh(2h) }{ e^{\bz} \cosh(2h) + e^{-z} }$, so
    \begin{align*}
        f(0) &= \tanh(h) = h + O(h^3)\mcom \\
        f'(0) &= O(h)\mcom \text{ and } \\
        f''(0) &= O\left(h^3\right)\mcom
    \end{align*}
    we have
    \begin{align*}
        \E \left[ f(\bz) \right] &= f(0) + f'(0) \mu + \frac{1}{2} f''(0) \left( \mu^2 + \frac{\gamma^2}{N} \right) + O\left( \frac{h}{\sqrt{N}} \right) \\
            &= h + O(h) \cdot \gamma^2 \tanh(h)^2 - O(h^3) \cdot \left( \mu^2 + \frac{\gamma^2}{N} \right) + O\left(\frac{h}{\sqrt{N}}\right) \\
            &= h + O\left(\frac{h}{\sqrt{N}}\right).
    \end{align*}
    Consequently,
    \[ \left|  \E \left[ \frac{ e^{\bz} \sinh(2h) }{ e^{\bz} \cosh(2h) + e^{-z} } \right] - q_1 \right| = O\left( \frac{h}{\sqrt{N}} \right) \mcom \]
    completing the proof.
\end{proof}

\begin{proposition}[Controlling the overlap]
    \label{prop:overlap-control}
    For any constant $k$,
    \[ \E_{ \bz \sim \calN\left( \mu , \frac{\gamma^2}{N} \right)  } \left[ \left( \frac{e^{\bz} \cosh(2h) - e^{-\bz}}{e^{\bz} \cosh(2h) + e^{-\bz}} - q \right)^{2k} \right] \le O\left(\frac{1}{N^k}\right) \mper \]
\end{proposition}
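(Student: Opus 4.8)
The plan is to follow the same two-part template used for \Cref{prop:magnetization-control}: abbreviate the integrand by $k(z) = \frac{e^z\cosh(2h) - e^{-z}}{e^z\cosh(2h) + e^{-z}}$, and split via the triangle inequality together with the elementary bound $(a+b)^{2k}\le 2^{2k-1}(a^{2k}+b^{2k})$ into
\[
    \E_{\bz}\left[(k(\bz) - q)^{2k}\right] \le 2^{2k-1}\left(\E_{\bz}\left[\left(k(\bz) - \E_{\bz} k(\bz)\right)^{2k}\right] + \left(\E_{\bz} k(\bz) - q\right)^{2k}\right),
\]
so it suffices to control the centered moment and the bias separately.

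For the centered moment, the key point is that $k$ is $1$-Lipschitz on $\R$: a direct computation gives $k'(z) = \frac{4\, e^z\cosh(2h)\, e^{-z}}{(e^z\cosh(2h)+e^{-z})^2}\le 1$ by AM--GM (equivalently, one has the algebraic identity $k(z) = \tanh\!\left(z + \tfrac12\log\cosh(2h)\right)$, whose derivative is $\sech^2\le 1$). Since $\bz\sim\calN(\mu,\gamma^2/N)$ with $\gamma^2 = \left(\tfrac{1-\beta^2}{\beta^2} + \tanh(h)^4\right)^{-1}\le \tfrac{\beta^2}{1-\beta^2} = O(1)$ for $\beta<1$, the Gaussian concentration of Lipschitz functions (a $1$-Lipschitz function of a Gaussian with variance $v$ is sub-Gaussian with proxy $v$) yields $\E_{\bz}\left[(k(\bz) - \E_{\bz}k(\bz))^{2k}\right]\le (2k-1)!!\,(\gamma^2/N)^{k} = O(N^{-k})$, exactly as in \Cref{prop:magnetization-control}.

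For the bias, writing $k(z) = \tanh(z + a_h)$ with $a_h\coloneqq\tfrac12\log\cosh(2h)\in[0,h^2]$ (using $\cosh(2h)\le e^{2h^2}$) and $\bz = \mu + \tfrac{\gamma}{\sqrt N}\bg$ with $\bg\sim\calN(0,1)$, the oddness of $\tanh$ and of $\bg$ gives $\E_{\bg}[\tanh(\tfrac{\gamma}{\sqrt N}\bg)] = 0$, so $1$-Lipschitzness of $\tanh$ gives $\left|\E_{\bz}k(\bz)\right| = \left|\E_{\bg}[\tanh(\tfrac{\gamma}{\sqrt N}\bg + \mu + a_h)]\right|\le \mu + a_h = O(h^2)$, where we also used $\mu = \gamma^2\tanh(h)^2 = O(h^2)$. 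On the other hand $q = O(h^2)$ as well: since $\tanh^2(x)\le x^2$, the defining equation gives $q = \E[\tanh^2(\beta\bg\sqrt q + h)]\le \beta^2 q + h^2$, hence $q\le \tfrac{h^2}{1-\beta^2}$. Therefore $\left|\E_{\bz}k(\bz) - q\right| = O(h^2) = O(N^{-2\alpha})$, and since $\alpha>1/4$ this is at most $O(N^{-1/2})$, so $\left(\E_{\bz}k(\bz) - q\right)^{2k} = O(N^{-4k\alpha}) = O(N^{-k})$. Combining the two bounds proves the claim.

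This statement is not hard given the tools already set up; the only place to be careful is the bias term, where one must observe that the hypothesis $h < N^{-\alpha}$ with $\alpha > 1/4$ is precisely what forces both $\E_{\bz}k(\bz)$ and the order parameter $q$ to be individually $O(h^2) = o(N^{-1/2})$, so that their difference is negligible next to the $\Theta(N^{-1/2})$ Gaussian fluctuation captured by the Lipschitz bound. In particular — unlike the corresponding step of \Cref{prop:magnetization-control} — no delicate matching of the leading $h^2$-coefficients of $\E_{\bz}k(\bz)$ and of $q$ is needed, only the crude bounds $\E_{\bz}k(\bz) = O(h^2)$ and $q = O(h^2)$.
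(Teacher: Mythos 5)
Your proof is correct, and the centered-moment part is exactly the paper's argument: rewrite $k(z) = \tanh\!\left(z + \tfrac12\log\cosh(2h)\right)$, note it is $1$-Lipschitz, and invoke Gaussian concentration of Lipschitz functions to get $\E\left[(k(\bz)-\E k(\bz))^{2k}\right] = O(N^{-k})$.

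Where you genuinely diverge is the bias term, and it is worth flagging. The paper Taylor-expands $k$ around the mean, computes $\E[k(\bz)] = \mu + h^2 + O(1/N)$, separately expands $q = \tfrac{h^2}{1-\beta^2} + O(h^4)$, and then verifies the algebraic identity $\mu + h^2 = \gamma^2\tanh(h)^2 + h^2 = \tfrac{h^2}{1-\beta^2} + O(h^4)$ so that the leading $\Theta(h^2)$ terms cancel, yielding $|\E[k(\bz)] - q| = O(1/N)$. You instead bound $|\E[k(\bz)]| \le \mu + a_h = O(h^2)$ by Lipschitzness and oddness of $\tanh$, and $q = O(h^2)$ from the recursion, so that $|\E[k(\bz)] - q| = O(h^2)$ with no cancellation; the hypothesis $h < N^{-\alpha}$, $\alpha > 1/4$ then makes $(\E k(\bz) - q)^{2k} = O(N^{-4k\alpha}) \le O(N^{-k})$, which is all that the statement requires. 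Your bound on the bias is weaker than the paper's ($O(N^{-2\alpha})$ versus $O(1/N)$, so worse when $\tfrac14 < \alpha < \tfrac12$), but as you correctly observe, the target $O(N^{-k})$ is dominated by the centered-moment fluctuation regardless, so the looser bound costs nothing here. You are also right that this is not the case for the companion Proposition on the magnetization, whose target $O(h^{2k}/N^k)$ scales with $h$ and therefore genuinely requires the cancellation of leading $\Theta(h)$ terms. In short: same skeleton, but you shortcut the bias estimate, and the shortcut is sound precisely because of the $\alpha > 1/4$ hypothesis; the paper's sharper bias bound is doing more than this proposition needs.
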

\begin{proof}
    Our strategy will essentially be identical to that of the previous proof. Again, because $z \mapsto k(z)$ is an $O(1)$-Lipschitz function, we have
    \[ \E \left[ \left( \frac{e^{\bz} \cosh(2h) - e^{-\bz}}{e^{\bz} \cosh(2h) + e^{-\bz}} - q \right)^{2k} \right] \lesssim O\left( \frac{1}{N^k} \right) + \left( \E \left[ \frac{e^{\bz} \cosh(2h) - e^{-\bz}}{e^{\bz} \cosh(2h) + e^{-\bz}} \right] - q \right)^{2k} \]
    We have $k(z) = \tanh\left( z + \frac{1}{2} \ln \cosh(2h) \right) = \tanh\left( z + h^2 + O(h^4) \right)$. As a result,
    \begin{align*}
       \E \left[ k(\bz) \right] &= \E\left[ \bz + h^2 - \frac{1}{3} \bz^3 \right] + O\left( h^4 \right) \\
        &= \mu + h^2 - \frac{1}{3} \cdot \mu^3 - \mu \cdot \frac{\gamma^2}{N} + O\left( h^4 \right) \\
        &= \mu + h^2 + O\left( \frac{1}{N} \right) \mper
    \end{align*}
    Let us also figure out the leading terms in the Taylor series of $q$. We have $q = O(h^2)$, so suppose that $q = a h^2 + O(h^4)$. Expanding out the recursion defining $q$, using the Taylor series expansion $\tanh^2(x) = x^2 + O(x^4)$, we have
    \[ q = \E_{\bg \sim \calN\left(0,1\right)}\left[ \beta\sqrt{q} \bg + h \right] = \beta^2 q + h^2 + O\left(h^4\right). \]
    It follows that $q = \frac{1}{1-\beta^2} \cdot h^2 + O(h^4)$. We also have
    \begin{align*}
        \mu + h^2 &= \tanh(h)^2 \cdot \frac{1}{ \frac{1}{\beta^2} - \left( 1 - \tanh(h)^4 \right) } + h^2 \\
            &= h^2 \left( \frac{1}{ \frac{1}{\beta^2} - 1 } + 1 \right) + O\left( \frac{1}{N} \right) \\
            &= \frac{1}{1-\beta^2} \cdot h^2 + O\left( \frac{1}{N} \right) \mper
    \end{align*}
    Therefore,
    \[ \left| \E\left[ k(\bz) \right] - q \right| = O\left( \frac{1}{N} \right), \]
    completing the proof.
\end{proof}

\begin{proposition}
    \label{prop:reweigh-control}
    It holds that $\E_{\bz \sim \calN\left( 0 , \frac{\beta^2}{N} \right)} \left[ g_N(\bz)^{2N} \right] = 1 + o_N(1)$.
\end{proposition}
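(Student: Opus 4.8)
The plan is to take logarithms of $g_N$, Taylor‑expand in $h$ and $z$, and split the Gaussian integral into a bulk region on which $g_N(z)^{2N} = 1 + o(1)$ pointwise and tail regions whose contributions are $o(1)$. Throughout I will write $s = \sinh^2 h$ and $t = \tanh^2 h$, noting $s,t = O(h^2) = O(N^{-2\alpha})$ and $s - t = \sinh^2 h \, \tanh^2 h = O(h^4)$, and I recall the standing hypothesis $\alpha > 1/4$.

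The first step is the expansion. Using $\frac{e^z}{\cosh z} = 1 + \tanh z \in (0,2)$ and $\log(1+u) = u + O(u^2)$ for bounded $|u|$, one obtains, uniformly in $z$,
\[
  \log g_N(z) = \underbrace{\Bigl(\log\cosh z - \tfrac{z^2}{2}\Bigr)}_{=:\,A(z)} \;+\; \underbrace{s\tanh z - tz + \tfrac{z^2 t^2}{2} + O(s^2)}_{=:\,E(z)},
\]
where the $O(s^2)$ term does not depend on $z$. The structural fact that drives the whole argument is $\cosh z \le e^{z^2/2}$ for every real $z$ --- compare Taylor coefficients, $(2k)! \ge 2^k k!$ --- so $A(z) \le 0$ everywhere; moreover $A(z) = -\tfrac{z^4}{12} + O(z^6)$ near $0$ and $A(z) \le |z| - \tfrac{z^2}{2} \le -\tfrac{z^2}{4}$ once $|z| \ge 4$.

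Next I would fix $R_N = \beta\sqrt{2\log N/N}$, so $\Pr_{\bz \sim \calN(0,\beta^2/N)}[|\bz| > R_N] \le 1/N$, and split $\R$ into the bulk $|z| \le R_N$ and the three tail ranges $R_N < |z| \le \delta_0$, $\delta_0 \le |z| \le 4$, $|z| > 4$ with $\delta_0$ a small constant. On the bulk, $A(z) = O(z^4)$ and $|E(z)| = O(h^2|z| + h^4 z^2 + s^2)$, so $2N\log g_N(z) = O(Nz^4 + Nh^2|z| + Nh^4 z^2 + Ns^2)$; since $|z| \le R_N = O(N^{-1/2}\sqrt{\log N})$ and $\alpha > 1/4$ each term is $o(1)$ uniformly (e.g. $Nh^2|z| = O(N^{1/2 - 2\alpha}\sqrt{\log N})$ and $Ns^2 = O(N^{1-4\alpha})$), so $g_N(z)^{2N} = 1 + o(1)$ there and the bulk contributes $(1+o(1))\Pr[|\bz|\le R_N] = 1 + o(1)$ (this also yields $\E[g_N(\bz)^{2N}] \ge 1 - o(1)$). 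On $R_N < |z| \le \delta_0$, $A(z) \le 0$ and each term of $E(z)$ is either $o(1)\cdot z^2$ (the pieces $s\tanh z - tz$ and $\tfrac{z^2t^2}{2}$, using $|z| > R_N \gg h^4$ and $s,t = o(1)$) or the $z$‑free $O(s^2)=O(N^{-4\alpha})$, so $2N\log g_N(z) - \tfrac{Nz^2}{2\beta^2} \le -\tfrac{Nz^2}{4\beta^2} + O(1)$ and this range contributes $O(\Pr[|\calN(0,2\beta^2/N)| > R_N]) = o(1)$. On $\delta_0 \le |z| \le 4$, $A(z) \le -c_0$ where $c_0 := -\sup_{\delta_0 \le |z| \le 4} A(z) > 0$ and $E(z) = O(h^2) = o(1)$, so $g_N(z)^{2N} \le e^{-Nc_0}$ and the contribution is $o(1)$. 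On $|z| > 4$, $A(z) \le -z^2/4$ and $E(z) \le \tfrac{z^2}{8} + O(1)$ (as $t = o(1)$), so the integrand is at most $e^{-Nz^2/8}$ up to the normalizing factor and contributes $O(e^{-cN})$. Adding the four pieces gives $\E[g_N(\bz)^{2N}] = 1 + o_N(1)$.

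The main obstacle is the tail‑near‑zero bookkeeping (the range $R_N < |z| \le \delta_0$): one must check that the small‑$h$ error terms in $E(z)$ --- chiefly the $O(h^4|z|)$ part of $s\tanh z - tz$ and the $O(s^2)$ remainder --- are, after multiplication by $N$, either absorbed by the Gaussian weight $\tfrac{Nz^2}{2\beta^2}$ on $\{|z| > R_N\}$ or merely $o(1)$ additively. This is exactly where $\alpha > 1/4$ enters (through $Ns^2 = o(1)$), alongside the weaker $h^4 \ll R_N$. The one genuinely slick ingredient is the inequality $\cosh z \le e^{z^2/2}$: it makes $A(z)$ nonpositive for all $z$ and hence lets the Gaussian density carry the tail estimates essentially for free.
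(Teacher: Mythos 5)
Your proof is correct. It shares with the paper the underlying observation that $\log g_N(z)$ is small of order $h^2 z^3 + z^4$ near $z=0$, but the two arguments handle the tails quite differently. The paper establishes the cancellation by Taylor-expanding the numerator $\cosh(z) + \kappa\sinh(z)$ and denominator $e^{\kappa z + z^2(1-\kappa^2)/2}$ separately and noting they agree to order $2$, then asserts a uniform bound $g_N(z) \le 1 + O(|z|^3)$ and concludes from $\E[(1+O(|\bz|^3))^{2N}] \le 1 + O(1/\sqrt{N})$; that last moment estimate is true but not proved in the paper (it requires a nontrivial binomial--moment calculation, since naively replacing $(1+u)^{2N}$ by $e^{2Nu}$ produces an infinite expectation). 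You instead take logarithms and split $\log g_N = A + E$ with $A(z) = \log\cosh z - z^2/2$, and the global inequality $\cosh z \le e^{z^2/2}$ forces $A(z) \le 0$ \emph{everywhere}. This is the structural ingredient that makes your four-region tail analysis essentially painless: outside the bulk, the nonpositive $A$ and the Gaussian weight do all the work, and the $h$-small term $E$ only needs crude control (which is where $\alpha > 1/4$ is used, via $Ns^2 = o(1)$). Your route is more explicit and closes a small gap left implicit in the paper's final line. One cosmetic imprecision: you claim the $O(s^2)$ remainder in $E(z)$ ``does not depend on $z$'', but it does (it contains terms like $-s^2\tanh z$); what is true, and what you actually use, is that this remainder is $O(s^2)$ \emph{uniformly} in $z$, so the argument is unaffected.
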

\begin{proof}
    Recall
    \[ g_N(z) = \frac{\cosh(z) + \sinh(h)^2 e^z}{ \left( 1 + \sinh(h)^2 \right) e^{\tanh(h)^2 \cdot z} \left( e^{\frac{z^2}{2} \left( 1 - \tanh(h)^4 \right) }  \right)  } \mper \]
    Let $\kappa = \tanh(h)^2 = \frac{\sinh(h)^2}{1+\sinh(h)^2}$, so we have
    \[ g_N(z) = \frac{ \cosh(z) + \kappa \sinh(z) }{ e^{\kappa z + \frac{z^2}{2} \cdot (1-\kappa^2) } } \mper \]
    First note that the terms in the Taylor expansion up to order $2$ cancel out---indeed, for small $z$, we have $\cosh(z) + \kappa \sinh(z) = 1 + \kappa z + \frac{z^2}{2} + O(z^3)$, and
    \[ e^{\kappa z + \frac{z^2}{2} \cdot (1-\kappa^2)} = 1 + \kappa z + (1-\kappa^2) \cdot \frac{z^2}{2} + \frac{1}{2} (\kappa z)^2 + O(z^3) = 1 + \kappa z + \frac{z^2}{2} + O(z^3) \mper \]
    The claim near-immediately follows. For large $z$, the $e^{(1-\kappa^2)z^2/2}$ term in the denominator dominates, so we may establish a uniform bound of $g_N(z) \le 1 + O(|z|^3)$. We may thus bound $\E \left[ g_N(\bz)^{2N} \right] \le \E \left[ \left( 1 + O(|\bz|^3) \right)^{2N} \right] \le 1 + O \left( \frac{1}{\sqrt{N}} \right)$, as desired.
\end{proof}


\section{Understanding the RGD recursion}
\label{sec:understanding-fixed-points}


In this section, we analyze some salient properties of the fixed point equations governing RGD and AMP. In particular, we show that there exists a unique nonzero fixed point to $f_{\beta,\lambda}$ (as defined in \eqref{eq:f-beta-lambda}), and furthermore, this fixed point lies in the \eqref{eq:at-line} region and is stable.

\lemfixedptstructure*

We will prove the above over the course of this section.
We start by noting the following simple fact.

\begin{fact}\label{fact:sech-tanh}
    For any $a,b > 0$,
        \begin{align*}
            2a\E[\sech^2(a\bg + b)\tanh(a \bg + b)] = -\E_{\bg \sim \calN\left( 0,1 \right)} [\sech^2(a\bg + b) \cdot \bg] > 0.
        \end{align*}
\end{fact}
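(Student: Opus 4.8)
The plan is to obtain the stated equality from Gaussian integration by parts (Stein's identity) and the strict positivity from a symmetrization argument.

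\textbf{The equality.} I would apply Stein's identity $\E_{\bg\sim\calN(0,1)}[\bg\,\phi(\bg)] = \E[\phi'(\bg)]$ to $\phi(g) = \sech^2(ag+b)$. Differentiating gives $\phi'(g) = -2a\,\sech^2(ag+b)\tanh(ag+b)$, so Stein's identity yields directly
\[
\E_{\bg\sim\calN(0,1)}\bigl[\bg\,\sech^2(a\bg+b)\bigr] = -2a\,\E\bigl[\sech^2(a\bg+b)\tanh(a\bg+b)\bigr],
\]
which, after multiplying by $-1$, is precisely the claimed identity $2a\,\E[\sech^2(a\bg+b)\tanh(a\bg+b)] = -\E[\bg\,\sech^2(a\bg+b)]$. (Integrability is immediate since $|\bg\,\sech^2(a\bg+b)| \le |\bg|$.)

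\textbf{Strict positivity.} It remains to show $-\E[\bg\,\sech^2(a\bg+b)] > 0$. Writing $\varphi$ for the standard Gaussian density, I would split the integral at $0$ and substitute $g \mapsto -g$ on the negative half-line, using the evenness of $\varphi$, to get
\[
-\E_{\bg\sim\calN(0,1)}\bigl[\bg\,\sech^2(a\bg+b)\bigr] = \int_0^\infty g\,\varphi(g)\,\Bigl(\sech^2(b-ag) - \sech^2(ag+b)\Bigr)\,\dif g.
\]
For $g>0$ and $a,b>0$ one computes $(ag+b)^2 - (b-ag)^2 = 4abg > 0$, so $|ag+b| > |b-ag|$; since $\sech^2$ is even and strictly decreasing on $[0,\infty)$, this gives $\sech^2(b-ag) > \sech^2(ag+b)$. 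Hence the integrand is strictly positive on $(0,\infty)$, and the integral is strictly positive.

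I do not expect a genuine obstacle here — the argument is routine — but the one place the hypotheses $a,b>0$ are used essentially is the monotonicity comparison in the positivity step, and the only point requiring a little care is tracking the signs in the Stein computation (the minus sign from differentiating $\sech^2$).
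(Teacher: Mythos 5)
Your proof is correct and takes essentially the same approach as the paper: Stein's lemma (Gaussian integration by parts) for the equality, and a symmetrization of the Gaussian integral combined with the evenness and monotonicity of $\sech^2$ for the strict positivity. The only cosmetic difference is that you make the symmetrization explicit by splitting the integral at $0$ and substituting $g\mapsto -g$, whereas the paper writes the same step compactly as $\E[\bg\,\sech^2(b+a\bg)] = \tfrac{1}{2}\E\bigl[\bg\bigl(\sech^2(b+a\bg)-\sech^2(b-a\bg)\bigr)\bigr]$.
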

\begin{proof}
    The first equality is just Stein's lemma.
    For the second, we use symmetry to express
    \begin{align*}
        \E_{\bg \sim \calN\left( 0,1 \right)} [\sech^2(a\bg + b) \cdot \bg] &= \frac{1}{2} \cdot \E \left[\bg \left( \sech^2\left( b + a\bg \right) - \sech^2\left( b - a\bg \right) \right)\right] < 0 \mper
    \end{align*}
    If $\bg > 0$ we have $|b+a\bg| > |b-a\bg|$, and if $\bg < 0$, we have $|b-a\bg| > |b+a\bg|$. Since $\sech^2(\cdot)$ is even and decreasing for positive arguments, the claim follows.
\end{proof}

We also need the following facts about the recursion defining the overlap constant $q$.

\begin{lemma}\label{lem:q-properties}
    Let $\beta > 0$ and $h \ge 0$. Consider the fixed point equation 
    \[
    q = q(h) = \E[\tanh^2(\beta \bg \sqrt{q} + h)]\mper
    \]
    Then the following hold.
    \begin{enumerate}[label=\normalfont{(\roman*)}]
        \item If $h > 0$, there is a unique solution $q = q_{\beta,h} > 0$. If $h = 0$ and $\beta \le 1$, $q = q_{\beta,0} = 0$ is the unique solution.
        \item If $h = 0$ and $\beta > 1$, there is additionally a unique positive solution $q_{\beta,0} = \ol{q}_{\beta}$. Moreover,
        \[
            \lim_{h \downarrow 0} q_{\beta,h} =
            \begin{cases}
                0 \mcom & \beta \le 1 \mcom \\
                \ol{q}_{\beta} \mcom & \beta > 1 \mper
            \end{cases}
        \]
        \item If $(\beta, h)$ satisfy \eqref{eq:at-line}, then $q$ is differentiable. Denoting $\bX = \beta\bg\sqrt{q} + h$, we have
        \[
        q'(h) = \frac{2\E[\tanh(\bX)\sech^2(\bX)]}{1 - \beta^2 \E[\sech^4(\bX) - 2\sech^2(\bX)\tanh^2(\bX)]} \mper
        \]
        Consequently, $q'(h)> 0$.
    \end{enumerate}
\end{lemma}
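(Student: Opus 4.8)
The plan is to treat the two parts separately, using the standard fixed-point toolbox for the function $F(q,h) = \E[\tanh^2(\beta \bg \sqrt q + h)]$. For part (i), fix $h > 0$ and consider $\phi(q) = F(q,h)$ on $[0,1]$. First I would note $\phi(0) = \tanh^2(h) > 0$ and $\phi(1) < 1$, so by continuity there is at least one fixed point in $(0,1)$. For uniqueness I would show $\phi$ is a strict contraction in the relevant sense, or more directly that $q \mapsto \phi(q) - q$ is strictly decreasing wherever $\phi(q) = q$: compute $\phi'(q) = \frac{\beta}{\sqrt q}\E[\bg \tanh(\bX)\sech^2(\bX)]$ with $\bX = \beta\bg\sqrt q + h$, then apply Stein's lemma (as in \Cref{fact:sech-tanh}) to rewrite this as $\beta^2 \E[\sech^4(\bX) - 2\sech^2(\bX)\tanh^2(\bX)]$, which is bounded by $\beta^2 \E[\sech^4(\bX)] < 1$ when $(\beta,h)$ is above the AT line — in particular it's $<1$ at any fixed point, which forces uniqueness. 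For $h = 0$ and $\beta < 1$: $q = 0$ is a solution since $\tanh(0) = 0$; and $\phi'(0^+) = \beta^2\E[\sech^4(\beta\bg\sqrt q)]\big|_{q \to 0} \to \beta^2 < 1$, so $0$ is the only fixed point and a short computation (differentiating the defining relation, or noting $\phi(q) \le \beta^2 q$ for small $q$) gives $q'(0) = 0$.

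For part (ii), I would apply the implicit function theorem to $G(q,h) \coloneqq F(q,h) - q = 0$. The partial derivative $\partial_q G = \phi'(q) - 1$; using the Stein's-lemma rewriting above, $\phi'(q) = \beta^2\E[\sech^4(\bX) - 2\sech^2(\bX)\tanh^2(\bX)]$, and the \eqref{eq:at-line} condition $\beta^2\E[\sech^4(\bX)] < 1$ ensures $\phi'(q) < 1$, hence $\partial_q G \ne 0$ and $q(h)$ is differentiable. The implicit function theorem then gives
\[
q'(h) = -\frac{\partial_h G}{\partial_q G} = \frac{\partial_h F}{1 - \phi'(q)} = \frac{2\E[\tanh(\bX)\sech^2(\bX)]}{1 - \beta^2\E[\sech^4(\bX) - 2\sech^2(\bX)\tanh^2(\bX)]},
\]
where $\partial_h F = 2\E[\tanh(\bX)\sech^2(\bX)]$ is immediate from differentiating under the expectation. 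Positivity of $q'(h)$ then follows: the denominator is positive by the AT condition (and the observation that $\E[\sech^4 - 2\sech^2\tanh^2] \le \E[\sech^4]$), and the numerator is positive because $\tanh(\bX)\sech^2(\bX)$ has positive expectation for $h > 0$ — this is exactly the content of \Cref{fact:sech-tanh}, which gives $2a\E[\sech^2(a\bg+b)\tanh(a\bg+b)] > 0$ (apply with $a = \beta\sqrt q$, $b = h$, noting $q > 0$ when $h > 0$ by part (i)).

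The main obstacle I anticipate is the justification of differentiating under the expectation sign and the continuity/smoothness needed to invoke the implicit function theorem cleanly — in particular ensuring the integrand and its $q$- and $h$-derivatives are dominated uniformly in a neighborhood, which is routine since $\tanh$, $\sech$ and all their derivatives are bounded and $\bg$ has Gaussian tails, but should be stated. A secondary subtlety is the $h \to 0^+$ behavior in part (i): one must be careful that $\phi'(q) = \frac{\beta}{\sqrt q}\E[\cdots]$ has the apparent $1/\sqrt q$ singularity cancel (it does, via the Stein rewriting, which removes the $\sqrt q$), so that the contraction estimate extends continuously down to $q = 0$ and legitimately rules out nonzero fixed points when $\beta < 1$.
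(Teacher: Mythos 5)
Your part (ii) is essentially the paper's argument: implicit differentiation of the fixed-point relation, Stein's lemma to convert $\E[\bg\tanh(\bX)\sech^2(\bX)]$ into $\beta\sqrt{q}\,\E[\sech^4(\bX)-2\sech^2(\bX)\tanh^2(\bX)]$, the \eqref{eq:at-line} condition to make the denominator strictly positive, and \Cref{fact:sech-tanh} for the numerator. That part is fine (modulo the routine domination argument you already flag).

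The gap is in your uniqueness argument for part (i). The lemma asserts uniqueness of the solution for \emph{every} $\beta>0$ and $h>0$, with no AT assumption, and this full strength is needed (it is what justifies \Cref{def:q-overlap} in the first place, including for $\beta>1$ where the recursion $f_{\beta,\lambda}$ is used). Your argument establishes $\phi'(q)=\beta^2\E[\sech^4(\bX)-2\sech^2(\bX)\tanh^2(\bX)]$ and then bounds this by $\beta^2\E[\sech^4(\bX)]<1$ ``when $(\beta,h)$ is above the AT line.'' This fails on two counts. First, no AT hypothesis is available in part (i): for $\beta\ge 1$ and small $h$, one has $\phi'(q)\approx\beta^2\ge 1$ for small $q$, so $\phi$ is not a contraction, and there is no a priori reason the slope at an arbitrary fixed point is below $1$. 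Second, the argument is circular: \eqref{eq:at-line} is phrased in terms of $q_{\beta,h}$, \emph{the} unique solution, so invoking it ``at any fixed point'' presupposes the uniqueness you are trying to prove. The known proof of uniqueness for all $\beta>0$, $h\neq 0$ is the Latala--Guerra lemma, which shows the genuinely stronger fact that $q\mapsto \phi(q)/q$ is strictly decreasing on $(0,\infty)$; this requires a different and more delicate computation than the slope bound. The paper does not reprove it but cites \cite{Gue01} and \cite[Proposition 1.3.8]{Tal10}; you should either do the same or supply the Latala--Guerra monotonicity argument. (Your existence step via the intermediate value theorem, the $h=0$, $\beta<1$ case via $\phi(q)\le\beta^2 q$, and the bound $q(h)\le h^2/(1-\beta^2)$ giving $q'(0)=0$ are all fine, and once uniqueness is known, the fixed-point slope being $<1$ there indeed follows — but as a consequence, not an ingredient.)
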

\begin{proof}
    The first part of (i) is \cite{Gue01} (also see \cite[Proposition 1.3.8]{Tal10}). Now, for $q > 0$, the function $q \mapsto \frac{\E \tanh^2(\beta\bg\sqrt{q})}{q}$ is strictly decreasing. Indeed, $\frac{\E \tanh^2(\beta\bg\sqrt{q})}{\E (\beta\bg\sqrt{q})^2}$ is strictly decreasing because the function $t \mapsto \frac{\tanh(t)}{t}$ is strictly decreasing on $(0,\infty)$. Furthermore, it is not difficult to verify that
    \[ \lim_{q \to 0} \frac{\tanh^2(\beta\bg\sqrt{q})}{q} = \beta^2 \mper \]
    The second part of (i) and the first part of (ii) follow, as does the limiting behavior when $\beta \le 1$. For the statement about the limit, suppose instead that there were some sequence $(h_i) \downarrow 0$ such that $q_{\beta,h_i} \to 0$. Let $\bX_{i} = \beta\bg\sqrt{q_{\beta,h_i}} + h_i$. Because $0 \le t^2 - \tanh^2 t \lesssim t^4$,
    \[ 0 \le \E \bX_{i}^2 - q_{\beta,h_i} \lesssim \E \bX_{i}^4 \lesssim \left(\E \bX_{i}^2\right)^2 \mper \]
    Therefore,
    \[ 1 - O\left( \E \bX_{i}^2 \right) \le \frac{q_{\beta,h_i}}{\E \bX_{i}^2} \le 1 \mcom \]
    so
    \[ \frac{q_{\beta,h_i}}{\E \bX_{i}^2} \to 1 \mper \]
    However, $\frac{q}{\E \left( \beta\bg\sqrt{q} + h \right)^2} \le \frac{1}{\beta^2} < 1$, a contradiction.

    It remains to prove (iii). Define the shorthand $\bT = \tanh(\beta \bg \sqrt{q} + h)$ and $\bS = \sech(\beta \bg \sqrt{q} + h)$. 
    We can implicitly differentiate through $h$ to obtain
    \begin{align*}
        q' &= \E[2\bT\bS^2(\beta \bg (\sqrt{q})' + 1)] \\
        &= 2\E[\bT\bS^2] + \frac{\beta q'}{\sqrt{q}} \E[\bg \bT\bS^2] \\
        &= 2\E[\bT\bS^2] + \beta^2 q' \E[\bS^4 - 2\bS^2\bT^2]\mper\tag{Stein's lemma}
    \end{align*}
    Solving for $q'$ yields 
    \[
    q' = \frac{2\E[\bT\bS^2]}{1 - \beta^2\E[\bS^4 - 2\bS^2\bT^2]} \mper
    \]
    For $(\beta,h)$ satisfying \eqref{eq:at-line}, the denominator is strictly positive (thus justifying the well-definedness of $q'$). The numerator is strictly positive by \Cref{fact:sech-tanh}. \qedhere
\end{proof}

\watmonotonicity*
\begin{proof}
    This is immediate from \Cref{lem:q-properties}(iii), and the fact that \eqref{eq:at-line} may be rewritten as $\beta^2(1-q) < 1$.
\end{proof}

We will also need the following estimates for the behavior of $q$ and $q_1$ for small $h$.
\begin{lemma}\label{lem:q1-q-bound}
    Let $h \ge 0$ and $\beta < 1$. Also let $q(h)$ be the unique solution to $q = \E[\tanh^2(\beta \bg \sqrt{q} + h)]$ and define $q_1(h) = \E[\tanh(\beta \bg \sqrt{q(h)} + h)]$. Then 
    \begin{align*}
         \frac{h^2}{1-\beta^2} - \frac{2h^4}{(1-\beta^2)^3} &\le q(h) \le \frac{h^2}{1-\beta^2} \\
         h - \frac{h^2}{3(1-\beta^2)}& \le q_1(h) \le h
    \end{align*}
\end{lemma}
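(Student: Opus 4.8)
The plan is to reduce everything to a handful of one-variable inequalities for $\tanh$, evaluated against the first four moments of a Gaussian, and to bootstrap the bounds on $q$ into the bounds on $q_1$. Throughout, write $\bg\sim\calN(0,1)$ and $\bX = \beta\bg\sqrt{q(h)} + h$, so that by definition $q = \E[\tanh^2\bX]$ and $q_1 = \E[\tanh\bX]$, while $\E\bX = h$, $\E\bX^2 = h^2+\beta^2 q$, and, since $\bX-h\sim\calN(0,\beta^2q)$, $\E\bX^4 = h^4 + 6h^2\beta^2 q + 3\beta^4 q^2$. Here $q(h)$ is well defined by \Cref{lem:q-properties}(i).

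For the bounds on $q$: the upper bound is immediate, since $\tanh^2 x\le x^2$ gives $q \le \E\bX^2 = h^2+\beta^2 q$, hence $q\le\frac{h^2}{1-\beta^2}$. For the matching lower bound I would use the pointwise inequality $\tanh^2 x \ge x^2 - \tfrac23 x^4$, valid for all $x$: for $|x|>\sqrt3$ the right side is negative, and for $|x|\le\sqrt3$ it follows by squaring the standard bound $0\le x-\tfrac13x^3\le\tanh x$ (valid since $\tanh x - x + \tfrac13x^3$ vanishes at $0$ and has derivative $x^2-\tanh^2x\ge0$). Taking expectations, $q \ge \E\bX^2 - \tfrac23\E\bX^4 = h^2 + \beta^2 q - \tfrac23\E\bX^4$; plugging the already-proven bound $q\le\frac{h^2}{1-\beta^2}$ into the expression for $\E\bX^4$ and simplifying gives $\E\bX^4 \le \frac{3h^4}{(1-\beta^2)^2}$, whence $q(1-\beta^2)\ge h^2 - \frac{2h^4}{(1-\beta^2)^2}$, i.e. $q\ge\frac{h^2}{1-\beta^2}-\frac{2h^4}{(1-\beta^2)^3}$.

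For the bounds on $q_1$: the upper bound comes from symmetrizing over $\bg\mapsto-\bg$. Writing $\bY = \beta\bg\sqrt q$, the sum formula $\tanh(h+\bY)+\tanh(h-\bY) = \frac{\sinh 2h}{\cosh(h+\bY)\cosh(h-\bY)}$ combined with $\cosh(h+\bY)\cosh(h-\bY) = \tfrac12(\cosh 2h + \cosh 2\bY)\ge\tfrac12(\cosh 2h+1) = \cosh^2 h$ gives $\tanh(h+\bY)+\tanh(h-\bY)\le 2\tanh h$, so $q_1 = \E[\tanh\bX]\le\tanh h\le h$. For the lower bound I want to show $h - q_1 = \E[\bX-\tanh\bX]\le\tfrac13\E\bX^2$, which, together with the upper bound on $q$, yields $h-q_1\le\tfrac13(h^2+\beta^2q)\le\frac{h^2}{3(1-\beta^2)}$; this reduces to the pointwise inequality $x-\tanh x\le\tfrac13 x^2$ for all $x\in\R$. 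That inequality is trivial for $x<0$; for $0\le x\le1$ it follows from $x-\tanh x = \int_0^x\tanh^2 t\,dt\le\int_0^x t^2\,dt = \tfrac13x^3\le\tfrac13x^2$; and for $x\ge1$ it follows from $\tanh x\ge\tanh 1 > \tfrac34$ (using $e^2>7$) together with $x-\tfrac34\le\tfrac13x^2$, which is just $(x-\tfrac32)^2\ge0$.

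The argument has no substantial obstacle: it is a short chain of one-variable estimates, with the single upper bound $q\le\frac{h^2}{1-\beta^2}$ reused in each of the other three parts. The only place requiring any care is extracting the sharp constant $1/3$ in $x-\tanh x\le\tfrac13 x^2$ — the cheap bound $\tanh^2 t\le t$ only gives $x-\tanh x\le\tfrac12x^2$, which would degrade the stated lower bound on $q_1$ by a factor $3/2$ — and this is exactly what the case split at $x=1$ and the identity $x^2-3x+\tfrac94=(x-\tfrac32)^2$ are for.
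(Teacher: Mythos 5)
Your proposal is correct, and all four bounds check out: the fourth-moment computation $\E\bX^4 = h^4 + 6h^2\beta^2 q + 3\beta^4 q^2 \le 3h^4/(1-\beta^2)^2$ under $q \le h^2/(1-\beta^2)$ is valid (it reduces to $2\beta^2 - \beta^4 \le 1$), the pointwise inequalities $\tanh^2 x \ge x^2 - \tfrac23 x^4$ and $x - \tanh x \le \tfrac13 x^2$ are correctly established, and the bootstrapping of the $q$ upper bound into the other three estimates is exactly right. The route differs from the paper's in two respects. First, the paper does not prove the bounds on $q$ at all: it cites them from \cite[Lemma 1.2]{DW23}, whereas you give a self-contained derivation via the Taylor-type bound $\tanh^2 x \ge x^2 - \tfrac23 x^4$ and Gaussian moments; this costs half a page but removes an external dependence. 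Second, for $q_1 \le h$ the paper simply invokes ``$\tanh x \le x$'' and takes expectations, which as a pointwise inequality is only valid for $x \ge 0$ (the conclusion is still true, e.g.\ because $h \mapsto \E[\tanh(h+\beta\sqrt q\,\bg)]$ has derivative $\E[\sech^2] \le 1$ and vanishes at $h=0$); your symmetrization identity $\tanh(h+\bY)+\tanh(h-\bY) = \sinh(2h)/(\cosh(h+\bY)\cosh(h-\bY)) \le 2\tanh h$ is a rigorous replacement and in fact yields the slightly stronger conclusion $q_1 \le \tanh h$. The $q_1$ lower bound via $x - \tanh x \le \tfrac13 x^2$ and $\E\bX^2 = h^2 + \beta^2 q \le h^2/(1-\beta^2)$ is identical to the paper's argument, with the elementary inequality actually verified rather than asserted.
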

\begin{proof}
    The inequality for $q$ is \cite[Lemma 1.2]{DW23}. For the inequality on $q_1$, we first note that $\tanh x \le x$, so with $Y = \beta \sqrt{q} \bz + h$, we have $q_1 = \E \tanh Y \le \E Y = h$. Furthermore, since $x - \tanh x \le \frac{x^2}{3}$, we obtain $\E[Y - \tanh Y] \le \frac{1}{3} \E Y^2 = \frac{1}{3}(h^2 + \beta^2 q)$. Combining with the upper bound on $q$, we obtain 
    \[
    q_1 \ge h - \frac{h^2}{3(1-\beta^2)} \mcom\]  as desired.
\end{proof}

We may now establish some nice properties of $f_{\beta}$.

\begin{lemma}\label{lem:fixed-point}
    Fix $\lambda > 1$ and $\beta > 0$. 
    Consider the function $f_{\beta}$ defined as in \eqref{eq:f-beta-lambda}:
    \[ f_{\beta}\left( h \right) = \E_{\bg \sim \calN\left( 0,1 \right)} \tanh\left( \beta \bg \sqrt{q_{\beta,\beta\lambda h}} + \beta \lambda h \right) \mper \]
    Then,
    \begin{enumerate}[label=\normalfont{(\roman*)}]
        \item If $(\beta,\beta\lambda h)$ satisfies \eqref{eq:at-line}, the function $m \mapsto \frac{f_{\beta}(m)}{m}$ is strictly decreasing at $h$.
        \item If $(\beta,\beta\lambda h)$ satisfies \eqref{eq:at-line}, $f_{\beta}$ is increasing at $h$.
    \end{enumerate}
\end{lemma}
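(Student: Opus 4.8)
The plan is to observe that $f_{\beta}$ is exactly the replica-symmetric magnetization as a function of the external field, and to read off all three claims from the derivative formula for $q$. Writing $\wt h = \beta\lambda h$ and $\bX = \beta\bg\sqrt{q} + \wt h$ with $\bg\sim\calN(0,1)$ and $q = q_{\beta,\wt h}$, we have $f_{\beta}(h) = \E[\tanh\bX] = q_1(\wt h)$, where $q_1(\wt h) := \E[\tanh(\beta\bg\sqrt{q}+\wt h)]$. I would first differentiate through $q$ by the chain rule and Stein's lemma, exactly as in the proof of \Cref{lem:q-properties}(ii), to obtain for $\wt h>0$
\[ q_1'(\wt h) \;=\; \E[\sech^2\bX] - \beta^2\,q'(\wt h)\,\E[\sech^2\bX\tanh\bX] \;=\; (1-q) - \beta^2\,q'(\wt h)\,\E[\sech^2\bX\tanh\bX], \]
using $\E[\sech^2\bX] = 1-\E[\tanh^2\bX] = 1-q$. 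Note that $q_{\beta,h}$ is even in $h$ (by $\bg\mapsto-\bg$ and uniqueness, \Cref{lem:q-properties}(i)), so $q_1$ and hence $f_{\beta}$ are odd and $0$ is a fixed point.

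For part (ii), under \eqref{eq:at-line} the denominator $D := (1-\beta^2\E[\sech^4\bX]) + 2\beta^2\E[\sech^2\bX\tanh^2\bX]$ in \Cref{lem:q-properties}(ii) is strictly positive; substituting $q'(\wt h) = 2\E[\sech^2\bX\tanh\bX]/D$ into the identity above, $q_1'(\wt h)$ becomes $D^{-1}$ times $\E[\sech^2\bX](1-\beta^2\E[\sech^4\bX]) + 2\beta^2(\E[\sech^2\bX]\E[\sech^2\bX\tanh^2\bX] - \E[\sech^2\bX\tanh\bX]^2)$, whose first term is positive by \eqref{eq:at-line} and whose second is $\ge 0$ by Cauchy--Schwarz applied to $\sech\bX$ and $\sech\bX\tanh\bX$. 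Hence $q_1'(\wt h) > 0$, i.e.\ $f_{\beta}$ is increasing at $h$. For part (iii), \Cref{lem:q1-q-bound} gives $\wt h - \tfrac{\wt h^2}{3(1-\beta^2)} \le q_1(\wt h) \le \wt h$ when $\beta<1$, so $q_1'(0)=1$ and therefore $f_{\beta}(0)=0$, $f_{\beta}'(0)=\beta\lambda$; the fixed point $0$ is then stable iff $\beta<\tfrac1\lambda$ and unstable iff $\tfrac1\lambda<\beta<1$.

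Part (i) is where the real work lies. Since $f_{\beta}(m)/m = \beta\lambda\,q_1(\wt h)/\wt h$, it is equivalent to $\wt h\,q_1'(\wt h) < q_1(\wt h)$, and using the $q_1'$ identity I would decompose
\[ q_1(\wt h) - \wt h\,q_1'(\wt h) \;=\; A(\wt h) + \wt h\,\beta^2\,q'(\wt h)\,\E[\sech^2\bX\tanh\bX], \qquad A(\wt h):=q_1(\wt h) - \wt h\,(1-q). \]
The second summand is strictly positive ($q'(\wt h)>0$ by \Cref{lem:q-properties}(ii); $\E[\sech^2\bX\tanh\bX]>0$ by \Cref{fact:sech-tanh}), so it remains to prove $A(\wt h)\ge 0$. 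Here $A(0)=0$ and $A'(\wt h) = q'(\wt h)(\wt h - \beta^2\E[\sech^2\bX\tanh\bX])$; when $\beta<1$ the factor is positive because $\E[\sech^2\bX\tanh\bX] = q_1 - \E[\tanh^3\bX] \le q_1 \le \wt h$ (using $\E[\tanh^3\bX]\ge 0$ by symmetrization and \Cref{lem:q1-q-bound}), so $\beta^2\E[\sech^2\bX\tanh\bX] \le \beta^2\wt h < \wt h$, giving $A$ increasing and $A(\wt h)>0$. I expect the main obstacle to be the case $\beta\ge 1$: there \eqref{eq:at-line} keeps $\wt h$ bounded away from $0$ but no longer forces $\beta^2 q_1 < \wt h$, so one must use \eqref{eq:at-line} quantitatively — e.g.\ rewrite $\beta^2\E[\sech^2\bX\tanh\bX] = \tfrac{\beta}{2\sqrt q}(-\E[\bg\sech^2\bX])$ via \Cref{fact:sech-tanh}, bound $-\E[\bg\sech^2\bX]$ by Cauchy--Schwarz against $\sqrt{\E[\sech^4\bX]} < \beta^{-1}$, and show the outcome still lies below $\wt h$ throughout the \eqref{eq:at-line} region (equivalently, that $A$ is nonnegative at the bottom of that region). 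A cleaner-to-state alternative would be to establish that $q_1$ is concave on $[0,\infty)$ for every $\beta$, which with $q_1(0)=0$ gives $q_1(\wt h) - \wt h q_1'(\wt h)\ge 0$ at once, but proving that concavity looks harder.
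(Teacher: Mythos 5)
Parts (ii) and (iii) of your argument are correct, and your algebra for (ii) is in fact a touch cleaner than the paper's: you keep the full $\E[\sech^2\bX]$ from the leading term rather than weakening it, and the Cauchy--Schwarz step $\E[\sech^2\bX\tanh\bX]^2 \le \E[\sech^2\bX]\,\E[\sech^2\bX\tanh^2\bX]$ is exactly right. Part (iii) matches the paper's computation of $f_\beta'(0)=\beta\lambda$ up to the route you take to get $q_1'(0)=1$.

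The gap is in part (i), and you correctly locate it: your argument only establishes $A(\wt h)\ge 0$ for $\beta<1$, whereas the lemma (and its downstream use in \Cref{prop:at-most-one-fixed-pt-in-at}(iii)) needs (i) for $\beta\ge 1$ in the \eqref{eq:at-line} region. Your $A'(\wt h) = q'(\wt h)\,\bigl(\wt h - \beta^2\E[\sech^2\bX\tanh\bX]\bigr)$ is not obviously nonnegative once $\beta\ge 1$, and the Cauchy--Schwarz patch you sketch gives only $\beta^2\E[\sech^2\bX\tanh\bX] < \tfrac{1}{2\sqrt q}$, which does not stay below $\wt h$ when $q$ is small. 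The difficulty comes from differentiating $A$ \emph{through} $q(\wt h)$, which drags in $q'$ and the problematic factor.

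The paper sidesteps this with a reparametrization that is worth internalizing. Your $A(\wt h) = q_1 - \wt h(1-q)$ is exactly $F(h) - hF'(h)$, where $F(z) := \E\tanh\!\bigl(\beta\lambda z + \beta\bg\sqrt{q}\bigr)$ and $q$ is \emph{frozen} at $q_{\beta,\beta\lambda h}$ (i.e.\ $q$ is a fixed parameter, not a function of $z$). Setting $G(z) = zF'(z) - F(z)$, one has $G(0)=0$ and $G'(z) = zF''(z)$; Stein's lemma plus \Cref{fact:sech-tanh} gives $F''(z) = -2\beta^2\lambda^2\,\E[\sech^2(\bY)\tanh(\bY)] < 0$ for $z>0$. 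Hence $G(h) < 0$, i.e.\ $A(\wt h) > 0$, for \emph{every} $\beta > 0$, with no further use of \eqref{eq:at-line}. The remaining correction term $\wt h\,\beta^2 q'\,\E[\sech^2\bX\tanh\bX]$ is positive because $q' > 0$ under \eqref{eq:at-line} and $\E[\sech^2\bX\tanh\bX] > 0$ by \Cref{fact:sech-tanh}. In short: your decomposition is the right one, but you should differentiate the frozen-$q$ auxiliary function $F$ in $z$ rather than differentiating $A$ in $\wt h$; the former cleanly isolates a concavity argument that works for all $\beta$, while the latter entangles $q'$ in a way that is hard to sign beyond $\beta < 1$.
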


\begin{proof}
    Let $\bX = \bX(h) = \beta \lambda h + \beta \bg \sqrt{q_{\beta,\beta\lambda h}}$, and introduce the function $r = r(h) = q_{\beta,\beta\lambda h}$. By Stein's lemma, we have
    \begin{align*}
        f_{\beta}'(h) &= \E_{\bg \sim \calN\left( 0,1 \right)} \qty[\sech^2 \left( \bX \right) \cdot \left( \beta  \bg \cdot \dv{}{h} \sqrt{r} + \beta \lambda \right)] \\
        &= \E \qty[\sech^2(\bX)\qty(\beta \bg \frac{r'}{2\sqrt{r}} + \beta \lambda)] \\
        &= \beta\lambda \E[\sech^2(\bX)] - \beta^2 r'\E\qty[\sech^2(\bX)\tanh(\bX)] \mper
    \end{align*}
    Here, we have implicitly used the differentiability of $\sqrt{r(h)}$, which we justify now.

    This is not an issue when $\beta \ge 1$, because \eqref{eq:at-line} implies that $h > 0$, and thus $r > 0$. If $\beta < 1$, by \Cref{lem:q-properties}(iii), $r$ is differentiable and strictly positive for all $h > 0$. Let us now compute the limit as $h \to 0^+$.
    By \Cref{lem:q1-q-bound,lem:q-properties}, we have $r'(h) = \beta^2\lambda^2 \cdot (1+o_h(1))\frac{2h}{1-\beta^2}$ and $r(h) = \beta^2\lambda^2 \cdot \frac{h^2}{1-\beta^2}(1+o_h(1))$.
    Hence,  
    \begin{align*}
        \lim_{h \to 0^+} \frac{r'(h)}{\sqrt{r(h)}} = \sqrt{\frac{2\beta^2\lambda^2}{1 - \beta^2}}\mper
    \end{align*}
    It follows by symmetry that $f'_\beta(h)$ is well defined for $\beta < 1$ and all $h \in \R$.
        
    Let us start by establishing (ii). Abbreviate $\bS = \sech(\bX)$ and $\bT = \tanh(\bX)$. We have by \Cref{lem:q-properties}(ii) and \eqref{eq:at-line} that
    \begin{align*}
        f_{\beta}'(h) &= \beta\lambda \E \bS^2 - \beta^2 \cdot \beta\lambda \cdot \frac{2 \E[\bT\bS^2] }{1 - \beta^2\E \left[ \bS^4 - 2 \bS^2 \bT^2 \right]} \cdot \E[\bT\bS^2] \\
        &\ge \beta\lambda \left( \E \bS^4 - \frac{\E[\bS^2\bT]^2}{\E[\bS^2\bT^2]} \right) \ge 0 \mcom
    \end{align*}
    where the final inequality is the Cauchy--Schwarz inequality. 
    
    Now consider (i). Set $g(h) = \frac{f_{\beta}(h)}{h}$. We have
    \begin{align}
        h^2 \cdot g'(h) &= f'_{\beta}(h) h - f_{\beta}(h) \nonumber \\
        &= \E[\beta\lambda h \bS^2] - \beta^2 h r' \E\qty[\bS^2\bT] - \E[\bT]\mper \label{eq:convexity-equation-something-idk-man}
    \end{align}
    For fixed $h$, define the function $F(z) = \E \tanh(\beta\lambda z + \beta\bg\sqrt{q_{\beta,\beta\lambda h}})$, and denote $\bY = \beta\lambda z + \beta\bg\sqrt{q_{\beta,\beta\lambda h}}$. Note that $F$ is \emph{not} the same as $f_{\beta}$, since we have fixed $q = q_{\beta,\beta\lambda h}$ (in particular, $q \ne q_{\beta,\beta\lambda z}$). We may then compute for $z > 0$ that
    \begin{align*}
        F'(z) &= \beta\lambda \E \sech^2(\bY) > 0 \\
        F''(z) &= -2 \beta^2\lambda^2 \E [ \sech^2(\bY) \tanh(\bY) ] < 0 \mcom
    \end{align*}
    where the final inequality is \Cref{fact:sech-tanh}. Plugging this back into \eqref{eq:convexity-equation-something-idk-man},
    \[ h^2 \cdot g'(h) = -\beta^2 h \cdot r' \cdot \E[\bS^2 \bT] + h F'(h) - F(h) \mper \]
    \Cref{lem:q-properties}(iii) and \Cref{fact:sech-tanh} imply that the first term is negative. For the other two terms, note that if we set $G(z) = z F'(z) - F(z)$, then because $G'(z) = F''(z) < 0$ and $G(0) = 0$, we have $G(h) < 0$, completing the proof.
\end{proof}

Let us next establish the (in)stability of the $0$ fixed point.

\begin{lemma}
    \label{lem:zero-stability-fbeta}
    If $\beta < \frac{1}{\lambda}$, then
    \[ \lim_{m \downarrow 0} \frac{f_{\beta}(m)}{m} < 1 \mper \]
    In particular, $0$ is a stable fixed point. If $\lambda \ge \beta > \frac{1}{\lambda}$, then
    \[ \lim_{m \downarrow 0} \frac{f_{\beta}(m)}{m} > 1 \mper \]
\end{lemma}
\begin{proof}
    By the fundamental theorem of calculus,
    \[ \frac{f_{\beta}(m)}{m} = \beta\lambda \int_{0}^{1} \E \left[ \sech^2 \left( \beta\bg\sqrt{q_{\beta,\beta\lambda m}} + t\beta\lambda m \right) \right] \dif t \mper \]
    When $\beta \le 1$, this essentially concludes the proof: by the dominated convergence theorem and \Cref{lem:q-properties}, the above integral converges to $\beta\lambda$, which is less than $1$ if $\beta < \frac{1}{\lambda}$, and greater than $1$ if $\beta > \frac{1}{\lambda}$. For $\beta > 1$, dominated convergence yields that the limit is equal to
    \[ \beta\lambda \E\left[ \sech^2\left( \beta\bg\sqrt{\ol{q}_{\beta}} \right) \right] \mper \]
    We must show that this is greater than $1$. Indeed, by the Gaussian \Poincare inequality, we have
    \[ \Var\left[ \tanh\left( \beta\bg\sqrt{\ol{q}_{\beta}} \right) \right] \le \beta^2 \ol{q} \cdot \E\left[ \sech^4\left( \beta\bg\sqrt{\ol{q}_{\beta}} \right) \right] \mper \]
    The left hand side is equal to $\E\left[ \tanh^2\left( \beta\bg\sqrt{\ol{q}_{\beta}} \right) \right] = \ol{q}$. As a result,
    \[ \beta\lambda\E\left[ \sech^2\left( \beta\bg\sqrt{\ol{q}_{\beta}} \right) \right] > \beta^2 \E\left[ \sech^4\left( \beta\bg\sqrt{\ol{q}_{\beta}} \right) \right] \ge 1 \mcom \]
    completing the proof.
\end{proof}

\begin{lemma}
    \label{lem:fixed-point-lies-in-wat}
    Let $h \ne 0$ and $\lambda \ge \beta$ such that $f_{\beta}(h) = h$. Then, $(\beta,\beta\lambda h)$ satisfies \eqref{eq:at-line}.
\end{lemma}
\begin{proof}
    Fix some fixed point $h_* \ne 0$ and $q_* = q_{\beta,\beta\lambda h_*}$. Define the function $g : \R \to \R$ by
    \[ g(s) = \E \left[\tanh\left( \beta \bg \sqrt{q_*} + s \right)\right] \mper \]
    In particular, by definition, $g(\beta\lambda h_*) = h_*$. A straightforward computation establishes that
    \[ g''(s) = -2 \E \left[ \tanh(\beta\bg \sqrt{q_*} + s) \cdot \sech^2\left(\beta\bg\sqrt{q_*} + s\right) \right] \mper \]
    By \Cref{fact:sech-tanh}, $g''(s) < 0$ for all $s > 0$. As a result, $g(s) > sg'(s)$ for all $s > 0$. Plugging in $s = \beta\lambda h_*$, we have
    \[ \beta\lambda \cdot g'(\beta\lambda h_*) < 1 \mcom \]
    and thus,
    \[ \beta^2 \E\left[ \sech^2\left( \beta\bg\sqrt{q_*} + \beta\lambda h_* \right) \right] \le \beta\lambda \E\left[ \sech^2\left( \beta\bg\sqrt{q_*} + \beta\lambda h_* \right) \right] = \beta\lambda g'(\beta\lambda h_*) < 1 \mcom \]
    completing the proof.
\end{proof}

To conclude this section, let us put the pieces together and prove \Cref{lem:fixed-pt-structure-unconditional}.

\begin{proof}[Proof of \Cref{lem:fixed-pt-structure-unconditional}]
    \Cref{lem:zero-stability-fbeta} coupled with \Cref{lem:fixed-point} imply that when $\beta < \frac{1}{\lambda} < 1$, $0$ is the only fixed point, and furthermore that it is stable. When $\beta > \frac{1}{\lambda}$, \Cref{lem:zero-stability-fbeta} says that $\lim_{m \to 0} \frac{f_{\beta}(m)}{m} > 1$. Since $\lim_{m \to \infty} f_{\beta}(m) \le 1$, there must exist at least one positive. \Cref{lem:fixed-point-lies-in-wat} then implies that for any such fixed point $h$, $(\beta,\beta\lambda h)$ satisfies \eqref{eq:at-line}. Finally, \Cref{cor:wat-continuous} and \Cref{lem:fixed-point}(i) imply that there cannot exist two such fixed points; indeed, $m \mapsto \frac{f_{\beta}(m)}{m}$ is strictly decreasing in the interval between them, a contradiction.
\end{proof}

\section*{Acknowledgments}

We are very grateful to Brice Huang for numerous discussions about the SK model, and to Hang Du, Brice Huang, and Mark Sellke for showing us the trick used in \Cref{lem:mag-second-moment}.
We thank Ahmed El Alaoui for helpful discussions about the SK model under weak external field. 
We benefited from countless conversations with Prasad Raghavendra and Sidhanth Mohanty, which inspired this work.
A.R. would also like to thank Sam Hopkins and Kuikui Liu for feedback on an earlier draft of this work.
We acknowledge the use of ChatGPT for literature search and helping prove several key lemmas.

\bibliographystyle{alpha}
\bibliography{main}

\end{document}